\title{The DAG Visit approach for Pebbling\\ and I/O Lower Bounds} 
\titlerunning{The DAG Visit approach for Pebbling and I/O Lower Bound} 
\author{Gianfranco Bilardi}{University of Padova, Department of Information Engineering, Italy}{bilardi\@dei.unipd.it}{}{This work was supported in part by the Italian National Center for
HPC, Big Data, and Quantum Computing; by MIUR, the Italian Ministry of
Education, University and Research, under PRIN Project n. 20174LF3T8
AHeAD (Efficient Algorithms for HArnessing Networked Data); and by the
University of Padova, under Project CPGA$^3$ (Parallel and
Hierarchical Computing: Architectures, Algorithms, and Applications).}
\author{Lorenzo De Stefani\footnote{Corresponding author}}{Department of Computer Science, Brown University, United States of America}{lorenzo\_destefani\@brown.edu}{https://orcid.org/0000-0001-9569-2086}{}
\authorrunning{G. Bilardi and L. De Stefani} 
\keywords{Pebbling, Directed Acyclic Graph, Pebbling number, I/O complexity} 
\newcommand{\BO}[1]{\mathcal{O}\left(#1\right)}
\newcommand{\BOme}[1]{\Omega\left(#1\right)}
\newcommand{\io}{I/O}
\newcommand{\enset}{Q}
\newcommand{\bs}{b}
\newcommand{\mymax}[1]{\max\{#1\}}
\newcommand{\myie}{\emph{i.e.,}}
\newcommand{\predc}[1]{\mathrm{pre}\left(#1\right)}
\newcommand{\sucx}[1]{\mathrm{suc}\left(#1\right)}
\newcommand{\anc}[1]{\mathrm{anc}\left(#1\right)}
\newcommand{\des}[1]{\mathrm{des}\left(#1\right)}
\newcommand{\dout}[1]{d_{out}\left(#1\right)}
\newcommand{\deout}{d_{out}}
\newcommand{\dein}{d_{in}}
\newcommand{\gra}{G=\left(V,E\right)}
\newcommand{\visrule}{r}
\newcommand{\ruleof}[2]{#1\left(#2\right)}
\newcommand{\sinrule}{r^{(sin)}}
\newcommand{\toporule}{r^{(top)}}
\newcommand{\visset}[1]{\mathcal{R}\left(#1\right)}
\newcommand{\visitset}[2]{\Psi_{#1}\left(#2\right)}
\newcommand{\visit}[1][]{
    \ifthenelse{\isempty{#1}}{\psi}{\psi\left[#1\right]}%
}
\newcommand{\infix}[3]{#1[#2..#3]}
\begin{document}

\maketitle

\begin{abstract}
We introduce the notion of an $\visrule{}$-visit of a Directed Acyclic 
Graph DAG
$\gra$, a sequence of the vertices of the DAG complying with a given
rule $\visrule{}$. A rule $\visrule{}$ specifies for each vertex 
$v\in V$ a family of $\visrule{}$-enabling
sets of (immediate) predecessors: before visiting $v$, at least one of
its enabling sets must have been visited.  Special cases are the
$\toporule{}$-rule (or, topological rule), for which the only enabling set
is the set of all predecessors and the $\sinrule{}$-rule (or, singleton
rule), for which the enabling sets are the singletons containing
exactly one predecessor. The $\visrule{}$-boundary complexity of a DAG 
$G$, $b_{\visrule{}}\left(G\right)$,
is the minimum integer $b$ such that there is an $\visrule{}$-visit where, at
each stage, for at most $b$ of the vertices yet to be visited an enabling
set has already been visited. By a reformulation of known results, it
is shown that the boundary complexity of a DAG $G$ is a lower bound to
the pebbling number of the reverse DAG, $G^R$. Several known pebbling
lower bounds can be cast in terms of the  $\sinrule$-boundary complexity. The main contributions of this paper are as follows:
\begin{itemize}
     \item An existentially tight $\BO{\sqrt{d_{out} n}}$ upper bound to the
$\sinrule{}$-boundary complexity of any DAG of $n$ vertices and out-degree
$d_{out}$.
\item An existentially tight $\BO{\frac{\deout{}}{\log_2 \deout{}}   
\log_2 n}$ \
upper bound
to the $\toporule{}$-boundary complexity of any DAG. (There are DAGs for
which $\toporule{}$ provides a tight pebbling lower bound, whereas
$\sinrule{}$ does not.)
\item A visit partition technique for I/O lower bounds, which
   generalizes the $S$-partition I/O technique introduced by Hong and
   Kung in their classic paper ``I/O complexity: The Red-Blue pebble
   game''.  The visit partition approach yields tight I/O bounds for
   some DAGs for which the $S$-partition technique can only yield an
   $\BOme{1}$ lower bound.
\end{itemize}
\end{abstract}

\newpage
\section{Introduction}
\label{sec:intro}
A \emph{visit} of a Directed Acyclic Graph (DAG) is a sequence of all
its vertices.  We consider different types of visits, where a type is
specified by a \emph {visit rule} $\visrule$, a prescription that a
vertex $v$ can be visited only after all the vertices in one of a
given family of \emph{enabling sets} of predecessors of $v$ have been
visited.  One example is the \emph {singleton visit rule}, $\sinrule$,
where each vertex is enabled by each singleton containing one of its
predecessors. \emph{Breadth First Search} (BFS) and \emph{Depth First
Search} (DFS) visits are special cases of $\sinrule$-visits. Another
example is the \emph {topological visit rule}, $\toporule$, where a
vertex $v$ is enabled only by the set of all its predecessors. The
$\toporule$-visits are exactly the topological orderings of the
DAG. Many other rules are possible; for example, the enabling sets of
a vertex could be those with a majority of its predecessors.

In this work, we investigate the $r$-\emph {boundary complexity} of
DAGs. The boundary complexity of $G$, $b_r(G)$, is the minimum integer
$b$ such that there exists an $r$-visit where, at each stage, for at
most $b$ of the vertices yet to be visited an enabling set has
already been visited. By a reformulation of the results of Bilardi,
Pietracaprina, and D'Alberto~\cite{bilardi2000space}, in terms of the
familiar concept of visit, we show that the boundary complexity of a
DAG $G$ is a lower bound to the \emph{pebbling number} $p(G_R)$ of its
reverse DAG, \myie{} $p(G_R) \geq b_r(G)$.  The pebbling number of
a DAG provides a measure of the space required by a computation with
data dependences described by that DAG, in the \emph {pebble game}
framework, introduced by Friedman~\cite{friedman1971algorithmic},
Paterson and Hewitt~\cite{paterson1970comparative}, Hopcroft, Paul and
Valiant~\cite{hopcroft1977time}.  While a pebbling game resembles a
visit where each vertex can be visited multiple times, the relation
between pebbling number and boundary complexity is rather subtle, as
indicated by the fact that it involves graph reversal. Several
pebbling lower bounds arguments in the literature (examples are mentioned in Section~\ref{sec:visits and space complexity}) can indeed be recast
in terms of the $r^{(sin)}$-boundary complexity, thus achieving some
unification in the derivation of these results. In this context, it is
natural to explore the potential of the visit approach to yield
significant pebbling lower bounds for arbitrary DAGs.\\

\noindent\textbf{Main contributions:}
We begin our study with the singleton rule and show that, for any DAG
$G$ with $n$ nodes and out-degree at most $\deout$, $b_{\sinrule}(G)
\leq 4\sqrt{\deout n}$. As a universal bound, this result cannot be
improved, as shown by matching existential lower bounds.  With respect
to pebbling, there are DAGs such that $b_{\sinrule}(G)
=\Omega(p(G_R))$, for which the singleton rule provides asymptotically
tight pebbling lower bounds. But there are also DAGs with very low
$\sinrule$-boundary complexity, where the reverse DAG has high
pebbling number. For example, Paul, Tarjan, and
Celoni~\cite{paul1976space} introduced a DAG, which we will
denote as $PTC$, of $n$ vertices and in-degree $\dein{} =O(1)$, and
proved that $p(PTC)=\Theta(\frac{n}{\log n})$.  This DAG can be easily
modified to yield a DAG $PTC^{+}$ with $p(PTC^+)=\Theta(\frac{n}{\log
n})$ and $b_{\sinrule}((PTC^+)_R)=2$, thus exhibiting a large gap
between boundary and pebbling complexity.

It is natural to wonder whether other visit rules can lead to better
bounds, whereas the singleton rule does not. We have then turned our
attention to the topological rule showing that for any DAG $G$ with
$n$ nodes and out-degree at most $\deout \geq 2$ the boundary $b_{\toporule}(G)
=\frac{\deout-1}{\log_2 \deout}\log_2 n$. This bound is existentially
tight. It indicates that the potential of the topological rule for
pebbling lower bounds is limited.  However, the topological technique
is not subsumed by the singleton one, as we exhibit DAGs for which
topological visits yield a tight pebbling lower bound, whereas
singleton visits yield a trivial lower bound.

For an arbitrary visit rule, $\visrule$, we show that $b_r(G) \leq
(\deout-1)\ell+1$, where $\ell$ is the length of the longest paths of
$G$. This result is also existentially tight and is consistent with
the known pebbling upper bound, $p(G_R) \leq (\deout-1)\ell+1$.  It
remains an open question whether a tight boundary-complexity lower bound to the pebbling number can always be found by tailoring the choice of $r$ to
the DAG, or there are DAGs for which the two
metrics exhibit a gap for any rule.

We also exploit visits to analyze the \emph {I/O complexity} of a DAG
$G$, $IO\left(M,G\right)$, pioneered by Hong and Kung
in~\cite{jia1981complexity}. This quantity is the minimum number of
accesses to the second level of a two-level memory, with the first level (\myie{}, the cache) of size $M$, required to compute $G$.  Such computation can be modeled
by a game with pebbles of two colors. Let $k(G,M)$ be the smallest
integer $k$ such that the vertices of $G$ can be topologically
partitioned into a sequence of $k$ subsets, each with
a \emph{dominator set} and \emph{minimum set} no larger than $M$.
($D$ is a dominator of $U$ if the vertices of $U$ can be computed from
those of $D$. The minimum set of $U$ contains those vertices of $U$
with no successor in $U$.)  Then, $IO\left(G,M\right) \geq M
(k(G,2M)-1)$~\cite{jia1981complexity}.  Dominators play a role in the
red-blue game (where pebbles are initially placed on input vertices
which, if unpebbled, cannot be replebbled), but not in standard
pebbling (where a pebble can be placed on an input vertex at any time,
hence a dominator of what is yet to be computed needs not be currently
in memory). Intuitively, each segment of a computation must read a
dominator set $D$ of the vertices being computed and at least $|D|-M$
of these reads must be to the second level of the memory. It is also
shown in~\cite{jia1981complexity} that the minimum set, say $Y$, of a
segment of the computation must be present in memory at the end of
such segment, so that at least $|Y|-M$ of its elements must have been
written to the second level of the memory.  In the visit perspective,
the minimum set emerges as the boundary of topological visits,
capturing a space requirement at various points of the computation. In
addition to providing some intuition on minimum sets, this insight
suggests a generalization of the partitioning technique to any type of
visit. In fact, the universal upper bounds on visit boundaries
mentioned above do indicate that the singleton rule has the potential
to yield better lower bounds than the topological one. Following
this insight, we have developed the visit partition technique. For
some DAGs for which $S$ partitions can only lead to a trivial,
$\Omega(1)$, lower bound, visit partitions yield a much higher and
tight lower bound.

\subparagraph*{Further related work:} 
Since the work of Hong and Kung~\cite{jia1981complexity}, \io{} complexity has attracted
considerable attention, thanks also to the increasing impact of the
memory hierarchy on the performance of all computing systems, from
general purpose processors, to accelerators such as GPUs, FPGAs, and
Tensor engines. Their \emph{$S$-partition} technique has been the
foundation to lower bounds for a number of important computational
problems, such as the Fast Fourier Transform~\cite{jia1981complexity},
the definition-based matrix
multiplication~\cite{ballard2012brief,irony2004communication,
scquizzato2013communication}, sparse matrix
multiplication~\cite{pagh2014input}, Strassen's matrix
multiplication~\cite{bilardi2017complexity} (this work also introduces
the ``\emph{G-flow}'' technique, based on the Grigoriev flow of
functions~\cite{grigor1976application}, to lower bound the size of
dominator sets), and various integer multiplication
algorithms~\cite{bds19,ldshimspaa}. Ballard et
al.~\cite{ballard2011minimizing, ballard2010communication} generalized
the results on matrix multiplication of~\cite{jia1981complexity}, by
means of the approach proposed by Irony, Toledo, and Tiskin in~\cite{irony2004communication} based
on the Loomis-Whitney geometric theorem~\cite{loomis1949}, which captures a trade-off between dominator size and minimum set size. The
same papers present tight \io{} complexity bounds for various linear
algebra algorithms for LU/Cholesky/LDLT/QR factorization and
eigenvalues and singular values computation.

After four decades from its introduction, the \emph{S-partition}
technique~\cite{jia1981complexity} is still the state of the art
for \io{} lower bounds that do hold when recomputation (the repeated
evaluation of the same DAG vertex) is
allowed. Savage~\cite{savage1995extending} has proposed the
\emph{S-span} technique, as `` a slightly weaker but simpler
version of the Hong-Kung lower bound on I/O
time''\cite{savage97models}. The \emph{S-covering}
technique~\cite{bilardi2000space}, which merges and extends aspects
from both~\cite{jia1981complexity} and~\cite{savage1995extending}, is
in principle more general than the $S$-partition technique and leads
to interesting resources-augmentation considerations; however, we are
not aware of its application to specific DAGs.

A number of \io{} lower bound techniques have been proposed and
applied to specific DAG algorithms for executions \emph{without}
recomputations.  These include the \emph{edge expansion} technique
of~\cite{ballard2012graph}, the \emph{path routing} technique
of~\cite{scott2015matrix}, and the \emph{closed dichotomy width}
technique of~\cite{bilardi1999processor}.  While the emphasis in this
paper is on models with recomputation, Section~\ref{sec:otherModels} does show how the visit
partition technique specializes when recomputation is not allowed.

Automatic techniques to derive \io{} lower bounds - with and without
recomputation - have been developed, in part with the goal of
automatic performance evaluation and code restructuring for improving
temporal locality in programs by Elango \emph{et al.}~\cite{elango2015characterizing}, Carpenter \emph{et al.}~\cite{sada16}, 
Olivry \emph{et al.}~\cite{sada2021}.

\subparagraph*{Paper organization:} The visit framework is formulated in
Section~\ref{sec:Visits}.  The relationship between boundary
complexity and pebbling number is discussed in Section~\ref{sec:visits
and space complexity}.  Section~\ref{sec:upperbounds} presents
universal upper bounds to the boundary complexity.
Section~\ref{sec:iocomp} develops the visit partition technique for
I/O lower bounds.  Conclusions are offered in
Section~\ref{sec:conclusion}.  

\section{Visits of a DAG}
\label{sec:Visits}
A \emph{Directed Acyclic Graph} (DAG) $\gra$ consists of a finite set
of \emph{vertices} $V$ and of a set of directed \emph{edges}
$E\subseteq V \times V$, which form no directed cycle. We say that
edge $\left(u,v\right)\in E$ is directed from $u$ to $v$. We let $\predc{v}=\{u~|~(u,v)\in E\}$ denote the set of predecessors of $v$ and $\sucx{v}=\{u~|~(v,e)\in E\}$ denote  the set of its successors. The maximum in-degree (resp. out-degree) of $G$ is defined as $\dein{}=\max_{v\in V} |\predc{v}|$ (resp., $\deout{}=\max_{v\in V}|\sucx{v}|$). Further, we denote as  $\des{v}$ (resp., $\anc{v}$) of $v$'s descendants (resp., ancestors), that is, the vertices that can be reached from (resp., can reach) $v$ with a directed path. 
Given $V'\subseteq V$, we say that
$G'=\left(V', E \cap\left(V'\times V'\right)\right)$ is the sub-DAG of
$G$ \emph{induced by} $V'$.

Let $\phi=\left(v_1,\dots,v_i,\ldots,v_j,\ldots, v_k\right)$ be a sequence of vertices with  $|\phi|=k$. Let $\phi[i]=v_i$, for $1\leq i\leq j\leq k$, we denote as $\phi(i..j]=\left(v_{i+1},\ldots,v_j\right)$ the infix from the $i$-th element excluded to the $j$-th included. If $j<i$, $\phi(i..j]$ is the empty sequence.  Depending on the context, we sometimes interpret a sequence as the set of items appearing in the sequence.  
  
A \emph{visit} of a DAG $\gra$ is a sequence of all its vertices,
without repetitions, complying with a \emph{visit rule}:
\begin{definition}[Visit rule]	
A \emph{visit rule} for a DAG $\gra$ is a function $\visrule:V\rightarrow2^{2^{V}}$ where $\ruleof{\visrule{}}{v}\subseteq 2^{\predc{v}}$ is a non-empty family of sets of predecessors of $v$ called \emph{enablers of $v$}. The set of visit rules of $G$ is denoted as $\visset{G}$. 
\end{definition}
Intuitively, a rule $\visrule\in \visset{G}$ permits a vertex $v$ to
be visited only after at least one of its enablers $\enset{}\in
\ruleof{\visrule}{v}$ has been entirely visited.
\begin{definition}[$\visrule$-sequence and $\visrule{}$-visit]\label{def:visit}
Given a DAG $\gra$ and a visit rule $\visrule\in\visset{G}$, a sequence $\visit{}$ of \emph{distinct} vertices is an $\visrule$\emph{-sequence} of $G$ if, for every $1\leq i\leq |\visit{}|$,
the prefix $\visit{}[1..i-1]$ includes an enabler $\enset\in
\visrule\left(\visit{}[i]\right)$. The $\visrule$-sequences with
$|\visit{}|=n$ are called $\visrule{}$-\emph{visits} and their
set is denoted as $\visitset{\visrule}{G}$.
\end{definition}

Clearly, any prefix of an $\visrule{}$-sequence is an
$\visrule{}$-sequence. Of particular interest are the ``\emph{
topological visit rule}'' defined as $\ruleof{\toporule}{v}=
\{\predc{v}\}$ and the ``\emph{singleton visit rule}'' defined as
$\ruleof{\sinrule}{v}=\{\{u\}~|~u\in \predc{v}\}$ if $|\predc{v}|>0$
and $\ruleof{\sinrule}{v}=\{\emptyset\}$ otherwise.

A vertex $v$ not contained in $\psi$, but enabled by some non-empty set $Q$ included in $\psi$, is considered to be a ``boundary'' vertex.
\begin{definition}[Boundary of an $\visrule{}$-sequence]\label{def:bundary}
Given a DAG $\gra$, $r\in \visset{G}$, and an $\visrule{}$-sequence $\visit{}$ of $G$, the $\visrule$\emph{-boundary} of $\visit{}$ is defined as the set:
\begin{equation*}
B_{\visrule}\left(\visit{}\right)=\left\{ v\in V\setminus\visit{}~|~\exists \enset{}\neq \emptyset\ s.t.\ \enset\in \ruleof{\visrule}{v}\wedge Q\subseteq \visit{}\right\} .
\end{equation*}
\end{definition}
 Input vertices are never contained in the boundary of any sequence since their only enabler is the empty set.
\begin{definition}[Boundary complexity]~\label{def:boundarycomplexity}
The $\visrule$-boundary complexity of an $\visrule$-sequence $\visit{}$ is defined as:
\begin{equation*}
	\bs{}_{\visrule}\left(\visit{}\right)=\max_{i\in\left\{ 1,\ldots,|\visit{}|\right\} }\left|B_{\visrule}\left(\visit{}[1..i]\right)\right|.
\end{equation*}
The $\visrule$\emph{-boundary complexity} of $G$ is defined as the minimum $\visrule$-boundary complexity among all $\visrule$-visits of $G$:
\begin{equation*}
	b_{\visrule}\left(G\right)=\min_{\visit{}\in\visitset{\visrule}{G}} b_{\visrule}(\visit).
\end{equation*}
\end{definition}
By definition, for any $\visrule\in \visset{G}$, any $\psi\in\visitset{\toporule}{G}$ and, for any $i=1,2,\ldots,n$ we have $B_{\toporule}\left(\visit{}[1..i]\right)\subseteq B_{\visrule}\left(\visit{}[1..i]\right)$; thus, $\bs_{\toporule}\left(\visit{}\right)\leq \bs_{\visrule}\left(\visit{}\right)$. Similarly, if $\emptyset \in \visrule(v)$ only when $v$ has no predecessors, then $\bs_{\visrule}\left(\visit{}\right)\leq \bs_{\sinrule}\left(\visit{}\right)$, for any $\psi\in\visitset{\visrule}{G}$.

\section{Boundary complexity and pebbling number}\label{sec:visits and space complexity}
In this section, we discuss an interesting relationship between the pebbling number of a DAG $\gra$ and the boundary complexity of its \emph{reverse} DAG $G_{R}=\left(V, E_{R}\right)$, where $E_{R} = \{(u,v)| (v,u) \in E \}$, which can prove useful in deriving pebbling lower bounds.
\begin{theorem}[Pebbling lower bound]\label{thm:visitlwb}
Let $G_R$ be the reverse of $\gra$. Then, for any $\visrule \in \visset{G_R}$, the pebbling number of $G$ satisfies:
\begin{equation*}\label{eq:visitlwb}
p\left(G\right)\geq b_r(G_R)=\min_{\visit{}\in\visitset{\visrule}{G_R} }b_r\left(\visit\right).\end{equation*}
\end{theorem}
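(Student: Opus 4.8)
The plan is to prove the equivalent statement $b_{\visrule}(G_R)\le p(G)$ by \emph{constructing}, from an optimal pebbling of $G$, an $\visrule$-visit of $G_R$ whose boundary complexity does not exceed $p(G)$; since $b_{\visrule}(G_R)$ is the minimum boundary complexity over all $\visrule$-visits, this is enough. I would first recall the (black) pebble game, fix a pebbling $\mathcal{C}=(C_0,\dots,C_T)$ of $G$ with $C_0=\emptyset$ and $\max_t|C_t|=p(G)$ that places a pebble on every output, and normalize it: without loss of generality every vertex of $G$ lies on a path to an output and the outputs are sinks, and by repeatedly deleting any placement of a pebble on a vertex $v$ that is never subsequently used (before its removal) to pebble a successor of $v$, together with the matching removal, I may assume every pebble placed is \emph{used}. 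Normalization does not increase $\max_t|C_t|$, and it yields the one structural fact needed: \textbf{(persistence)} if a pebble is placed on $v$ at step $s$ and the first later step that uses it to pebble some $w\in\sucx{v}$ is $s'$, then $v$ carries that pebble throughout $[s,s']$.

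The heart of the argument, which I would carry out first for the singleton rule, is the following. To each vertex $v$ attach its \emph{last-placement time} $\lambda(v)$ and let $\visit$ list $V$ in decreasing order of $\lambda$. Then $\visit$ is a valid $\sinrule$-visit of $G_R$: a non-sink $v$ is, by normalization, placed for the last time in order to pebble some $w\in\sucx{v}$ placed strictly later, so $\lambda(w)>\lambda(v)$ and $w$ (a predecessor of $v$ in $G_R$) precedes $v$ in $\visit$; sinks of $G$ are sources of $G_R$. For the boundary, fix a prefix $S=\visit{}[1..i]$ and let $\mu=\lambda(\visit{}[i])$, so $S=\{v:\lambda(v)\ge\mu\}$; I claim $B_{\sinrule}(S)\subseteq C_{\mu-1}$. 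If $v\in B_{\sinrule}(S)$ then $\lambda(v)<\mu$ and $v$ has a successor $u\in S$, i.e.\ $\lambda(u)\ge\mu$. To place $u$ at its last-placement step $\lambda(u)$ the pebbling needs $v\in C_{\lambda(u)-1}$; since $\lambda(v)$ is the \emph{last} placement of $v$ and $\lambda(v)\le\mu-1\le\lambda(u)-1$, the pebble on $v$ present at step $\lambda(u)-1$ was placed at step $\lambda(v)$ and has not been removed since, so $v$ is pebbled on all of $[\lambda(v),\lambda(u)-1]$, in particular at step $\mu-1$. Hence $\left|B_{\sinrule}\left(\visit{}[1..i]\right)\right|\le|C_{\mu-1}|\le p(G)$ for every $i$, giving $b_{\sinrule}(G_R)\le p(G)$.

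For the topological rule the same scheme works with the \emph{first}-placement time $f(v)$ in place of $\lambda(v)$: listing $V$ in decreasing order of $f$ is a reverse topological order of $G$, hence a valid $\toporule$-visit of $G_R$, and if $v\in B_{\toporule}(\visit{}[1..i])$ then \emph{all} successors of $v$ have $f$-value $\ge\mu$, so persistence applied to $v$'s first placement (which by normalization is used to pebble some such successor, at a step $\ge\mu$) again puts $v$ in $C_{\mu-1}$. For a general rule $\visrule$ one interpolates: one would like to reuse the decreasing-$\lambda$ order (which already has $\visrule$-boundary at most its $\sinrule$-boundary, hence at most $p(G)$), but that order need not be $\visrule$-valid; the fix is to postpone any vertex that the order would reach before one of its enablers in $\ruleof{\visrule}{\cdot}$ has been completed, placing it instead immediately after an enabler becomes available — a postponed vertex is not $\visrule$-enabled while it waits, hence absent from every boundary during that interval.

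The step I expect to be the real obstacle is precisely this last one: verifying that the postponement repair can be performed consistently (every vertex does eventually become enabled, because a topological order of $G_R$ is always one admissible completion) and, crucially, that shifting vertices later never pushes the boundary of any prefix above $p(G)$ — for the singleton and topological rules the bound $|B(\cdot)|\le|C_{\mu-1}|$ is pinned to a specific configuration of the pebbling, and one must show the repaired, rule-$\visrule$ order admits an analogous configuration-by-configuration accounting. Everything else (normalization, persistence, $\visrule$-validity of the two extreme orders, the two boundary computations) is routine once the set-up is in place.
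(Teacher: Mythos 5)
Your treatment of the two special rules is correct and is a genuinely different construction from the paper's: ordering $V$ by decreasing last-placement time for $\sinrule$ (resp.\ decreasing first-placement time for $\toporule$), after normalizing the pebbling so that every placement persists until it is used, does give a valid visit of $G_R$ whose every prefix boundary sits inside a single configuration $C_{\mu-1}$, hence has size at most $p(G)$. The problem is that the theorem is quantified over \emph{every} rule $\visrule\in\visset{G_R}$, and that is precisely where your argument stops. The ``postponement repair'' is only sketched, and the missing step is not routine: once vertices are postponed, the prefixes of the repaired sequence are no longer of the form $\{v:\lambda(v)\ge\mu\}$, so the inclusion $B\subseteq C_{\mu-1}$ has no candidate configuration to point to; moreover ``place a postponed vertex immediately after an enabler becomes available'' is underspecified when a single visit enables a whole cascade of waiting vertices, and during that cascade the boundary can accumulate in a way your per-configuration accounting no longer controls. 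Also note that the monotonicity $B_{\visrule}\subseteq B_{\sinrule}$ holds prefix-by-prefix only for a \emph{fixed} sequence, so it cannot be transferred from the decreasing-$\lambda$ order to the repaired one.

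The paper closes exactly this gap by choosing the visit differently: it scans the pebbling backward and visits $v$ at one of $v$'s \emph{own} pebbling steps, namely the latest step at which some enabler in $\ruleof{\visrule}{v}$ has already been entirely visited. This anchors every vertex of the visit to a pebbling time, and then the boundary accounting is uniform in $\visrule$: if $v$ lies in the boundary of the prefix ending at the step $t$ where $\psi[i]$ is pebbled, then $v$ has a last placement at some $t_1\le t$ whose value is still needed at a later step $t_2>t$ when a member of its already-visited enabler is pebbled (otherwise $v$ would have been visited in between), so $v$ carries a pebble at time $t$; all boundary vertices are therefore pebbled simultaneously and the boundary has size at most the number of pebbles used. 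Your persistence lemma is the right tool, but completing your proof would require redoing this anchoring for the repaired order, which essentially amounts to reproducing the paper's construction; as written, the case of a general rule $\visrule$ remains unproven.
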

\begin{proof}
Consider a pebbling schedule $\phi$ of $G$ which uses $s$ pebbles, and any visit rule $\visrule\in \visset{G_R}$. 
The proof proceeds by constructing an $\visrule$-visit $\psi$ of $G_R$ whose $\visrule$-boundary complexity is itself bounded from above by $s$. 
Let $T$ denote the total number of steps of the pebbling $\phi$. The construction of $\psi$ proceeds iteratively starting from the end of the pebbling schedule $\phi$ to its beginning: Let $v$ denote the vertex which is being pebbled at the $i$-th step of $\phi$ for $1\leq i\leq T$, then the same vertex $v$ is visited in $G_R$ if and only if all the vertices in at least one of the subsets in the enabling family $\ruleof{h}{v}$ have already been visited. By construction, $\psi$ is indeed a valid $\visrule$-visit of $G_R$. 

By the construction of the reverse DAG $G_R$, the set of the successors of any vertex $v\in V$ in $G$ corresponds to the set of predecessors of $v$ in $G_R$. By the rules of the pebble game, when considering a complete pebbling schedule for $G$, each vertex $v\in V$  is pebbled in for the first time before any of its successors. As each enabler in $\ruleof{\visrule{}}{v}$ is a subset of the predecessors of $v$ in $G_R$ and, hence, a subset of the successors of $v$ in $G$, each vertex will surely be visited in $\visit{}$ at the step corresponding to its first pebbling in $\phi$ unless it has already been visited. This allows us to conclude that all vertices of $G_R$ are indeed visited by $\psi$.

Let $\phi[t]$ denote the vertex being pebbled at the $t$-th step of the pebbling schedule, for $1\leq t\leq T$. In order to prove that the statement holds, it must be shown that, fixed an index $i$, $1\leq i\leq n$ with $\psi[i]=\phi[t]$,  for some $t\in \{1,2,\ldots,T\}$, the vertices in $B_{\visrule{}}\left(\visit[1..i]\right)$ must be pebbled (\myie{} held in the memory) at the end of $t$-th step of  $\phi$. 

Let $v\in B_{\visrule{}}\left(\visit[1..i]\right)$. By the construction of $\psi$ there must exist two indices $t_{1}$ and $t_{2}$, with $1\leq t_{1}\leq t< t_{2}\leq T$, such that $\phi[t_{1}]=v$, $\phi[t_{2}]\in \enset{} \in \ruleof{\visrule{}}{v}$, and $\phi[t']\neq v$ for every $t_{1}<t'< t_{2}$ (if that was not the case, then $v$ would have been visited then). As a consequence, the value of $v$ computed at step $t_{1}$ of $\phi$ is used to compute $v_{t^{2}}$ and therefore it must reside in memory at the end of step $t$ (\myie{} it has to be pebbled). Since $i$ was chosen arbitrarily, the same reasoning applies for all indices $i=1,2,\ldots, n$. We can thus conclude that the maximum number of pebbles used by $\phi$ (and thus, the memory space used by $\psi$) is no less than $\max_{1\leq i\leq n}\left|B_{\visrule{}}\left(\visit[1..i]\right)\right|=b_\visrule{}\left(\psi\right)$.

The theorem follows by minimizing over all possible $\visrule{}$-visits $\psi\in\visitset{\visrule}{G_R}$.
\end{proof}
In general, the analysis of the boundary complexity is simpler than the analysis of the pebbling number, in part because, in a visit, a vertex can occur only once, whereas, in a pebbling schedule, a vertex can occur any number of times.

The proof of the preceding theorem
is a reformulation of a result obtained by Bilardi \emph{et al.} in\cite{bilardi2000space}. They introduce the \emph{Marking Rule technique}, which is applied to DAG $G$ rather than to its reverse. The advantage of visits over markings lies in a more direct leverage of intuition, given the widespread utilization of various kinds of visits (\emph{e.g.}, breadth-first search, depth-first search, topological ordering) in the theory and applications of graphs.

We will explore the potential of the visit approach to yield
interesting lower bounds for specific DAGs in the next section.  Here,
we investigate whether Theorem~\ref{thm:visitlwb} could be strengthened
by restricting the set of $\visrule$-visits $\psi$ among which
$b_r\left(\visit\right)$ is minimized. The answer turns out to be
negative.  To clarify in what sense, we need to consider that the
proof of the theorem is based on mapping each pebbling schedule $\pi$
of $G$ to an $\visrule$-visit $f_r(\pi)$ of $G_R$, such that the boundary of each
prefix of $f_r(\pi)$ is completely covered with pebbles at some stage
of $\pi$. In terms of such mapping, we have:
\begin{lemma}[Visit from pebbling schedule]\label{lem:existence}
  For any $\visrule\in \visset{G_R}$ and any
  $\visit{}\in \visitset{\visrule}{G_R}$,  there exists a pebbling
  schedule $\pi$ of $G$ such that $f_r(\pi)=\visit{}$. 
\end{lemma}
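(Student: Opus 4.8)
The plan is to show that every $\visrule$-visit of $G_R$ arises, via the map $f_r$ defined in the proof of Theorem~\ref{thm:visitlwb}, from some pebbling schedule of $G$. Recall that $f_r$ takes a pebbling schedule $\pi$ of $G$, scans it from the last step to the first, and visits a vertex $v$ of $G_R$ exactly at the (reversed) position of its \emph{first} pebbling in $\pi$, provided that at that moment an enabler of $v$ has already been visited. So, given a target visit $\visit{}\in\visitset{\visrule}{G_R}$, the natural strategy is to reverse-engineer a pebbling schedule $\pi$ whose pattern of first-pebbling times, read backwards, reproduces $\visit{}$ exactly.

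The construction I would use is the following. Write $\visit{}=(v_n,v_{n-1},\dots,v_1)$ so that $v_i$ is the $i$-th vertex from the \emph{end}. I build $\pi$ as a concatenation of $n$ blocks $\pi = \pi_1 \pi_2 \cdots \pi_n$, where block $\pi_i$ is responsible for pebbling $v_i$ for the first time. Since $\visit{}$ is an $\visrule$-visit of $G_R$, when $v_i$ is visited some enabler $Q\in\ruleof{\visrule}{v_i}$ with $Q\subseteq\{v_n,\dots,v_{i+1}\}$ has already been visited; since the enablers of $v_i$ in $G_R$ are subsets of $\predc[G_R]{v_i}=\sucx[G]{v_i}$, the vertices $v_n,\dots,v_{i+1}$ that precede $v_i$ in $\visit{}$ include all of $\sucx[G]{v_i}$ that are ``needed'' — more simply, since $v_{i+1},\dots,v_n$ are processed in earlier blocks, all successors of $v_i$ in $G$ that appear among $\{v_{i+1},\dots,v_n\}$ have already received their first pebble before block $\pi_i$ begins. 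To guarantee that block $\pi_i$ can actually place a pebble on $v_i$ without disturbing anything, I have $\pi_i$ first (re)pebble a topological-order path of ancestors of $v_i$ in $G$ up to $v_i$ itself — this is always legal in the standard pebble game, where inputs may be (re)pebbled freely at any time and recomputation is unrestricted — then it places the pebble on $v_i$. Crucially, I arrange that $v_i$ itself is pebbled for the \emph{first time} only inside $\pi_i$: this holds because in blocks $\pi_1,\dots,\pi_{i-1}$ I only ever pebble vertices $v_1,\dots,v_{i-1}$ (in their own dedicated blocks) plus transient ancestors which I immediately clean up, so I must also ensure that the auxiliary ancestor-pebbling inside earlier blocks never touches $v_i$ — this is arranged by choosing, in block $\pi_j$ for $j<i$, an ancestor path for $v_j$ and removing every pebble on it except the one on $v_j$ before the block ends, and by noting that $v_i$ (being later in $\visit{}$, hence handled later) can simply be avoided as an intermediate vertex, or if it is unavoidably an ancestor of $v_j$ this contradicts the ordering constraint imposed by $\visit{}$ respecting the enabler structure. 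After all $n$ blocks, I append a final clean-up that pebbles any remaining output vertices of $G$ in topological order so that $\pi$ is a bona fide complete pebbling schedule of $G$.

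Finally I must verify that $f_r(\pi)=\visit{}$. By construction the first pebbling of $v_i$ occurs in block $\pi_i$, and the blocks are ordered $\pi_1,\dots,\pi_n$, so reading $\pi$ backwards yields the first-pebbling order $v_n,v_{n-1},\dots,v_1$, which is exactly $\visit{}$. It remains to check the enabling condition invoked in the definition of $f_r$: when the reverse scan reaches $v_i$, the vertices already visited are precisely $v_n,\dots,v_{i+1}$, and since $\visit{}$ is an $\visrule$-visit these contain an enabler of $v_i$; hence $f_r$ does indeed visit $v_i$ at that step rather than skipping it. Thus $f_r(\pi)=\visit{}$, proving the lemma.

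\medskip

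I expect the main obstacle to be the bookkeeping that makes ``first pebbling of $v_i$ happens in block $\pi_i$'' literally true: one has to be careful that the ancestor-repebbling carried out inside a block $\pi_j$ does not inadvertently place a first pebble on some $v_i$ with $i>j$. The clean way around this is to observe that the ordering of $\visit{}$ is consistent with reverse-topological order on $G$ in the relevant sense — if $v_i\in\anc[G]{v_j}$ then every enabler chain forces $v_i$ to be visited \emph{after} $v_j$ in $\visit{}$ only when there is a dependency in $G_R$, so the potential conflict cases can be excluded — and alternatively to simply have each block repebble its ancestor path using a brand-new traversal that is fully erased (down to inputs) at the end of the block, so that across blocks the only ``permanent'' pebbles are the ones deliberately placed on $v_1,\dots,v_i$. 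Making this argument airtight, rather than the existence of the schedule itself, is where the real work lies.
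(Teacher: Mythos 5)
Your overall strategy---processing $\visit{}$ from its last vertex to its first, dedicating one block of the schedule to each vertex and keeping pebbles on the vertices already handled---is the same as the paper's, but two steps you lean on are genuinely wrong. First, your claim that a not-yet-handled vertex which ``is unavoidably an ancestor of $v_j$'' would ``contradict the ordering constraint'' is false: an $\visrule$-visit of $G_R$ need not respect the reverse topological order of $G$ (only the topological rule forces that). For instance, let $G_R$ have edges $x\to u$, $w\to u$, $y\to w$, so that $x,y$ are inputs of $G_R$; then $\psi=(x,u,y,w)$ is a valid $\sinrule$-visit, yet in $G$ the vertex $u$ is the unique predecessor of $w$, so the very first block of your schedule (the one handling $w$) must pebble $u$ before $w$, even though $u$'s own block comes later. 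Hence your key invariant ``the first pebbling of $v_i$ occurs in block $\pi_i$'' fails, and neither patch repairs it: the vertex cannot be ``avoided'' when it is an ancestor, and the ``brand-new traversal, fully erased'' variant still places the first pebble on it in the wrong block. (Also, repebbling a \emph{path} of ancestors is not a legal sequence of moves in the black pebble game---placing a pebble requires \emph{all} predecessors to be pebbled---so each block must pebble the whole sub-DAG induced by the target and its ancestors, as the paper does.)

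Second, your verification rests on a mischaracterization of $f_r$: the map defined in the proof of Theorem~\ref{thm:visitlwb} scans the schedule backwards and visits a vertex at the \emph{first backward encounter} of one of its pebblings at which an enabler has already been visited (typically its \emph{last} pebbling), not at the reversed position of its first pebbling. Because of this, the extra pebblings your construction tolerates (re-pebbling already-handled ancestors in a fresh traversal, or the final ``clean-up'' of output vertices of $G$, which are inputs of $G_R$ and thus enabled by $\emptyset$) are met early in the backward scan and can cause $f_r(\pi)$ to visit vertices out of order, so the sentence ``the vertices already visited are precisely $v_n,\dots,v_{i+1}$'' is exactly what still needs proof. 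The paper closes this gap differently: the block for $\psi[i]$ pebbles the sub-DAG of $\psi[i]$ and its ancestors while never re-pebbling the retained vertices $\psi[i+1..n]$, so every vertex pebbled in that block lies in $\psi[1..i]$; consequently, in the backward scan the first encounter inside the block is $\psi[i]$ itself, enabled by the already-visited prefix $\psi[1..i-1]$, and every other encounter in the block is of an already-visited vertex, which gives $f_r(\pi)=\psi$ by induction on the blocks. As written, your argument has a genuine gap at precisely these two points.
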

\begin{proof}
To prove the lemma, we construct a pebbling schedule $\phi$ for $G$, which corresponds to the $\visrule$-visit $\psi$ of $G_R$: the construction proceeds iteratively starting from the end of $\psi$. For any index $i=1,2,\ldots, n$ let $\psi[i]$ denote the vertex visited at the $i$-th step of $\psi$.

Let $G_n$ denote the sub-DAG of $G$ induced by the subset of $V$ composed of $\psi[n]$ and the set of its ancestors in $G$. The schedule $\phi$ starts following the steps of any pebbling schedule of $G_n$. Once $\psi[n]$ has been pebbled, all the pebbles on vertices of $G_n$, except for $\psi[n]$, are removed. The schedule $\phi$ then proceeds according to the same procedure up to $\psi[1]$. During the $i$-th step, the pebbling schedule for $G_i$ never re-pebbles any of the vertices in $\psi[i+1..n]$. As they were previously pebbled, we can assume that they maintain a pebble (\myie{} they are kept in memory) until the end of the computation. Hence it is possible to pebble $G_i$ without re-pebbling the vertices in $\psi[i+1..n]$. By construction, $\phi$ is a complete pebbling of $G$.

Crucially, the order according to which the vertices are pebbled for the \emph{last time} in $\phi$ corresponds to the reverse order of appearance of the vertices in the given visit $\psi$. When applying the conversion between pebbling schedules and visits discussed in the proof of Theorem~\ref{thm:visitlwb}, we have that the vertices are visited in $G_R$ according to the order of their \emph{last pebbling} in $\phi$. The lemma follows.
\end{proof}
Theorem~\ref{thm:visitlwb} provides a general approach for obtaining pebbling lower bounds, which encompasses a number of arguments developed in the literature to analyze DAGs such as directed trees~\cite{paterson1970comparative}, pyramids~\cite{ranjan2012upper} and stacks of superconcentrator~\cite{hopcroft1977time}. 
A reformulation of these arguments within the visit framework can be found in~\cite{thesis}  for stacks of superconcentrators and in the following section for $q$-pyramid and $q$-tree DAGs. 

\subsection{Examples of application of the visit method for bounding the pebbling number of DAGs}\label{sec:examplespace}

In this section, we show applications of the lower bound in Theorem~\ref{thm:visitlwb}. While these results are not novel, they are meant to showcase the potential benefit of the visits as a unifying method to analyze the pebbling number of DAGs

\subsubsection{$q$-pyramid DAGs}\label{sec:rpyramid}
\begin{figure}
\centering
\begin{minipage}{.48\textwidth}
  \centering
  \input{pyramid.tex}
  \captionof{figure}{A $2$-pyramid  with $b=8$ DAG $G\left(2,8\right)$.}
  \label{fig:2pyra}
\end{minipage}%
\hfill
\begin{minipage}{.48\textwidth}
  \centering
  \input{3pyramid.tex}
  \captionof{figure}{A $3$-pyramid with  $b=6$ DAG $G\left(3,6\right)$.}
  \label{fig:3pyra}
\end{minipage}
\end{figure}
We use the lower bound technique in Theorem~\ref{thm:visitlwb} to obtain an asymptotically tight lower bound to the pebbling number of a family of $q$-pyramids DAGs defined below. While this result has already been presented in the literature~\cite{bilardi2000space,ranjan2012upper,savage97models}, we present here an alternative derivation based on the visit method outlined in Theorem~\ref{thm:visitlwb}. Furthermore, the result on $q$-pyramids yields a match for the upper bounds for general visits based on topological depth (Theorem~\ref{thm:depthbound}) and for the upper bound on the $\sinrule{}$-boundary complexity (Theorem~\ref{thm:gensin}).

\begin{definition}[$q$-pyramid DAGs]\label{def:pyramid}
An $q$-pyramid of height $b$ is a layered DAG $\gra{}$ such that:
\begin{itemize}
    \item Let 
$V_i=\{v_{i,1},v_{i,2},\ldots,v_{i,\left(r-1\right)\left(h-i\right)+1}\}$ denote the set of vertices at the $i$-th layer, for $i=1,2,\ldots,b$. $\{V_1,V_2,\ldots,V_b\}$ partition $V$.
\item For all $i=2,\ldots,b$ each vertex $v_{i,j}\in V_i$, for $j=1,2,\ldots,\left(r-1\right)\left(b-i\right)+1$, has as immediate predecessors the vertices $v_{i-1,j}, v_{i-1,j+1},\ldots,v_{i-1,j+r-1}\in V_{i-1}$.
\end{itemize}
\end{definition}

By Definition~\ref{def:pyramid}, the vertices (resp., the vertex) on the first (resp., last) layer have no direct predecessors (resp., has no successors), and are henceforth referred to as the input vertices (resp., output vertex) of the pyramid DAG. An example of a $2$-pyramid (resp., $3$-pyramid) is presented in Figure~\ref{fig:2pyra} (resp., Figure~\ref{fig:3pyra}). Further, $|V|=1+\left(b-1\right)\left(p+b/2\right)$ and $\dein=\deout=p$ if $b\geq 2$.

The following lemma captures an important structural property of $q$-pyramids:

\begin{lemma}\label{lem:piramid}
\sloppy Given an $q$-pyramid DAG with $b$ levels $\gra{}$. Let $\pi = \{v_2,v_3,\ldots,v_b\}$ denote a set of vertices which compose a path directed from one of the input vertices of $G$ (excluded) to the output vertex $v_b$ of $G$. There exist $\left(b-1\right)\left(r-1\right)+1$ vertex disjoint paths from input vertices of $G$ to vertices in $\pi$ which share vertices only in $\pi$ such that for each vertex $v\in\pi$, at least $\left(r-1\right)$ such paths include $v$.
\end{lemma}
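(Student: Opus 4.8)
The plan is to produce the entire family of paths from a single ``routing map'' defined on the vertices off $\pi$. Write $v_i = v_{i,c_i}$ for the $i$-th vertex of $\pi$ (so $v_b = v_{b,1}$), and let $v_1 = v_{1,c_1}$ denote the excluded input from which $\pi$ emanates; since $v_{i-1}$ is a predecessor of $v_i$ for every $i\in\{2,\dots,b\}$, the columns satisfy $c_i \le c_{i-1} \le c_i + r - 1$. Call a vertex $v_{i,j}\notin\pi$ \emph{left} of $\pi$ if $j<c_i$ and \emph{right} of $\pi$ if $j>c_i$; within each layer these two classes partition $V\setminus\pi$. Define the routing map $\rho$ on every off-$\pi$ vertex in layers $2,\dots,b$ by sending a left vertex $v_{i,j}$ to its leftmost predecessor $v_{i-1,j}$ and a right vertex $v_{i,j}$ to its rightmost predecessor $v_{i-1,j+r-1}$.

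First I would check that $\rho$ keeps a vertex on its own side of $\pi$: the inequality $c_i\le c_{i-1}$ forces the image of a left vertex to have column $<c_{i-1}$, and $c_{i-1}\le c_i+r-1$ forces the image of a right vertex to have column $>c_{i-1}$. In particular $\rho$ never lands on $v_{i-1}$ (which for $i=2$ is the excluded input $v_1$), so repeatedly applying $\rho$ from any off-$\pi$ vertex strictly lowers the layer, stays within $V\setminus\pi$, and terminates at an input. The key point is that $\rho$ is injective: the defining rule is injective in $j$ on each side, and left images fall strictly left of column $c_{i-1}$ while right images fall strictly right of it, so no collision between the sides is possible. Injectivity makes the functional graph of $\rho$ on $V\setminus\pi$ a disjoint union of downward paths, each ending at an input vertex; since all $(r-1)(b-1)+1$ inputs lie in $V\setminus\pi$, there are exactly $(r-1)(b-1)+1$ such paths, and one of them is the singleton $\{v_1\}$ (a short check using $c_2\le c_1\le c_2+r-1$ shows $v_1$ has no $\rho$-preimage).

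Next I would pin down the top (the $\rho$-preimage-free end) of each path. A short case analysis — distinguishing whether a hypothetical preimage of $u=v_{i-1,k}$ would fall off the left end of layer $i$, off its right end, or sit inside it — shows that $u$ is $\rho$-preimage-free exactly when $c_i\le k\le c_i+r-1$, that is, exactly when $u$ is a predecessor of $v_i$. Since the successors of $u$ all lie in layer $i$ and $v_i$ is the only vertex of $\pi$ there, such a $u$ is a predecessor of a unique vertex of $\pi$. Hence the tops of the $(r-1)(b-1)+1$ paths are precisely the off-$\pi$ predecessors of the vertices of $\pi$: one of them is $v_1$, and for each $i$ the remaining $r-1$ predecessors of $v_i$ (those other than $v_{i-1}$) occur as tops of $r-1$ distinct paths.

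Finally I would assemble the family. The path topped by $v_1$ is $\{v_1\}$; turn it into the spine $(v_1,v_2,\dots,v_b)$ by following $\pi$ all the way up. For each of the other $(r-1)(b-1)$ paths, say topped by $u$, a predecessor of $v_i$ other than $v_{i-1}$, take that path read from its input endpoint up to $u$ and append the edge $u\to v_i$. This yields $1+(r-1)(b-1)$ paths, each running from an input to a vertex of $\pi$. The spine contains all of $\pi$ but its only off-$\pi$ vertex is $v_1$; each of the other paths contains exactly one vertex of $\pi$ (its endpoint $v_i$), and their off-$\pi$ parts are the pairwise-disjoint $\rho$-paths partitioning $V\setminus(\pi\cup\{v_1\})$ — so any two of the constructed paths meet only inside $\pi$. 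Moreover each $v_i$ lies on the spine and on its $r-1$ feeders, hence on $r\ge r-1$ of the paths, which is exactly what the lemma requires. I expect the only delicate points to be the injectivity of $\rho$ and the characterization of its preimage-free vertices, both of which hinge on using the monotonicity bounds $c_i\le c_{i-1}\le c_i+r-1$ to control the behaviour at the two boundary columns of each layer.
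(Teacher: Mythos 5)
Your proof is correct, but it takes a genuinely different route from the paper's. The paper proceeds by induction on the number of levels: it applies the inductive hypothesis to the sub-pyramid induced by $v_b$ and its ancestors, builds the $r-1$ extra feeders for the new top vertex by leftmost/rightmost descent from its other predecessors, and then argues separately that these new paths avoid both one another and the ancestor sub-pyramid. You instead give a single global construction: the routing map $\rho$ (leftmost predecessor on the left of $\pi$, rightmost on the right), which you show is injective and never lands on $\pi$, so its functional graph partitions $V\setminus\pi$ into downward chains in bijection with the inputs; identifying the preimage-free vertices with the off-$\pi$ predecessors of the path vertices then lets you cap each chain with one edge into $\pi$ and add the spine $(v_1,v_2,\dots,v_b)$. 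What this buys you is that disjointness, which in the paper requires the ad hoc claim that the new feeders avoid the ancestor set $V'$, falls out of injectivity, and the count $(r-1)(b-1)+1$ falls out of the bijection with inputs; you even get slightly more than the lemma asks (the paths cover every vertex of the pyramid, and each $v_i$ lies on $r$ of them, counting the spine). Both arguments ultimately rest on the same monotonicity $c_i\le c_{i-1}\le c_i+r-1$ of the path's columns, and the paper's inductive formulation is the one it later reuses for the diamond DAGs, whereas yours is self-contained. One cosmetic point: your left/right classification leaves $v_1$ unclassified in layer $1$, which is harmless since $\rho$ is defined only on layers $2,\dots,b$ and you verify its images avoid $\pi\cup\{v_1\}$; also the "falls off the end of the layer" cases in your preimage analysis never actually occur, since the existence bounds follow from the existence of $v_i$ and of $u$ itself.
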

\begin{proof}
The proof is by induction on the number of levels of $G$. In the base case $b=1$. In this case, $\pi=\emptyset$ and the statement trivially holds.

We assume inductively that the statement holds for $b\geq 1$, and we proceed to show that it holds for a $q$-pyramid with $b+1$ levels $G$. Let $\pi=\{v_2,v_2,\ldots, v_{b},v_{b+1}\}$. We define as $G'=\left(V',E'\right)$ the sub-DAG of $G$ which corresponds to the $q$-pyramid with $b$ levels whose set of vertices corresponds to $v_b$ and all its ancestors (\myie{} $V'=\{v_b\}\cup \anc{v_b}$), and whose set of edges includes all edges of $G'=\left(V',E'\right)$ connecting vertices in $V'$. By construction $G'$ is an $q$-pyramid with $b$ levels. By inductive hypothesis the statement of the lemma therefore holds for $\{v_2,v_3,\ldots, v_{b-1},v_b\}\subseteq \Pi$.

To complete the proof, we shall now show that there exist $q-1$ paths connecting input vertices of $G$ to $v_{b+1}$ which only share $v_{b+1}$ and do not share any vertex with the paths obtained using the inductive hypothesis.
By definition of $q$-pyramid $v_{b+1}$ has $r$ predecessors $\{u_1,u_2,\ldots,u_r\}$ among whom $v_b$. Without loss of generality, let us assume $v_b=u_i$ for $i\in\{1,2,\ldots,q\}$. For each $u_j$ with $i\neq j$ we construct a path $\pi_j$ from an input vertex of $G$ to $u_j$ (and, hence, $v_{b+1}$) as follows: if $j<i$ (resp., $j>1$) start by adding $v_{b+1}$, $u_j$ and the leftmost (resp., rightmost) predecessor of $u_j$ to $\pi_j$, we then proceed ``\emph{descending}'' the pyramid by adding to $\pi_j$ the leftmost (resp., rightmost) predecessor of the last vertex added to $\pi_j$ until we reach an input vertex of $G$. By Definition~\ref{def:pyramid}, any pair of vertices on the same level have different leftmost and rightmost predecessors. Hence all the $q-1$ paths previously described do not share any vertex but $v_{b+1}$. Further, by construction, none of these paths include vertices in $V'$. Hence, they are vertex disjoint with respect to the paths obtained using the inductive hypothesis on the sub-DAG $G'$.
\end{proof}

Lemma~\ref{lem:piramid} is a modified version of results previously presented in the literature~\cite{cook1974storage,ranjan2012upper}. The property synthesized in it allows to obtain the following result:

\begin{theorem}\label{thm:spacepyra}
Let $\gra{}$ be a $q$-pyramid with $b$ levels:
\begin{equation*}
    p\left(G\right) \geq \left(q-1\right)\left(b-1\right)
\end{equation*}
\end{theorem}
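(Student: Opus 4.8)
The plan is to deduce the bound from Theorem~\ref{thm:visitlwb}, instantiated with the singleton rule, together with the path structure provided by Lemma~\ref{lem:piramid}. First I would apply Theorem~\ref{thm:visitlwb} with the singleton rule $\sinrule$ on $G_R$, which reduces the claim to $b_{\sinrule}(G_R)\geq (q-1)(b-1)$; equivalently, I must exhibit, for every $\sinrule$-visit $\psi$ of $G_R$, a prefix $\psi[1..t]$ with $\left|B_{\sinrule}(\psi[1..t])\right|\geq (q-1)(b-1)$. Two remarks about $G_R$ will drive the argument. Since $G_R$ is the $q$-pyramid $G$ with every edge reversed, its layering is that of $G$ turned upside down; in particular the apex $v_b$ of $G$ is the unique source of $G_R$, so necessarily $\psi[1]=v_b$, and for $i<b$ a layer-$i$ vertex of $G$ can occur in an $\sinrule$-sequence of $G_R$ only after some layer-$(i+1)$ vertex has occurred. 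Moreover, $u\in B_{\sinrule}(\psi[1..t])$ exactly when $u\notin\psi[1..t]$ and some $G$-successor of $u$ already occurs in $\psi[1..t]$. (The cases $b\leq 1$, where the bound is $0$, are trivial, so assume $b\geq 2$.)

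The decisive choice is where to stop. Let $T$ be the first step at which $\psi$ visits an input vertex of $G$ (i.e., a sink of $G_R$), and put $S=\psi[1..T-1]$. By the remarks, $S$ contains $v_b$, contains no layer-1 vertex of $G$, and must contain at least one layer-2 vertex of $G$, namely a $G_R$-predecessor of $\psi[T]$ (such a predecessor must already be in $S$ for $\psi[T]$ to be enabled). Also, every vertex of $S$ other than $v_b$ has a $G_R$-predecessor in $S$, so tracing predecessors backwards from any fixed layer-2 vertex $v^{\ast}\in S$ yields a directed $G_R$-path from $v_b$ to $v^{\ast}$ lying inside $S$; reversing it gives a directed path $\pi$ of $G$ from $v^{\ast}$ up to $v_b$ hitting exactly one vertex of each layer $2,\dots,b$. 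This $\pi$ is of the form required by Lemma~\ref{lem:piramid}, and by construction $\pi\subseteq S$.

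Next I would invoke Lemma~\ref{lem:piramid} for this $\pi$, obtaining $(q-1)(b-1)+1$ directed paths of $G$, each from an input of $G$ to a vertex of $\pi$, pairwise sharing vertices only on $\pi$. From each such path $z_1\to z_2\to\cdots\to z_\ell$ (with $z_1$ an input and $z_\ell\in\pi$) I would extract a boundary vertex by letting $z_p$ be the last vertex of the path that is not in $S$: it exists because $z_1$ is a layer-1 vertex, hence $z_1\notin S$, and $p\leq\ell-1$ because $z_\ell\in\pi\subseteq S$. Then $z_p\notin S$ while its $G$-successor $z_{p+1}$ lies in $S$, so $z_p\in B_{\sinrule}(S)$; and since $z_p\notin S\supseteq\pi$, the vertex $z_p$ is off $\pi$, so the extracted vertices are pairwise distinct (any vertex common to two of these paths lies on $\pi$ and hence is not among the extracted ones). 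Hence $\left|B_{\sinrule}(S)\right|\geq (q-1)(b-1)+1$, and minimizing over $\psi$ and applying Theorem~\ref{thm:visitlwb} yields $p(G)\geq b_{\sinrule}(G_R)\geq (q-1)(b-1)$.

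The step I expect to be the genuine obstacle is locating the right stopping point $S$. The tempting choice — the prefix just before the last vertex of $\pi$ is visited — fails, since a visit can consume the side-branches of $\pi$ and their input leaves early and thereby leave only a small boundary by the time $\pi$ has been completed. Stopping instead at the first moment the visit descends all the way down to layer $1$ is precisely what keeps $\pi$ entirely inside $S$ while keeping every layer-1 vertex (hence every input endpoint of the Lemma's paths) outside $S$; once this is arranged, the disjointness clause of Lemma~\ref{lem:piramid} makes the counting routine.
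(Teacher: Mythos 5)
Your proposal is correct and follows essentially the same route as the paper: stop the visit just before the first output vertex of $G_R$ (input of $G$) is visited, trace the singleton-rule enabling chain back to the unique source of $G_R$ to obtain the path $\pi$ inside the prefix, and then use Lemma~\ref{lem:piramid} to extract one distinct boundary vertex per vertex-disjoint path before invoking Theorem~\ref{thm:visitlwb}. Your write-up merely makes explicit some details the paper leaves terse (choosing the last vertex of each path outside the prefix, and why these are pairwise distinct), but the decomposition and key lemma are identical.
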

\begin{proof}
\begin{figure}
    \centering
    \input{}
    \caption{Graphic representation for the proof of Theorem~\ref{thm:spacepyra}. $G_R$ is a reverse $2$-pyramid of height $8$. The path $\Pi$ is highlighted in red. The vertex $\psi[i^*]$ is highlighted in green. The $8$ vertex disjoint paths from vertices in $\Pi$ to the output vertices of $G_R$ whose existence is formalized by Lemma~\ref{lem:piramid} are highlighted in blue dashed rectangles.}
    \label{fig:proofpyra}
\end{figure}
For $b=1$ the entire DAG corresponds to a single vertex. By the rules of the pebble game, a single pebble is necessary and sufficient. In the following, we assume $b>1$.
Consider the reverse DAG of $G$, henceforth referred as $G_R$. $G_R$ is a \emph{reverse} $q$-pyramid of height $b$ which has the same set of vertices as $G$ and whose edges correspond to those of $G$ but with the orientation of the edges being reversed. $G_R$  input vertex (resp., output vertices) correspond to the output vertex (resp., input vertices) of $G$ (use Figure~\ref{fig:proofpyra} as a reference). In the following, we prove that 
$$
min_{\visit\in\visitset{\sinrule}{G_R}}b_{\sinrule}(\visit{})\geq \left(q-1\right)\left(b-1\right),
$$
from whence, by Theorem~\ref{thm:visitlwb}, the statement follows.

Consider the singleton visit rule for $G_R$ and let $\psi$ be a $\sinrule$-visit of $G_R$. Let $i^*$ denote the step of the visit $\psi$ during which the first output vertex of $G_R$ is visited. By definition of $\sinrule$, there exists a set $\Pi\in\infix{\psi}{1}{i^*-1}$ such that the vertices in $\Pi$ form a path directed from the input vertex of $G_R$ to $\psi[i^*]$ (\myie{} the output vertex of $G_R$ visited at the $i^*$-th step of $\psi$), with $\psi[i^*]$ excluded.  This follows from the properties of $\sinrule$: In order for vertex $\psi[i^*]$ to be visited at step $i^*$, at least one of its predecessors must have been previously visited during $\infix{\psi}{1}{i^*-1}$. In order for such a vertex to have been visited, one of its predecessors must have been visited previously. The same reasoning can be iteratively repeated until the input vertex of $G_j$, which, by the construction of $\visrule^*$, is enabled by the empty set. From this consideration, it follows that $i^*>1$.

Consider now the $\sinrule$-boundary of the visit at the $i^*-1$ step $B_{\sinrule}\left(\psi[1..i^*-1]\right)$. By Lemma~\ref{lem:piramid}, there are $\left(b-1\right)\left(r-1\right)$ vertex-disjoint paths connecting vertices of $\Pi\subseteq \infix{\psi}{1}{i^*-1}$ to the output vertices of $G_R$. This holds due to the definition of reverse DAG. By the definition of $\sinrule$, for each of these paths there must be at least one distinct vertex in $B_{\sinrule}\left(\psi[1..i^*-1]\right)$. The statement follows. 
\end{proof}

The proof technique used in Theorem~\ref{thm:spacepyra} is similar to the one presented for $2$-pyramids in~\cite{bilardi2000space}, with opportune modifications due to the differences of the technique based on visits. As shown in~\cite{ranjan2012upper}, the lower bound in Theorem~\ref{thm:spacepyra} is asymptotically tight.

Recall that a reverse $q$-pyramid with $b$ levels has $1 + (b-1)(q + b/2)$ vertices and $\deout{}=q$. By Theorem~\ref{thm:spacepyra} we have 
\begin{align*}
\min_{\psi\in\visitset{\sinrule{}}{G_R}} &\geq \left(q-1\right)\left(b-1\right)\\
&\geq \BOme{\sqrt{(q-1)n}}\\
&\geq \BOme{\sqrt{\deout{}n}}.
\end{align*}
Hence, we can conclude that the upper bound on the $b_\sinrule{}$ complexity of DAGs in Theorem~\ref{thm:gensin} is existentially tight.

\subsubsection{Complete $q$-trees}\label{sec:rtrees}
 Here we present  an  alternative  derivation of the known result on  the lower bound of the pebbling number for complete $q$-trees~\cite{paterson1970comparative}  based  on  the  visit  method  outlined  in  Theorem~\ref{thm:visitlwb} in order to both provide more intuition for the reader, and to provide further evidence of the generality of the method.

A \emph{Complete $q$-tree with $q^i$ leaves} DAG is a rooted \emph{in-tree} (or \emph{anti-arborescence}) $q$-ary tree with $q^i$ leaves for $i\geq 0$, $n=\sum_{j=0}^{i}q^j=\frac{q^{i+1}-1}{q-1}$ nodes. Thus, there are $p$ inputs (the leaves), and one single output (the root).

\begin{theorem}
\label{thm:spacetree}
Let $G$ be an $r$-tree DAG with $q^i$ leaves. We have:
\begin{equation*}
    p\left(G\right) \geq \left(q-1\right)\log_q q^i= \frac{q-1}{\log_2 q}\log_2 \frac{n(q-1)+1}{q}.
\end{equation*}
\end{theorem}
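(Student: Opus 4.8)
The plan is to replay the strategy of Theorem~\ref{thm:spacepyra}: reduce to a boundary‑complexity bound on the reverse DAG via Theorem~\ref{thm:visitlwb}, and then lower‑bound that boundary by a vertex‑disjoint paths argument that is the tree analogue of Lemma~\ref{lem:piramid}. When $i=0$ the DAG is a single vertex and the claimed inequality reads $p(G)\geq 0$, which is trivial, so assume $i\geq 1$. Let $G_R$ be the reverse of $G$: it is the complete out‑tree (arborescence) rooted at the unique input vertex of $G_R$ — which is the root/output of $G$ — whose $q^i$ leaves are the output vertices of $G_R$ (and the input leaves of $G$); view it as layered with the root at level $0$ and the leaves at level $i$. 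By Theorem~\ref{thm:visitlwb} it suffices to prove that $\min_{\psi\in\visitset{\sinrule}{G_R}} b_{\sinrule}(\psi)\geq (q-1)i$.

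First I would fix a $\sinrule$‑visit $\psi$ of $G_R$ and let $i^*$ be the step at which the first output vertex of $G_R$ (some level‑$i$ leaf) is visited. Exactly as in Theorem~\ref{thm:spacepyra}, since under $\sinrule$ a vertex with predecessors can be visited only after one of them has been, iterating this observation backward from $\psi[i^*]$ down to the input of $G_R$ produces a directed path $\Pi$ from the input of $G_R$ to $\psi[i^*]$ (exclusive), all of whose vertices lie in $\infix{\psi}{1}{i^*-1}$; in particular $i^*>1$ and $\Pi$ has exactly $i$ vertices, one per level $0,1,\ldots,i-1$. The next step is the tree version of Lemma~\ref{lem:piramid}: there are $(q-1)i$ directed paths in $G_R$, each from a vertex of $\Pi$ to a leaf of $G_R$, pairwise vertex‑disjoint outside $\Pi$. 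Indeed, each $w\in\Pi$ at level $d\leq i-1$ has $q$ children, exactly one of which continues $\Pi$ toward $\psi[i^*]$; each of the remaining $q-1$ children roots a subtree of $G_R$ that is disjoint from $\Pi$ below $w$ and from the sibling subtrees, and extending any downward path from such a child to a leaf yields $q-1$ paths per vertex of $\Pi$, hence $(q-1)i$ in all, any two of which meet only along the shared prefix of $\Pi$.

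Finally I would bound $B_{\sinrule}(\infix{\psi}{1}{i^*-1})$. All $q^i$ leaves of $G_R$ are still unvisited at step $i^*-1$ (the first one is visited at step $i^*$), while every vertex of $\Pi$ is visited; so on each of the $(q-1)i$ paths the first still‑unvisited vertex, call it $u$, has its (unique, and visited) predecessor $u'$ lying on that path, so the singleton $\{u'\}\in\sinrule(u)$ is contained in $\infix{\psi}{1}{i^*-1}$ and hence $u\in B_{\sinrule}(\infix{\psi}{1}{i^*-1})$. Since the paths are disjoint outside $\Pi$ and these vertices $u$ lie outside $\Pi$, they are pairwise distinct, giving $b_{\sinrule}(\psi)\geq |B_{\sinrule}(\infix{\psi}{1}{i^*-1})|\geq (q-1)i$; as $\psi$ was arbitrary, Theorem~\ref{thm:visitlwb} yields $p(G)\geq (q-1)i=(q-1)\log_q q^i$. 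The closed form in terms of $n$ then follows from $n=\sum_{j=0}^{i}q^j=\frac{q^{i+1}-1}{q-1}$, i.e. $q^i=\frac{n(q-1)+1}{q}$, together with $\log_q x=\frac{\log_2 x}{\log_2 q}$. I expect the only delicate part to be the path‑counting step — the level‑by‑level disjointness bookkeeping and checking that the ``first unvisited vertex'' on distinct paths gives genuinely distinct boundary vertices; beyond that, the argument is strictly simpler than the pyramid case and presents no real obstacle.
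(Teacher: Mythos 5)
Your proposal is correct and follows essentially the same route as the paper: reduce to the boundary complexity of the reverse arborescence via Theorem~\ref{thm:visitlwb} and lower-bound it with a vertex-disjoint-paths argument mirroring Lemma~\ref{lem:piramid}, exactly as the paper's sketch prescribes by analogy with Theorem~\ref{thm:spacepyra}. The only cosmetic difference is that you phrase it with $\sinrule$ while the paper invokes $\toporule$, but since every non-root vertex of $G_R$ has a single predecessor the two rules coincide on this DAG, so the arguments are identical.
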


The proof follows steps analogous steps to that of the proof of Theorem~\ref{thm:spacepyra} analyzing the $\toporule{}$-boundary complexity of the reverse DAG which is an inverse complete $q$-tree or a complete $q$-arborescence. The reasoning used in the proof is  based on an observation analogous to the one in Lemma~\ref{lem:piramid}: for $q$-trees (and, hence, $q$-arborescence) it is easy to  show that for any path $\pi$ connecting an input leave vertex (excluded) to the root output vertex, there exist $\left(q-1\right)\log_q p+1$ paths connecting the $p$ leaves to vertices in $\pi$ which do not share any vertex not in $\Pi$. By extending the argument originally presented for binary trees in~\cite{paterson1970comparative}, we can conclude that the bound in Theorem~\ref{thm:spacetree} is tight.
As $\frac{q-1}{\log_2 q}\log_2 \frac{n(q-1)+1}{q}=O(\frac{q-1}{\log q}\log_2 n)$, we have that the $\toporule{}$-boundary complexity of the for complete $q$-arborescences matches the upper bound given in Theorem~\ref{thm:upptopo}, which is, thus, existentially tight.

\section{Upper bounds on boundary complexity}\label{sec:upperbounds}
It is natural to wonder whether the pebbling lower bound of
Theorem~\ref{thm:visitlwb} is tight. As we will see in this section,
both the singleton and the topological rules, while providing tight
bounds for some DAGs, yield weak lower bounds for others.  Whether a
tight lower bound could be obtained for any DAG $G$, by tailoring the
visit rule to $G$, does remain an open question.

In particular, we will establish universal upper bounds on the
boundary complexity of any DAG, with respect to any rule, in terms of
outdegree and depth.  We will also establish (different) universal
upper bounds for both the singleton and the topological rule in terms
of outdegree and the number of vertices. Before presenting these results,
we introduce the notion of enabled reach, a particular set of vertices
associated with a vertex $v$, in the context of a partial visit that
includes $v$.  This concept will play a role in the derivation of each
of the three universal upper bounds.

\subsection{The \emph{enabled reach} of a vertex}\label{sec:enabledrea}
In the construction of a visit sequence, we will use a divide and conquer approach whereby, having constructed a prefix $\psi$ of the sequence, the next segment, $\phi$, of the sequence is obtained by visiting a suitably chosen sub-DAG, $G'=(V',E')$, according to an appropriate rule $\visrule'$. It is useful for the boundary of $G'$ to be ``self-contained'' in the sense that its visit does not generate any boundary outside $V'$. If this is the case, the boundary of $\psi\phi$ will be a subset of the boundary of $\psi$, so that the visit of $G'$ contributes to the reduction of both the set of vertices yet to be visited and the current boundary. The enabled reach, a set of vertices introduced next, induces a sub-DAG $G'$ with the desired properties.

\begin{definition}[Enabled reach]\label{defEnabledReach}
  Let $\gra$ and $\visrule \in \visset{G}$. Given an
  $\visrule{}$-sequence $\visit{}$ and a vertex $v \in \visit{}$, the
  $\visrule$-enabled reach of $v$ given $\visit{}$ is the set:
\[
\mathrm{reach}_r\left(v|\visit{}\right)=\left\{ u~|~\exists~ \visit{}'u\subseteq \mathrm{des}(v) \ s.t.\ \visit{}\visit{}'u\textrm{ is an }r\textrm{-sequence}\right\}.
\]
\end{definition}

Intuitively, we can think of the $\visrule$-enabled reach of a vertex $v$ given an $\visrule{}$-sequence $\psi$ as the set of all the descendants of $v$ which can be visited by extending $\psi$ only with descendants of $v$. As an example, for $\sinrule$, we have that the $\sinrule$-enabled reach of a vertex $v$ given a $\psi$ corresponds to the set of the descendants of $v$ not in $\psi$. 
The enabled reach exhibits the following crucial property:
\begin{lemma}
\label{lem:enabledreachproperty}
	Given $\gra$, $v\in V$ and $\visrule\in\visset{G}$, let $\visit{}$ be an $\visrule{}$-sequence including $v$.  Let $G'=\left(V',E'\right)$ be the sub-DAG induced by $V'=\mathrm{reach}_r\left(v|\visit{}\right)$. Let $\visrule'\in \visset{G'}$ be such that $\ruleof{\visrule'}{v}= \{\enset \setminus \visit{}|\enset\in \ruleof{\visrule}{v}\wedge \enset \setminus \visit{}\subseteq V'\}$, for all $v\in V'$. If $\visit{}'\in \visitset{r'}{G'}$ then (a) $\visit{}\visit{}'$ is a $\visrule{}$-sequence of $G$; (b) for any $i=1,\ldots|\psi'|$, $B_{\visrule}\left(\visit{}\visit'[1..i]\right)\subseteq B_{\visrule}\left(\visit{}\right)\cup B_{\visrule{}'}\left(\psi'[1..i]\right)$; and (c) $b_{\visrule}\left(\psi\psi'\right)\leq b_{\visrule{}}\left(\psi\right)+b_{\visrule{}'}\left(\psi'\right)$.
\end{lemma}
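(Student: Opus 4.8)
The plan is to establish the three assertions in order, with (a) as the backbone, (b) as the technical heart, and (c) as a one-line corollary of (b). As preliminaries I would record two facts: that $V'=\mathrm{reach}_r(v\mid\psi)$ is disjoint from the vertex set of $\psi$ --- immediate, since any $r$-sequence has distinct vertices, so $\psi\,\sigma\,u$ being an $r$-sequence forces $u\notin\psi$ --- and that $V'\subseteq\mathrm{des}(v)$ straight from the definition. I would also check that $r'$ is a legitimate visit rule on $G'$; the only point needing argument is that $r'(w)\neq\emptyset$ for every $w\in V'$. Given $w\in V'$, choose a witness with $\sigma\,w\subseteq\mathrm{des}(v)$ such that $\psi\,\sigma\,w$ is an $r$-sequence; some enabler $Q\in r(w)$ lies in $\psi\,\sigma$, and since for every prefix of $\sigma$ the concatenation of $\psi$ with that prefix is again a witnessing $r$-sequence, each vertex of $\sigma$ belongs to $V'$. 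Hence $Q\setminus\psi\subseteq\sigma\subseteq V'$, so $Q\setminus\psi\in r'(w)$ (and $Q\setminus\psi\subseteq\mathrm{pre}(w)\cap V'$, the predecessor set of $w$ in $G'$).

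For (a) I would argue position by position along $\psi\psi'$. A prefix ending inside $\psi$ is a prefix of $\psi$, hence an $r$-sequence. At a position occupied by $w=\psi'[j]$, compliance of $\psi'$ with $r'$ provides an enabler $Q'\in r'(w)$ with $Q'\subseteq\psi'[1..j-1]$; by the definition of $r'$, $Q'=Q\setminus\psi$ for some $Q\in r(w)$, whence $Q\subseteq\psi\cup Q'\subseteq\psi\,\psi'[1..j-1]$, i.e. the $r$-enabler $Q$ of $w$ is already contained in the prefix. Vertex distinctness in $\psi\psi'$ follows from $V'\cap\psi=\emptyset$ together with distinctness inside each of $\psi$ and $\psi'$.

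For (b), fix $i$ and let $x\in B_r(\psi\,\psi'[1..i])$, so $x\notin\psi\cup\psi'[1..i]$ and some non-empty $Q\in r(x)$ satisfies $Q\subseteq\psi\cup\psi'[1..i]$; I split on whether $x\in V'$. If $x\in V'$: either $Q\subseteq\psi$, giving $x\in B_r(\psi)$ at once, or $\emptyset\neq Q\setminus\psi\subseteq\psi'[1..i]\subseteq V'$, so $Q\setminus\psi\in r'(x)$ and, since also $x\in V'\setminus\psi'[1..i]$, we get $x\in B_{r'}(\psi'[1..i])$. If $x\notin V'$, I claim $Q\subseteq\psi$, which again gives $x\in B_r(\psi)$ (note $x\notin\psi$). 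Suppose not; let $j\le i$ be the \emph{largest} index with $\psi'[j]\in Q$, so that $Q\subseteq\psi\cup\psi'[1..j]$. Then $\psi\,\psi'[1..j]$ is an $r$-sequence (a prefix of $\psi\psi'$, which is an $r$-sequence by (a)), and appending $x$ yields an $r$-sequence too, since $Q\in r(x)$ is already present and $x\notin\psi\,\psi'[1..j]$. Moreover $\psi'[1..j]\subseteq V'\subseteq\mathrm{des}(v)$, and $x\in\mathrm{des}(v)$ because $x$ is a successor of $\psi'[j]\in\mathrm{des}(v)$. Hence $x\in\mathrm{reach}_r(v\mid\psi)=V'$, contradicting $x\notin V'$. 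This final step --- producing the witnessing extension by taking $j$ maximal so that $\psi\,\psi'[1..j]$ already covers $Q$, and checking that it stays inside $\mathrm{des}(v)$ --- is the one I expect to be the main obstacle.

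Finally (c) is immediate from (b): every prefix of $\psi\psi'$ is either a prefix of $\psi$, with boundary of size at most $b_r(\psi)$, or of the form $\psi\,\psi'[1..i]$, with boundary of size at most $|B_r(\psi)|+|B_{r'}(\psi'[1..i])|\le b_r(\psi)+b_{r'}(\psi')$ by (b) and subadditivity of cardinality; taking the maximum over all prefixes gives $b_r(\psi\psi')\le b_r(\psi)+b_{r'}(\psi')$.
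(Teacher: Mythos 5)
Your proof is correct and follows essentially the same route as the paper's: part (a) by converting each $r'$-enabler back into an $r$-enabler contained in the concatenated prefix, part (b) by showing that every enabled-but-unvisited vertex lands either in $B_{r}\left(\psi\right)$ or in $B_{r'}\left(\psi'[1..i]\right)$, and (c) as an immediate consequence. The only substantive difference is that you spell out, via the maximal-index witness, why a vertex outside $\mathrm{reach}_r\left(v\mid\psi\right)$ cannot acquire a new enabler from $\psi'$ — a step the paper's proof leaves implicit as following ``by definition'' of the enabled reach.
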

\begin{proof}
By definition, $\enset{}'\in \ruleof{\visrule{}}{v}$ if an only is there exists $\enset{}\in \ruleof{\visrule{}}{v}$ such that $\enset{}'=\enset{}\setminus \visit{}$.
By construction, for any $1\leq j\leq |\visit{}'|$, a vertex $v$ appears in $\visit{}'[1..j]$ if there exists $\enset{}'\in \ruleof{\visrule{}'}{v}$ such that $\enset{}'\in\visit{}'[1..j]$ which implies there exists $\enset{}\in \visit{}\visit{}'[1..j]$, and, thus, $\visit{}\visit{}'[1..j]$ is an $\visrule{}$-sequence.

Recall that a vertex appears in the boundary of a $\visrule{}$ sequence if it is enabled but not visited. 
By construction, $\visit{}'= \mathrm{reach}\left(v|\psi\right)$, thus, by definition, any vertex which is enabled by a subset of $\visit{}\visit''$ must either be included in $B_{\visrule}\left(\visit{}\right)$ or must be included among the vertices of $\mathrm{reach}\left(v|\psi\right)$ not yet visited in $\visit{}\visit{}'[1..j]$ and enabled by $\psi'$, that is $B_{\visrule{}'}\left(\psi'[1..i]\right)$. Hence, we have that $b_{\visrule}\left(\psi\psi'\right)\leq b_{\visrule{}}\left(\psi\right)+b_{\visrule{}'}\left(\psi'\right)$.
\end{proof}
Lemma~\ref{lem:enabledreachproperty} states that visiting the $\visrule{}$-enabled reach of a vertex $v$ given a $\visrule{}$-sequence $\visit{}$ of $G$ does not enable any vertex outside $\mathrm{reach}_\visrule(v|\visit{})$ which was not enabled by $\visit{}$ alone. 
Therefore, once the sub-DAG induced by $\mathrm{reach}_\visrule(v|\visit{})$ is visited, the only vertices left in the $\visrule$-boundary are those enabled by $\visit{}$ that have not been visited thus far.

The enabled reach will be a key ingredient in the construction of
visits in the next three subsections.  The choice of both
$\visrule$-sequence $\psi$ and of vertex $v$ has to be tailored to the
particular $\visrule$. Also, highly influenced by $\visrule$ are the
size of the boundary of $\psi$ and the reduction achieved by $G'$ in
the parameters (\emph{e.g.}, depth or number of vertices) governing
the boundary complexity, hence the shape of the resulting bound.

\subsection{General rules}\label{sec:depth}
The \emph{topological depth} of a DAG is the length (\myie{} number of edges) of its longest directed paths. The boundary complexity, according to any visit rule, can be bounded in terms of the depth and the out-degree.  The basic property that is exploited is that if $b$ is a successor of $a$, then the depth of the sub-DAG induced by the descendants of $b$ is smaller than the depth of the sub-DAG induced by the descendants of $a$.
\begin{theorem}
\label{thm:depthbound}
	Consider $\gra$ with maximum out-degree $\deout$ and topological depth $\ell$. For any visit rule $\visrule\in \visset{G}$, there exists an $\visrule$-visit $\psi\in\visitset{\visrule{}}{G}$ such that $b_{\visrule}(\visit)\leq \left(\deout{}-1\right)\ell+1$.
\end{theorem}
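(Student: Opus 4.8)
I will prove the statement by induction on $|V|$. To make the induction go through, I prove the following slightly stronger claim: for every DAG $\gra$ with maximum out-degree $\deout$ and depth $\ell$, every $\visrule\in\visset{G}$, and every vertex $v_0\in V$ with $\emptyset\in\ruleof{\visrule}{v_0}$ (an \emph{$\visrule$-source}; one exists whenever $V\neq\emptyset$, since any in-degree-$0$ vertex qualifies), there is an $\visrule$-visit $\psi$ of $G$ that \emph{begins with $v_0$} and satisfies $b_{\visrule}(\psi)\le(\deout-1)\ell+1$. The theorem is the special case in which $v_0$ is chosen arbitrarily; the base case $|V|\le 1$ is immediate (the unique visit has empty boundary).

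For the inductive step, fix an $\visrule$-source $v_0$ and let the visit begin with the one-term sequence $\psi_0=(v_0)$. Its boundary $B_{\visrule}(\psi_0)=\{u\mid\{v_0\}\in\ruleof{\visrule}{u}\}$ is contained in $\sucx{v_0}$, so $|B_{\visrule}(\psi_0)|\le\deout$, and moreover every such $u$ lies in $V'=\mathrm{reach}_{\visrule}(v_0\mid\psi_0)$ (it is an immediately-enabled proper descendant of $v_0$). Let $G'$ be the sub-DAG induced by $V'$ and $\visrule'$ the rule from Lemma~\ref{lem:enabledreachproperty}. Since $V'\subseteq\des{v_0}\setminus\{v_0\}$ consists of \emph{proper} descendants of $v_0$, appending a directed path of $G'$ after a length-$\ge 1$ path from $v_0$ to that path's start gives a directed path of $G$ that is one longer; hence $G'$ has depth at most $\ell-1$, out-degree at most $\deout$, and $|V'|<|V|$. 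Apply the strengthened induction hypothesis to $G'$, prescribing as first vertex some $u^{\ast}\in B_{\visrule}(\psi_0)$ when this set is non-empty (one checks $\emptyset\in\ruleof{\visrule'}{u^{\ast}}$, so $u^{\ast}$ is a legal first vertex), obtaining an $\visrule'$-visit $\chi$ of $G'$ with $b_{\visrule'}(\chi)\le(\deout-1)(\ell-1)+1$ whose first vertex lies in $B_{\visrule}(\psi_0)$ (if $V'=\emptyset$, take $\chi$ empty). By Lemma~\ref{lem:enabledreachproperty}, $\psi_0\chi$ is an $\visrule$-sequence, and for every prefix $\psi_0\chi[1..i]$ with $i\ge 1$,
\[
\bigl|B_{\visrule}(\psi_0\chi[1..i])\bigr|\le\bigl|B_{\visrule}(\psi_0)\setminus\chi[1..i]\bigr|+b_{\visrule'}(\chi)\le(\deout-1)+(\deout-1)(\ell-1)+1=(\deout-1)\ell+1 ,
\]
using part (b) of Lemma~\ref{lem:enabledreachproperty} (a boundary omits visited vertices) and the estimate $|B_{\visrule}(\psi_0)\setminus\chi[1..i]|\le\deout-1$, which holds because $\chi[1]\in B_{\visrule}(\psi_0)$ (and trivially if $B_{\visrule}(\psi_0)=\emptyset$). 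For $i=0$ the bound reads $|B_{\visrule}(\psi_0)|\le\deout\le(\deout-1)\ell+1$, valid since $\ell\ge 1$ whenever $V'\neq\emptyset$. Finally, $B_{\visrule'}(\chi)=\emptyset$ (all of $G'$ is visited), and $B_{\visrule}(\psi_0)\subseteq V'$ is entirely visited by $\chi$, so part (b) of Lemma~\ref{lem:enabledreachproperty} gives $B_{\visrule}(\psi_0\chi)=\emptyset$.

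The second phase recurses on the sub-DAG $G''$ induced by $V''=V\setminus(\{v_0\}\cup V')$, with the rule $\ruleof{\visrule''}{w}=\{\enset\setminus(\{v_0\}\cup V')\mid\enset\in\ruleof{\visrule}{w},\ \enset\setminus(\{v_0\}\cup V')\subseteq V''\}$; here $G''$ has depth at most $\ell$, out-degree at most $\deout$, and $|V''|<|V|$, so the induction hypothesis yields an $\visrule''$-visit $\chi''$ with $b_{\visrule''}(\chi'')\le(\deout-1)\ell+1$ (empty if $V''=\emptyset$). A direct check, entirely analogous to Lemma~\ref{lem:enabledreachproperty}, shows $\psi:=\psi_0\chi\chi''$ is an $\visrule$-visit of $G$. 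For a prefix ending inside $\chi''$, any vertex in its $\visrule$-boundary is enabled by a non-empty $\enset\subseteq(\{v_0\}\cup V')\cup\chi''[1..j]$; if $\enset\subseteq\{v_0\}\cup V'$ the vertex would lie in $B_{\visrule}(\psi_0\chi)=\emptyset$, impossible, so $\enset\setminus(\{v_0\}\cup V')$ is a non-empty $\visrule''$-enabler contained in $\chi''[1..j]$, placing the vertex in $B_{\visrule''}(\chi''[1..j])$; hence this boundary is also $\le(\deout-1)\ell+1$. Since every prefix of $\psi$ falls into one of these three phases, $b_{\visrule}(\psi)\le(\deout-1)\ell+1$, completing the induction.

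The main obstacle is an off-by-one: bounding the boundary of $\psi_0\chi[1..i]$ crudely by $|B_{\visrule}(\psi_0)|+b_{\visrule'}(\chi)\le\deout+(\deout-1)(\ell-1)+1=(\deout-1)\ell+2$ overshoots by one. The remedy — and the reason the inductive statement must be strengthened — is to force the recursive visit $\chi$ of the enabled reach to spend its very first step on a vertex of $B_{\visrule}(\psi_0)$, so that the ``old'' boundary $B_{\visrule}(\psi_0)$ and the ``new'' boundary $B_{\visrule'}(\chi[1..i])$ cannot both be maximal simultaneously. The remaining work is routine bookkeeping: verifying that the induced rules $\visrule',\visrule''$ are well defined, and that $V'$ consists of proper descendants of $v_0$ so that the depth parameter strictly decreases on the enabled-reach branch (on the other branch it is $|V|$ that decreases). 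When $\visrule$ has the property that $\emptyset\in\ruleof{\visrule}{v}$ only for input vertices (as for $\sinrule$ and $\toporule$), the first vertex of any $\visrule'$-visit of $G'$ automatically lies in $B_{\visrule}(\psi_0)$, so the strengthening is unnecessary and one may induct directly on the plain statement.
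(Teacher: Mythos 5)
Your proof is correct and follows essentially the same route as the paper's: both decompose the visit via the enabled reach of a single newly enabled vertex, invoke Lemma~\ref{lem:enabledreachproperty}, decrement the depth on the reach, and recover the crucial extra $-1$ by making the sub-visit's first step consume a vertex of the current boundary. The differences are bookkeeping: you induct on $|V|$ with a strengthened ``prescribed first vertex'' statement and a reach/remainder split, whereas the paper inducts on the depth $\ell$, iterates over input vertices, and gets the same saving by observing that the first vertex of the recursive visit is a successor of $v$ (your strengthening makes that step airtight even for rules that allow empty enablers at non-input vertices).
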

\begin{proof}
	 We construct inductively an $\visrule{}$-visit $\psi$ of $G$ such that  $b_{\visrule}(\visit)\leq \ell\left(\deout{}-1\right)$. 

In the base case $\ell=0$, that is, all the vertices of $G$ are input vertices without predecessors.  Any permutation of the input vertices is an $\visrule{}$-visit. As input vertices are enabled by the empty set, by Definition~\ref{def:bundary}, they do not appear in the boundary and, thus, the $\visrule{}$-boundary complexity of any such visit is zero. 

For the general case $\ell\geq 1$,
the visit begins by visiting any input vertex of $G$.
If none of its direct successors are enabled according to $\visrule{}$, by definition, $B_{\visrule{}}\left(\visit{}[1]\right)=\emptyset$.  If that is the case, the visit proceeds by selecting another input vertex of $G$.

Without loss of generality, let $v$ denote the first input vertex of $G$  whose $\visrule{}$-enabled reach given the $\visrule{}$-sequence $\visit{}_{pre}$ constructed so far is not empty. As visiting $v$ can only enable its at most $\deout{}$ successors, we have $|B_r\left(\psi_{pre}\right)|\leq\deout$. Let $G'=\left(V',E'\right)$ denote the sub-DAG of $G$ induced by the $\visrule$-enabled reach of $v$ given $\psi_{pre}$ and let  $\visrule'(v)=\{\enset\setminus \visit{}_{pre}~|~\enset\in\visrule (v)\}$ for all $v\in \mathrm{reach}_r\left(v|\visit{}_{pre}\right)$. By construction, $G'$ is a sub-DAG of the DAG induced by the set of descendants of $v$, whose topological depth must be at most $\ell-1$. Thus, $G'$ has topological depth at most $\ell-1$ as well. Hence, by the inductive hypothesis,  for any visit rule of $G'$, and, in particular, for  $\visrule'$ there exists an $\visrule'$-visit of $G'$, denoted as $\psi'$ such that $b_{\visrule{}'}\left(\visit{}'\right)\leq \left(\ell-1\right)\left(\deout{}-1\right)+1$. 

By Lemma~\ref{lem:enabledreachproperty},  $\visit{}_{pre}\visit{}'$ is an $\visrule{}$-sequence of $G$ and  vertices in $\psi'$ do not enable any vertex in $V\setminus \visit{}'$. 
Further, as at the first step of $\psi'$ one successor of $v$ is visited, from that step onward, at most $\deout{}-1$ successors of $v$ which are yet to be visited may be in the boundary. Thus:
\begin{align*}
    |B_r\left(\visit{}_{pre}[1..i]\right)|&=0 &\mathrm{for}\ i=1,\ldots,|\visit{}_{pre}|\\
    |B_r\left(\visit{}_{pre}\right)|&\leq\deout \\
    |B_r\left(\visit{}_{pre}\visit{}'[1..j]\right)|&\leq (\deout - 1)+\left(\ell-1\right)\left(\deout{}-1\right)+1 &\mathrm{for}\ j=1,\ldots,|\visit{}'|\\
    |B_r\left(\visit{}_{pre}\visit{}'\right)|&\leq (\deout - 1)
\end{align*}
The visit then proceeds by visiting any input vertex of $G$ which is yet to be visited and its enabled reach given the $\visrule{}$-sequence constructed so far and by repeating the operations previously described. This ensures that $G$ is entirely visited by $\psi$. By repeating the considerations on the boundary size previously discussed, we can conclude that the maximum boundary size of $\psi$ is at most $\left(\deout{}-1\right)\ell+1$. The theorem follows.
\end{proof}
This upper bound is \emph{existentially tight}: For some visit rules $\visrule$, there exist some DAGs
for which $b_{\visrule}(G)=\Theta\left(\deout{}\ell\right)$. An example is given by the reverse $q$-pyramid DAG, discussed in 
Section
~\ref{sec:rpyramid}, for which $b_{\sinrule}(G)=\Theta\left(\deout{}\ell\right)=\Theta\left(\sqrt{q n}\right)$, since $\deout{}=q$ and $\ell=\sqrt{n/q}$. 

Below, we derive universal upper bounds for the singleton and the
topological rule, which are expressed in terms of $n$ and $\deout{}$.
These bounds are tighter than that of Theorem~\ref{thm:depthbound} for
DAGs with $\ell$ suitably large (as a function of $n$ and $\deout$).

\subsection{Singleton rule $\sinrule{}$}\label{sec:visitssing}
The $\sinrule$ rule has the interesting property that the enabled reach of $v$, given $\psi$, contains all the descendants of $v$ not in $\psi$. One can easily find a $v$ with suitably few descendants, say, less than $n/2$. A $\sinrule$-sequence $\psi$ that contains $v$ can be obtained as the sequence of vertices on a path from an input to $v$. If this path has length $k$, then its boundary could be of a size as big as $(\deout -1)k+1$; therefore, a small $k$ is a prerequisite to guaranteeing small boundary complexity.  In general, a good enough upper bound to $k$ cannot be guaranteed for the entire DAG.  However, it is possible to partition the DAG into a sequence of ``blocks'' such that (i) blocks can be visited one at a time in the order they appear in the sequence; (ii) there is a reasonably small upper bound ($O(\sqrt{\deout n})$) on the number of nodes that are enabled by the nodes in a block, but lie outside the block, and they lie all in the next block; and (iii) in each block, each node is reachable from one of the block inputs by a path of reasonably small length ($O(\sqrt{\deout n})$). When the details are filled in, the outlined approach yields the following results.
\begin{theorem}
\label{thm:gensin}
	Given an $\gra$ with $|V|=n$ and maximum out-degree at most $\deout{}$ there exists a visit $\visit\in\visitset{\sinrule}{G}$ s.t. $b_{\sinrule}(\visit{})\leq 4\left(\sqrt{2}+1\right)\sqrt{\deout{}n}$.
\end{theorem}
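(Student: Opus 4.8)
The plan is to realize the divide-and-conquer strategy sketched before the statement, turning it into a recursion on $n=|V|$ whose solution is a geometric series. Concretely, the goal is to prove by strong induction on $n$ that
\[
b_{\sinrule}(G)\le 2\sqrt{2}\,\sqrt{\deout n}+b_{\sinrule}(G'),
\]
where $G'$ is a sub-DAG with at most $n/2$ vertices and out-degree at most $\deout$; summing $2\sqrt2\,\sqrt{\deout n/2^i}$ over $i\ge 0$ gives $2\sqrt2\,(2+\sqrt2)\sqrt{\deout n}=4(\sqrt2+1)\sqrt{\deout n}$. (The base case, $n$ below a constant, is trivial.)

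First I would fix the scale parameter $w:=\lceil \sqrt{2n/\deout}\,\rceil$ and layer $G$ by shortest distance from the input set: let $L_j=\{v : s(v)=j\}$, where $s(v)$ is the length of the shortest directed path from an input to $v$, and let $n_j=|L_j|$. Two structural facts drive everything. (a) On any shortest input-to-$v$ path the level increases by exactly one at each step, so every $v$ with $s(v)=j$ is reachable from a vertex of $L_a$ (for any $a\le j$) by a path of length $j-a$ whose $t$-th vertex lies in $L_{a+t}$; in particular a vertex of a block spanning levels $[a,a+w)$ is reachable from a block entry vertex by a path of length $<w$. (b) For an edge $(u,v)$ one has $s(v)\le s(u)+1$, so once all levels $<a$ have been visited, every enabled-but-unvisited vertex lies in $L_a$; hence nodes enabled by a block but lying outside it all fall in the very next block, and the boundary entering or leaving a block is contained in a single layer. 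I would then choose the block cuts by pigeonhole: any window of $w$ consecutive levels contains a level of size $\le n/w\le \sqrt{\deout n/2}$ (otherwise that window alone would hold more than $n$ vertices), so $G$ can be split into blocks of at most $w$ consecutive levels whose boundary layers have size at most $n/w$, and I let $G'$ be the suffix of the layering starting at the first cut level past which fewer than $n/2$ vertices remain (if the recursion is vacuous because $G$ has fewer than $w$ levels, Theorem~\ref{thm:depthbound} already yields $b_{\sinrule}(G)\le (\deout-1)(w-1)+1$, within the claimed bound).

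The heart of the argument is visiting a single block $B$ (levels $[a,a+w)$, entry set $I\subseteq L_a$ with $|I|\le n/w$) with small boundary. Here I would invoke Lemma~\ref{lem:enabledreachproperty}: restricting $\sinrule$ to the sub-DAG induced by $B$, every vertex of $B$ having a predecessor outside $B$ (i.e. in an earlier block, already visited) acquires the empty set as an enabler and is thus effectively a free input of $B$. Consequently the only $\sinrule$-constrained chains inside $B$ run from level-$a$-type vertices and, by fact (a), have length $<w$; applying Theorem~\ref{thm:depthbound} to $B$ with this induced rule gives a visit of $B$ whose internal boundary is at most $(\deout-1)(w-1)+1$. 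Combining this with fact (b) and Lemma~\ref{lem:enabledreachproperty}(b)--(c): while $B$ is visited the global boundary is at most the internal boundary of $B$, plus the still-unvisited entry vertices of $B$ (at most $n/w$), plus the entry vertices of the next block already enabled (at most $n/w$), i.e. at most $(\deout-1)(w-1)+1+2n/w$. With $w=\sqrt{2n/\deout}$ this is at most $2\sqrt{2\deout n}$, the additive term of the recursion; after all blocks up to $G'$ are visited the boundary has shrunk back to a subset of the entry layer of $G'$, and the inductive hypothesis applied to $G'$ closes the recursion.

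The step I expect to be the main obstacle is exactly this per-block bound: a shortest-path block can contain arbitrarily long directed paths, so Theorem~\ref{thm:depthbound} cannot be applied to it raw, and the whole argument hinges on making the induced-rule observation quantitatively tight, so that the effective constrained depth of a block is genuinely $\le w$ rather than its raw longest-path length, while simultaneously juggling the three requirements that blocks span $\BO{\sqrt{n/\deout}}$ levels, are cut at layers of size $\BO{\sqrt{\deout n}}$, and that the peeled-off prefix really leaves at most $n/2$ vertices. A secondary subtlety is the bookkeeping in Lemma~\ref{lem:enabledreachproperty}, ensuring the boundaries of consecutive blocks are charged only to their shared entry layer and do not accumulate across blocks, which is what keeps the additive constant at $2\sqrt2$ and hence the final constant at $4(\sqrt2+1)$.
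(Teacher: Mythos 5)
Your global scaffolding---cutting the shortest-path layering at small ``bottleneck'' levels, confining the boundary that enters or leaves a block to the two cut layers, and closing a recursion whose geometric sum yields the constant $4\left(\sqrt{2}+1\right)$---matches the skeleton of the paper's proof of Theorem~\ref{thm:gensin} (its Case~2). The genuine gap is exactly the step you flag yourself: the per-block bound. You assert that, once earlier blocks are visited, the block $B$ with its entry layer treated as free inputs has ``effective constrained depth'' less than $w$, so that Theorem~\ref{thm:depthbound} yields an internal boundary of $(\deout-1)(w-1)+1$. But Theorem~\ref{thm:depthbound} is stated, and proved, in terms of the topological depth of the DAG being visited, and the sub-DAG induced by $w$ consecutive shortest-path levels can contain directed paths of unbounded length (for instance a long chain lying entirely inside one level, each of whose vertices also has a level-$a$ predecessor), so the theorem cannot be invoked with $\ell=w-1$. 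Nor does its proof transfer to your parameter ``distance from a free input'': that induction works because the enabled reach of a visited vertex consists of descendants of one of its successors and hence has depth one less, whereas in your setting a vertex in the enabled reach of an entry vertex $x$ may be within distance $w$ only of a different entry vertex $x'$ lying outside that reach, so the sub-DAG you recurse into does not inherit the ``every vertex is within $w$ of a free input'' property. Since a single block may contain more than $n/2$ vertices, your scheme performs no recursion inside a block, so the entire difficulty of the theorem is concentrated in this unproven claim; as written, the proposal is not a proof, and it is not even clear that the claimed per-block lemma is true.

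The paper resolves precisely this point differently: inside a heavy block it recurses on the number of vertices, never on a depth-like quantity. If no input of the block has more than $n/2$ descendants, it visits the inputs' enabled reaches one at a time and applies the inductive hypothesis to each (Case~2.2.a); otherwise it selects the last vertex $y$ in a topological order with $|\des{y}|\geq n/2$ (so every successor of $y$ has fewer than $n/2$ descendants), reaches $y$ by a shortest path $\pi$ inside the block---this is the only place where the bound on the number of levels per block is used, and it contributes only the additive term $\deout{}|\pi|\leq \sqrt{\deout{}n}/\gamma$---and then handles the reach of $y$ and the remaining fewer than $n/2$ vertices by the inductive hypothesis (Case~2.2.b). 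To salvage your route you would have to prove from scratch that any DAG in which every vertex is reachable from a distinguished free set by a path of at most $w$ edges admits a singleton visit with boundary $O(\deout{}w)$; the safer repair is to replace your per-block depth argument with the paper's vertex-count recursion inside blocks, which is what actually carries the proof.
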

\subsection{Proof of upper bound to $\sinrule{}$-boundary complexity}
\noindent\textbf{Theorem~\ref{thm:gensin}}\emph{ Given an $\gra$ with
$|V|=n$ and maximum out-degree $\deout{}$ there exists a visit
$\visit\in\visitset{\sinrule}{G}$ s.t. $b_{\sinrule}(\visit{})\leq
c\sqrt{\deout{}n}$}, where $c= 4\left(\sqrt{2}+1\right)$.
\begin{proof}
For $\deout{} = 0$, all vertices in $G$ are isolated and, having no
predecessors, are enabled only by the empty set. Thus, no vertex
belongs to the $\sinrule$-boundary of any $\sinrule$-visit $\visit$,
hence $b_{\sinrule}(\visit{})=0$, and the stated bound holds.  In the
sequel, we assume $\deout{}\geq 1$, and proceed by induction on $n$.

\noindent \emph{Base:} For $n=1$, that is, $G=(\{u\},\emptyset)$, the
statement is trivially verified as $v$ is an input vertex and the only
$\sinrule$-visit, $\psi=u$, has $\sinrule$-boundary complexity zero.

\noindent \emph{Inductive step $(n \geq 2)$:} 
\noindent\textbf{Case 1: $|I|>1$.} Here, no vertex $v$ is an ancestor
of all vertices. Let $v \in I$ and let $\psi'$ be a $\sinrule{}$-visit
of the DAG induced by $\des{v}$, with boundary complexity at most at
most $c\sqrt{\deout{}n}$, which does exist by the inductive
hypothesis, since $|\des{v}|<n$.  Similarly, let $\psi''$ be a
$\sinrule{}$-visit of the DAG induced by $V \setminus \des{v}$, with
boundary complexity at most $c\sqrt{\deout{}n}$.  Clearly,
$\psi=\psi'\psi'' \in\visitset{\sinrule{}}{G}$.  By the definition of
enabled reach, for $\sinrule{}$, we have that
$\mathrm{reach}_{\sinrule{}}\left(v|v\right)=\des{v} \setminus
\{v\}$. Hence, by Lemma~\ref{lem:enabledreachproperty},
$B_{\sinrule{}}(\psi')=\emptyset$. The stated bound follows.

\noindent\textbf{Case 2: $|I|=1$.} Let $I=\{u\}$.  We partition $V$
into non-empy ``\emph{levels}'' $L\left(1\right),
\ldots,L\left(\ell_s\right)$, such that $v\in L(i)$ if and only if the
shortest directed path from $u$ to $v$ has length $i$.  This path is
also a $\sinrule{}$-sequence. Further, the $\visrule$-boundary of any
$\sinrule{}$-sequence of $G$ included in the first $i$ levels is a
subset of the first $i{+}1$ levels.

We say that level $L(i)$ is a \emph{bottleneck} if $|L(i)| \leq \gamma
\sqrt{\deout{}n}$ and let $i_1< i_2< \ldots < i_k$ denote the indices
of the bottlenecks. Here, $\gamma>0$ is a constant, whose value will
be determined in the course of the proof. Conventionally, we also let
$i_{k+1}=\ell_s{+1}$ and $L(\ell_s+1)=\emptyset$. Since $L(1)=\{u\}$,
we have that $i_1=1$.  We group consecutive levels into \emph{blocks}
$V_j=\cup_{i_j\leq l < i_{j+1}}L\left(l\right)$, for $j=1,2, \ldots,
k$, so that the $j$-th block begins with the $j$-th bottleneck and
ends just before the $(j{+}1)$-st one, or with the last level,
$L(\ell_s)$, if $j=k$.  The number of levels of a block is upper
bounded as $i_{j+1}-i_{j} \leq \frac{\sqrt{n}}{\gamma
  \sqrt{\deout{}}}$.

We construct an $\sinrule$-visit of the form
$\psi=\psi_{1}\psi_{2}\ldots\psi_{k}$, where $\psi_{j}$ is a
$\sinrule{}$-visit of the sub-DAG $G_j$ induced by block $V_j$. Since
the $V_j$'s partition $V$, $\psi\in\visitset{\sinrule{}}{G}$.  By the
properties of the levels mentioned above,
$B_{\sinrule{}}(\psi_{1}\ldots\psi_{j-1})=L\left(i_{j}\right)$.
Furthermore, as $V_1, \ldots, V_{j-1}$ have already been visited by
$\psi_1 \ldots \psi_{i-1}$ and $\visit{}_j$ is a $\sinrule$-visit of
$G_j$, any of its prefixes $\psi_j'$ may only enable vertices in $V_j$
(\emph{i.e.}, the boundary of $\psi_j'$ in $G_j$) and vertices in
$L(i_{j+1})$ (\emph{i.e.}, the children of vertices in $\psi_j' \subseteq
V_j$, which are not in $V_1 \cup \ldots \cup V_{j}$). Therefore,
\begin{align*}
    b_\sinrule{}(\psi) &\leq
    \max_j\{|L\left(i_j\right)|+|L\left(i_{j+1}\right)|+b_{\sinrule{}}\left(\visit{}_j\right)\}\\ &\leq 2 \gamma
    \sqrt{\deout{}n}+\max_j\{b_{\sinrule{}}\left(\visit{}_j\right)\}.
\end{align*}
A case analysis shows how each $\psi_{j}$ can be chosen so
that the above term is at most $c\sqrt{\deout n}$.\\
\noindent\textbf{Case 2.1:} $|V_j|\leq n/2$. By the inductive
hypothesis, there exists $\visit{}_j\in \visitset{\sinrule{}}{G_j}$
such that $b_{\sinrule{}}(\psi_j) \leq c\sqrt{\deout{}n/2} =
\frac{c}{\sqrt{2}}\sqrt{\deout{}n}$. A sufficient condition for the
desired result is that $2\gamma+\frac{c}{\sqrt{2}} \leq c$, which we
will discuss below.\\
\noindent\textbf{Case 2.2:} $|V_j| > n/2$.  Let
$u_1,u_2,\ldots,u_d$ be the input vertices of $G_j$.\\
\noindent\textbf{Case 2.2.a:} No input of $G_j$ has more than $n/2$
descendants, in $G_j$. Then, we construct an $\sinrule{}$-visit of
$G_j$ as
$\visit{}_j=u_1\visit{}_{u_1}u_2\visit{}_{u_2}\ldots,u_d\visit{}_{u_d}$,
where $\visit{}_{u_l}$ is a $\sinrule{}$-visit, with minimum boundary
complexity, of the sub-DAG induced by the descendants of $u_l$ in
$G_j$ which have not been visited in
$u_1\visit{}_{u_1}u_2\visit{}_{u_2}\ldots,u_l$. This is the
$\sinrule{}$-enabled reach of $u_l$ in $G_j$, given
$u_1\visit{}_{u_1}u_2\visit{}_{u_2}\ldots,u_l$.  Since the input
vertices of $G_j$ are not in the boundary of any prefix of $\psi_j$,
by Lemma~\ref{lem:enabledreachproperty},
\begin{align*}
    b_\sinrule{}(\visit{}_j)&\leq
    \max_{l=1,\ldots,d}b_{\sinrule{}}(\visit{}_{u_j}) \leq
    c\sqrt{\deout{}\frac{n}{2}},
\end{align*}
 where the last step follows by the inductive hypothesis, considering
 that $\visit{}_{u_j}\subseteq \des{u_j}$ and, by the assumption of
 this case, $|\des{u_j}| \leq n/2$ (here, and throughout the rest of
 this proof, $\des{v}$ refers to the descendants of $v$ in $G_j$). The
 sufficient condition for the desired result is the same as in Case 2.1\\
\noindent\textbf{Case 2.2.b:} There is an input of $G_j$, w.l.o.g, say
$u_1$, such that $|\des{u_1}|>n/2$.  In order to break down $V_j$ into
pieces of size smaller that $n/2$, to be visited one at the time, we
select a vertex $y \in \des{u_1}$ such that $|\des{y}|\geq n/2$ and
$\max_{z\in \mathrm{suc}(y)}|\des{z}|<n/2$.  Specifically, we can
choose $y$ as the last vertex, in a topological ordering of $G_j$,
with at least $n/2$ descendants. Let $y \in L(i)$, where clearly $i_j
\leq i < i_{j+1}$, and consider a shortest path $\pi y$ among those
from input vertices of $G_i$ to $y$. We have
$|\pi|=i-i_{j}<i_{j+1}-i_{j}\leq \frac{\sqrt{n}}{\gamma
  \sqrt{\deout{}}}$.

Consider now the visit $\visit{}_j=\pi y\visit{}_{y}\visit{}'$ such
that $\visit{}_{y}$ is a $\sinrule{}$-visit of the sub-DAG induced by
$V_j\cap \mathrm{reach}_{\sinrule{}}\left(y|\pi y\right)$ constructed
as discussed in Case a, and $\visit{}'$ is a $\sinrule{}$-visit, with
minimum boundary complexity, of the sub-DAG induced by $V_j\setminus
\pi y\visit{}_{y}$. The boundary associated with any prefix of $\pi$
includes at most $\deout{}$ successors for each of its vertices, which
are at most $|\pi| \leq \frac{\sqrt{n}}{\gamma
  \sqrt{\deout{}}}$. Thus, the total contribution of $\pi$ to the
boundary is at most $\frac{\sqrt{\deout{}n}}{\gamma}$.  Arguing along
  the lines of Case a, it can be shown that
  $b_{\sinrule{}}(\visit{}_{y})\leq \frac{c}{\sqrt{2}}\sqrt{\deout{}
    n}$. Finally, since $|\psi_{v^*}|\geq n/2$, then $|V_j\setminus
  \pi y\visit{}_{y}|<n/2$. Hence, by the inductive hypothesis,
  $b_{\sinrule{}}\left(\visit{}'\right)<\frac{c}{\sqrt{2}}\sqrt{\deout{}n}$.

By Lemma~\ref{lem:enabledreachproperty}, we can conclude that
  $\psi_j\in\visitset{\sinrule{}}{G_j}$ and 
\begin{align*}
  b_{\sinrule{}}(\visit{}_j)&\leq \frac{1}{\gamma}\sqrt{\deout{}n}+ \mymax{b_{\sinrule{}}\left(\visit{}_{y}\right),b_{\sinrule{}}\left(\visit{}'\right)}
  \leq  \left(\frac{1}{\gamma}+\frac{c}{\sqrt{2}}\right) \sqrt{\deout{}n}.
\end{align*}
To establish the stated result, we need to satisfy the bound
$2\gamma+\frac{1}{\gamma}+\frac{c}{\sqrt{2}} \leq c$. This requirement
is more stringent than the one for Case 2.1 and Case 2.2.a.  Solving for
$c$, the sufficient condition is $c \geq
\sqrt{2}\left(\sqrt{2}+1\right)\left(2\gamma+\frac{1}{\gamma}\right)$.
The r.h.s. is minimized when we let $\gamma=\frac{1}{\sqrt{2}}$, which
yields $c \geq 4\left(\sqrt{2}+1 \right)$.
\end{proof}
The upper bound in Theorem~\ref{thm:gensin} is existentially tight as there exists DAGs, such as $q$-pyramids  discussed in Section~\ref{sec:rpyramid}, of matching $\sinrule{}$-boundary complexity. 
\subsection{Topological rule $\toporule{}$}\label{sec:visitstop}
The following property is peculiar to the $\toporule{}$-enabled reach:
\begin{lemma}
\label{lem:disttop}
Let $\psi$ be a $\toporule{}$-sequence of DAG $\gra{}$ and let
$u,v\in \bs{}_{\toporule{}}\left(\visit{}\right)$ be distinct
vertices. Then,
$\mathrm{reach}_{\toporule{}}\left(u|\psi
u\right)\cap \mathrm{reach}_{\toporule{}}\left(v|\psi v\right)=\emptyset$.
\end{lemma}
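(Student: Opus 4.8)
The plan is to show that a vertex $w$ lying in both enabled reaches would force $u$ to be a strict descendant of $v$ \emph{and} $v$ a strict descendant of $u$, contradicting acyclicity. I would first isolate two facts. (i) Any $\toporule{}$-sequence is \emph{ancestor-closed}: if a vertex $p$ occurs in it, then every vertex of $\anc{p}$ occurs strictly before $p$. This is immediate from the definition of $\toporule{}$ by induction on the length of directed paths, since visiting $p$ requires all of $\predc{p}$ to have already been visited, and the same then applies to each predecessor. (ii) As a consequence, for every $w\in\mathrm{reach}_{\toporule{}}(u\mid\psi u)$ one has $\anc{w}\subseteq \psi\cup\{u\}\cup\des{u}$. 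Indeed, by Definition~\ref{defEnabledReach} there is a witnessing $\toporule{}$-sequence $\psi u\psi' w$ with all vertices of $\psi' w$ in $\des{u}$; by (i), every ancestor of $w$ occurs among the vertices of $\psi$, of $u$, and of $\psi'$, and the vertices of $\psi'$ all lie in $\des{u}$. Note that $\psi u$ and $\psi v$ are genuine $\toporule{}$-sequences precisely because $u,v\in B_{\toporule{}}(\psi)$ gives $\predc{u}\subseteq\psi$ and $\predc{v}\subseteq\psi$.

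With (i) and (ii) in hand, suppose toward a contradiction that $w\in \mathrm{reach}_{\toporule{}}(u\mid\psi u)\cap\mathrm{reach}_{\toporule{}}(v\mid\psi v)$. Reading the witnessing sequences in Definition~\ref{defEnabledReach} as sets of vertices, $w\in\des{u}$ and $w\in\des{v}$, so $u,v\in\anc{w}$. Applying (ii) at $u$: since $v\in\anc{w}$, we obtain $v\in\psi\cup\{u\}\cup\des{u}$; but $v\notin\psi$ because $v\in B_{\toporule{}}(\psi)\subseteq V\setminus\psi$, and $v\neq u$ by hypothesis, hence $v\in\des{u}$. Symmetrically, applying (ii) at $v$ gives $u\in\des{v}$. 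Concatenating a directed path from $u$ to $v$ with one from $v$ to $u$ yields a directed cycle in $G$, contradicting that $G$ is a DAG. Therefore $\mathrm{reach}_{\toporule{}}(u\mid\psi u)\cap\mathrm{reach}_{\toporule{}}(v\mid\psi v)=\emptyset$.

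I expect the only delicate point — more a bookkeeping matter than a genuine obstacle — to be the clean statement and proof of the ancestor-closure consequence (ii): one must carefully keep track that the extension vertices $\psi'$ supplied by the enabled-reach definition are confined to $\des{u}$ (resp.\ $\des{v}$), so that an ancestor of $w$ cannot "escape" into a region disjoint from $\psi$, $\{u\}$, and $\des{u}$. Once (ii) is established, the rest is a one-line acyclicity argument requiring no case analysis or estimates.
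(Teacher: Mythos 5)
Your proof is correct, and it shares the paper's central observation: because every $\toporule{}$-sequence is ancestor-closed, any $w\in\mathrm{reach}_{\toporule{}}\left(u|\psi u\right)$ has $\anc{w}$ confined to $\psi\cup\{u\}\cup\des{u}$ (the paper states the slightly sharper containment $\anc{w}\subseteq\psi\cup\{u\}\cup\mathrm{reach}_{\toporule{}}\left(u|\psi u\right)$). Where you diverge is in how $v$ is excluded from that set: the paper argues directly and asymmetrically that $v\notin\des{u}$, using the full strength of topological-boundary membership --- $v\in B_{\toporule{}}\left(\psi\right)$ forces $\anc{v}\subseteq\psi$, while $u\notin\psi$ --- and concludes at once that $w\notin\des{v}\supseteq\mathrm{reach}_{\toporule{}}\left(v|\psi v\right)$, with no appeal to acyclicity beyond the DAG structure itself. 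You instead run the containment symmetrically at $u$ and at $v$, use only the weaker facts $u,v\notin\psi$ and $u\neq v$, and close with a directed-cycle contradiction. Your route buys a more economical use of the boundary hypothesis (only non-membership in $\psi$ plus the validity of $\psi u$ and $\psi v$ as $\toporule{}$-sequences), at the price of an explicit proof by contradiction; the paper's route yields the stronger standalone fact that the enabled reach of $u$ is disjoint from all of $\des{v}$, which is occasionally convenient. Both are complete and the bookkeeping point you flag (that the witnessing extension stays inside $\des{u}$) is handled correctly in your fact (ii).
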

\begin{proof}Let $w \in \mathrm{reach}_{\toporule{}}\left(u|\psi u\right)$; then,
by Definition~\ref{defEnabledReach}, (i) $\anc{w} \subseteq Y
=\psi \cup u \cup \mathrm{reach}_{\toporule{}}\left(u|\psi u\right)$.
Next, we claim that (ii) $v \notin Y$. In fact, $v \notin \psi$
since $v$ belongs to the boundary of $\psi$; $v \not= u$, by
assumption; and $v \notin
\mathrm{reach}_{\toporule{}}\left(u|\psi u\right)$, since
$v\in \bs{}_{\toporule{}}\left(\visit{}\right)$ implies that
$\anc{v} \subseteq \psi$, whereas $u \notin \psi$, so that
$v \notin \des{u} \supseteq \mathrm{reach}_{\toporule{}}\left(u|\psi
u\right)$.  Combining (i) and (ii), we have that $v \notin \anc{w}$
or, equivalently, that $w \notin \des{v}$ whence, by
Definition~\ref{defEnabledReach},
$w \notin \mathrm{reach}_{\toporule{}}\left(v|\psi v\right)$.
\end{proof}
The disjointedness of the enabled-reach sets ensures that, if there are $k$ vertices in the boundary, at least one of them has an enabled reach with fewer than $n/k$ vertices. By leveraging this property, we obtain the following result.
\begin{theorem}
\label{thm:upptopo}
For any DAG $\gra{}$, there exists a visit
$\psi\in\visitset{\toporule{}}{G}$ such that (a) if $\deout{}=0$, then
$b_{\toporule{}}\left(\psi\right)=0$; (b) if $\deout =1$, then
$b_{\toporule{}}\left(\psi\right)=1$; and (c) if $\deout \leq D$, for
some $D \geq 2$, then
$b_{\toporule{}}\left(\psi\right)\leq \frac{D-1}{\log_2 D}\log_2 n
+1$.
\end{theorem}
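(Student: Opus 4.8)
The plan is to dispatch parts (a) and (b) directly and to prove (c) by strong induction on $n=|V|$, leaning on the enabled-reach notion (Definition~\ref{defEnabledReach}) and on Lemmas~\ref{lem:enabledreachproperty} and~\ref{lem:disttop}. For (a), if $\deout=0$ every vertex is a source, hence never in the boundary of any prefix (its sole enabler is $\emptyset$), so every ordering has $\toporule$-boundary complexity $0$. For (b), if $\deout\le 1$ each vertex has at most one successor, so $G$ is a disjoint union of anti-arborescences; handling one in-tree at a time by a post-order traversal keeps the $\toporule$-frontier a singleton at each instant just before visiting a vertex all of whose children are already visited, and empty otherwise, so the boundary complexity is $1$ — and it cannot be smaller once any edge is present, since the last non-source vertex in any topological order lies in the boundary of the prefix preceding it.

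For (c), fix $D\ge 2$ and put $\beta(m)=\tfrac{D-1}{\log_2 D}\log_2 m+1=(D-1)\log_D m+1$; I will show that every DAG on $m$ vertices with $\deout\le D$ admits a $\toporule$-visit whose (integer) boundary complexity is at most $\beta(m)$, which is exactly statement (c) with $n=m$. The case $m=1$ is clear since $\beta(1)=1$. For the step, pick a source $u$, visit it, and let $B_1=B_{\toporule}\big((u)\big)$, of size $k\le\deout\le D$; if $B_1=\emptyset$ recurse on $G-u$. Otherwise, by Lemma~\ref{lem:disttop} the reaches $\mathrm{reach}_{\toporule}(w\mid u\,w)$ with $w\in B_1$ are pairwise disjoint subsets of $V\setminus(\{u\}\cup B_1)$, so ordering $B_1=\{w_1,\dots,w_k\}$ by non-decreasing reach size makes the $i$-th reach have size $m_i\le(m-1-k)/(k-i+1)\le m/(j+1)$ with $j:=k-i\le D-1$. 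I then visit these $k$ blocks one at a time: before the $i$-th block I prepend $w_i$, and by Lemma~\ref{lem:enabledreachproperty} I recursively visit the sub-DAG induced by $\mathrm{reach}_{\toporule}(w_i\mid u\,w_i)$ with the induced rule. While the $i$-th block runs, only the $j$ still-unvisited siblings $w_{i+1},\dots,w_k$ survive in the boundary outside the block (Lemma~\ref{lem:enabledreachproperty} says the block creates no boundary outside itself, and the enabled successors of $w_1,\dots,w_{i-1}$ already lie inside their visited reaches), so the boundary there is at most $j+\beta(m_i)\le j+\beta\big(m/(j+1)\big)$; the transient value $|B_1|=k\le D$ at the prefix $(u)$ is also absorbed. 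Vertices of $G$ left unvisited by these blocks now have all their predecessors among the vertices just visited and are picked up by iterating the same procedure from the next available enabled vertex; since every recursive call is on strictly fewer vertices, the induction is well founded and, after verifying coverage, produces a legal $\toporule$-visit of all of $G$.

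The obstacle is of two kinds. Combinatorially, one must check that iterating ``visit a source, then sweep the enabled reaches of the resulting boundary'' yields a legal $\toporule$-sequence, reaches every vertex, and never lets stale boundary vertices from earlier iterations accumulate — here the ancestor-closedness of the prefix visited so far, together with the disjointness from Lemma~\ref{lem:disttop}, does the work. Arithmetically, closing the induction reduces to $j+\beta\big(m/(j+1)\big)\le\beta(m)$ for $0\le j\le D-1$ and $D\le\beta(m)$ for $m\ge D$, i.e.\ (after cancelling $\beta(m)$) to the single inequality $D^{\,j/(D-1)}\le j+1$ on $[0,D-1]$; this holds by convexity of $t\mapsto D^{\,t/(D-1)}$, with equality at $t=0$ and $t=D-1$. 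It is precisely this tightness at $t=D-1$ that forces the constant $\tfrac{D-1}{\log_2 D}$, which is consistent with the matching existential lower bound furnished by complete $q$-arborescences in Section~\ref{sec:rtrees}.
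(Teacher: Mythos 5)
Your parts (a), (b), and the arithmetic at the end of (c) are fine (your inequality $D^{\,j/(D-1)}\le j+1$ is just the paper's $\frac{q-1}{\log_2 q}\le\frac{D-1}{\log_2 D}$ with $q=j+1$), but the combinatorial core of (c) has a genuine gap: you freeze the enabled reaches at the start, ordering $B_1$ by the sizes of $\mathrm{reach}_{\toporule{}}\left(w\mid u\,w\right)$ and then visiting \emph{exactly those frozen sets} as blocks. Lemma~\ref{lem:enabledreachproperty} only guarantees ``no boundary outside the block'' when the block is the enabled reach computed with respect to the \emph{entire prefix built so far}; applied to the smaller set $\mathrm{reach}_{\toporule{}}\left(w_i\mid u\,w_i\right)$ after the longer prefix $u\,w_1\phi_1\cdots w_{i-1}\phi_{i-1}w_i$, its conclusion (b) does not hold. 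Concretely, your claim that ``the enabled successors of $w_1,\dots,w_{i-1}$ already lie inside their visited reaches'' is false: a vertex whose predecessors are split between block $i$ and earlier blocks belongs to no frozen reach, yet becomes enabled during block $i$ and then sits in the boundary, unaccounted for, until your final leftover phase. This is fatal quantitatively. Take $D=2$, a single source $u$ with successors $a,b$, a path $a\rightarrow c_1\rightarrow\cdots\rightarrow c_t$, a path $b\rightarrow d_1\rightarrow\cdots\rightarrow d_t$, and vertices $z_1,\dots,z_t$ with $\predc{z_j}=\{c_j,d_j\}$ (all out-degrees are at most $2$, $n=3t+3$). The frozen reaches are $\{c_1,\dots,c_t\}$ and $\{d_1,\dots,d_t\}$, both excluding every $z_j$, so your procedure visits $u,a,c_1,\dots,c_t,b,d_1,\dots,d_t$ and by the time $d_t$ is visited the boundary contains all of $z_1,\dots,z_t$, i.e.\ has size $t=\Theta(n)$, whereas the claimed bound is $\log_2 n+1$. (Your leftover-phase remark that the unvisited vertices ``now have all their predecessors among the vertices just visited'' is also not true in general, but the accumulation above is the real problem.)

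The fix is exactly the adaptivity in the paper's proof: after finishing each block, recompute the enabled reaches of the \emph{still-unvisited} boundary vertices with respect to the whole prefix, choose the one of minimum size, and visit \emph{all} of it. Lemma~\ref{lem:disttop} still applies at each step (the current reaches of the remaining siblings are pairwise disjoint and avoid $u$ and all of $B_1$), so the chosen block has at most $\lfloor (n-1-k)/(k-h+1)\rfloor$ vertices, which is the same size bound you wanted; and because each block is the full current reach, Lemma~\ref{lem:enabledreachproperty} now legitimately gives zero boundary leakage, so cross-block vertices like the $z_j$ above are swept into the later blocks instead of piling up. With that change (and the paper's final recursive call on the residual subgraph, whose entering boundary is then empty), your inductive inequality closes as you wrote it.
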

\begin{proof}
(a) If $\deout{}=0$, then all vertices of $G$ are input vertices and any
ordering $\psi$ of $V$ is a legal topological ordering, with
$b_{\toporule{}}\left(\psi\right)=0$, since input vertices are never
part of the boundary.

(b) If $\deout{}=1$, then it is easy to see that $G$ consists of a
collection of chains (one for each input), at least one of which has
length greater than $1$. Clearly, visiting one chain at the time
yields a visit $\psi$ with $b_{\toporule{}}\left(\psi\right)=1$.

(c) If $\deout{} \leq D$, with $D \geq 2$, we proceed by induction on
$n$.  In the base case, $n=1$, clearly there is a unique schedule
$\psi$ with $b_{\toporule{}}\left(\psi\right)=0$.  For the inductive
step ($n>1$), we recursively construct a visit $\psi=topv(G)$ as
follows. Arbitrarily choose an input vertex $u$ of $G$; let
$B_{\toporule{}}\left(u\right)=\{v_1,v_2,\ldots,v_k\}\subseteq \sucx{u}$,
with $0 \leq k \leq \dout{u} \leq D$; and output $\psi = u
v_{j_1} \phi_1 \ldots v_{j_k} \phi_k \phi$ where, informally, $\phi_h$
is a visit of the enabled reach of $v_{j_h}$, given the prefix of
$\psi$ to the left of $v_{j_h}$.  Each $v_{j_h}$ is chosen among the
successors of $u$ (initially) enabled by $u$ and not yet visited so
as to minimize the size of the enabled reach.  Finally, $\phi$ is a
visit of the vertices not yet traversed by the end of $\phi_k$.  More
formally:\\[-5mm]
\begin{itemize}
\item $j_1,\ldots,j_k$ is a permutation of $1, \ldots, k$, specified
below.
\item For $h=1,\ldots,k$, $v_{j_h}\phi_h=topv(G_h)$, where $G_h$ is
the subgraph induced by
$\{v_{j_h}\} \cup \mathrm{reach}_{\toporule{}}\left(v_{j_h}| u
v_{j_1} \phi_1 \ldots v_{j_{h-1}} \phi_{h-1}v_{j_h} \right)$.  It is
easy to see that any visit of $G_h$ must begin with vertex $v_{j_h}$,
the only input vertex of $G_h$, every other vertex of $G_h$ being a
proper descendant of $v_{j_h}$.
\item $j_h
= \underset{j \in \{1,\ldots,k\} \setminus \{j_1, \ldots,j_{h-1}\}}
{\arg\min} \hspace*{1mm} |\mathrm{reach}_{\toporule{}}\left(v_j| u
v_{j_1} \phi_1 \ldots v_{j_{h-1}} \phi_{h-1}v_j \right)|$.
\item $\phi = topv(\bar{G})$, where $\bar{G}$ is the subgraph
induced by $V\setminus \{u v_{j_1} \phi_1 \ldots v_{j_k} \phi_k\}$.
\end{itemize}
To establish the claimed bound on $b_{\toporule{}}\left(\psi\right)$,
we make the following observations.
\begin{enumerate}
\item For $h=1,\ldots,k$, the boundary of the prefix of $\psi$ ending
just before $v_{j_h}$ equals $\{v_{j_h}, \ldots, v_{j_k}\}$, hence it
has size $k-h+1 \leq k$.
\item By Lemma~\ref{lem:enabledreachproperty}, while visiting $G_h$, 
the boundary is at most
$(k-h)+b_{\toporule{}}\left(v_{j_h}\phi_h\right)$.
\item Collectively, the $k-h+1$ enabled
reaches from which the next one to be visited is chosen contain at most
$n-1-k$ vertices (because $u$ and $v_1,\ldots,v_k$ are not contained
in any of them). Since, by Lemma~\ref{lem:disttop}, these reaches are
disjoint, the smallest ones contain at most
$\lfloor \frac{n-1-k}{k-h+1} \rfloor$ vertices. Additionally
accounting for $v_{j_h}$, $G_h$ contains at most
$1+\lfloor \frac{n-1-k}{k-h+1} \rfloor =
\lfloor \frac{n-h}{k-h+1} \rfloor$ vertices.
\item For convenience, let
$f(n,D)=\frac{D-1}{\log_2 D}\log_2 n +1$ and observe that $f$ is
increasing with both arguments. By the inductive hypothesis, we have:
\begin{equation*}\label{eqn:f1}
b_{\toporule{}}\left(\psi\right) \leq \max\{k,
\max_{h=1}^{k} \{(k-h)+f(\frac{n-h}{k-h+1},D)\}, f(n-1-k,D)\}.
\end{equation*}
Using $f(\frac{n-h}{k-h+1},D) \leq f(\frac{n}{k-h+1},D)$ and letting
$q=k-h+1$, the previous yields
\begin{equation*}\label{eqn:f2}
b_{\toporule{}}\left(\psi\right) \leq \max\{k,
\max_{q=1}^{k} \{q-1+f(\frac{n}{q},D)\}, f(n-1-k,D)\}.
\end{equation*}
\item To complete the inductive step, it remains to show that each of the
three terms in the outer $\max$ is no larger than $f(n,D)$.  This is
obvious for the third term $f(n-1-k,D)$, given the monotonicity of
$f$. For the second term, we observe that each argument in the inner
$\max$ satisfies
\begin{eqnarray*}
q-1+f(\frac{n}{q},D) &=& q-1+\frac{D-1}{\log_2 D}\log_2(\frac{n}{q})+1\\
&=& \left(\frac{D-1}{\log_2 D}\log_2 n+1\right)+
(q-1)\left(1-\frac{D-1}{\log_2 D}\frac{\log_2 q}{q-1}\right)\\
&\leq& \frac{D-1}{\log_2 D}\log_2 n+1,
\end{eqnarray*}
where the term dropped in the last step is negative or null, since
$\frac{q-1}{\log_2 q} \leq \frac{D-1}{\log_2 D}$, for $q=1,\ldots,D$,
as can be shown by straightforward calculus.  Finally, to bound the
first term, $k$, we observe that, if $k \leq 1$, then the target bound
trivially holds.  Otherwise, consider that $n \geq k+1$ ($G$ contains
at least the distinct vertices $u$ and $v_1,\ldots, v_k$) and that,
for $2 \leq k \leq D$, we have $\frac{k-1}{\log_2
k} \leq \frac{D-1}{\log_2 D}$ (as mentioned above). Then, the target
inequality follows:
\begin{eqnarray*}
k = (k-1)+1 \leq \frac{k-1}{\log_2 k}\log_2 k
+1 < \frac{D-1}{\log_2 D} \log_2 n+1.
\end{eqnarray*}
\end{enumerate}
\end{proof}
The upper bound in Theorem~\ref{thm:upptopo} is existentially tight as there exist DAGs such as inverted $q$-trees (discussed in 
Section~\ref{sec:rtrees})
with matching $\toporule{}$-boundary complexity. 
\subsection{Comparison of pebbling lower bounds achievable with $\sinrule{}$ vs $\toporule{}$}
Interestingly, there exist DAGs for which  the $\toporule{}$-boundary complexity is asymptotically higher than the $\sinrule{}$-boundary complexity. One such DAG $G$, shown in Figure~\ref{fig:treechain2}, is obtained by connecting the vertices of a directed binary arborescence with a directed chain. Its $\sinrule{}$-boundary complexity is $2$: consider a visit starting in the leftmost vertex of the chain and that visits vertices of the tree as soon as they are enabled. Instead, its $\toporule{}$-boundary complexity is $\log_2 \left(n+2\right)-1$, where $n$ denotes the number of its vertices. In every possible $\toporule{}$-visit, the directed chain must be visited first, and then the arborescence is left to be visited. The bound on the $\toporule{}$- boundary complexity follows an argument similar to that in the analysis of complete $q$-trees in Section~\ref{sec:rtrees}. 
    Consider the reverse DAG. Its pebbling number is $\left(n+2\right)-1$, where $n$ denotes the number of vertices. (The proof of this statement is a simple exercise.)  By the previous considerations, while using the $\sinrule{}$ yields in Theorem~\ref{thm:visitlwb} yields a trivial $\Omega(1)$ lower bound to the pebbling number. Instead, using $\toporule{}$ yields a tight lower bound to the pebbling number of the DAG.

\begin{figure}[]
\centering
    \resizebox{\textwidth}{!}{
    \input{}}
    \caption{Example of a DAG which allows us to study the relationship between the $\sinrule{}$-boundary complexity and the $\toporule{}$-boundary complexity}
\label{fig:treechain2}
\end{figure}

\section{Visits and I/O complexity}\label{sec:iocomp}
In this section, we show how the visit framework is fruitful in the
investigation not just of space complexity but also of \io{}
complexity, by developing a new \io{} lower bound technique, named
``\emph{visit partition}''. This extends a result by
Hong and Kung~\cite{jia1981complexity}, which has provided the basis
for many \io{} lower bounds in the literature thus far.

The \io{} model of computation is based on a system with a memory
hierarchy of two levels: a fast memory or \emph{cache} of $M$ memory
words and a \emph{slow memory}, with an unlimited number of words.  We
assume that any value associated with a DAG vertex can be stored
within a single memory word. A \emph{computation} is a sequence of
steps of the following types: (i) \emph{operations}, with operands and
results in cache; (ii) \emph{reads}, that copy the content of a memory
location into a cache location; and (iii) \emph{writes}, that copy the
content of a cache location into a memory location. Reads and writes
are also called \io{} operations. Input values are assumed to be
available in the slow memory at the beginning of the computation.
Output values are required to be in the slow memory at the end of the
computation. The conditions under which a computation is a valid
execution of a given DAG are intuitively clear. They can be formalized
in terms of the ``\emph{red-blue pebble game}'', introduced by Hong
and Kung in ~\cite{jia1981complexity}.
The \emph{\io{} complexity}
$IO\left(G,M\right)$ of a DAG $G$ is defined as the minimum number
of \io{} operations over all possible computations of $G$. The
\emph{\io{} write complexity} $IO_{\mathcal W}\left(G,M\right)$ and
the \emph{\io{} read complexity} $IO_{\mathcal R}\left(G,M\right)$
are similarly defined.

The visit partition approach to \io{}  lower bounds develops
along the following lines:
\begin{itemize}
\item A visit rule $\visrule$ is chosen for the analysis.
\item A procedure is specified to map each computation $\phi$ of the given DAG
$G$ to an $\visrule$-visit $\psi$ of the reverse DAG $G^R$.
\item Given any partition of visit $\psi$ into consecutive segments,
a set of \io{} operations is identified for each segment, the sets of
different segments being disjoint, whence their contributions can be
added in the \io{} lower bound.
\item The number of read operations associated with a segment is
lower bounded in terms of the size of a minimum
post-dominator of the segment.
\item The number of write operations associated with a segment is
lower bounded in terms of the number of segment vertices that either
inputs of $G_R$ (hence, outputs of $G$), or belong to the boundary of
the visit at the beginning of the segment. The resulting global lower bound,
modified by the addition of the term $|I|-|O|$, also applies to
read operations.
\item A lower bound, $q$, to the \io{} of a given DAG, can be established
by showing that, \emph{for each} visit, \emph{there exists} a segment
partition that requires at least $q$ \io{} operations, between reads
and writes.
\end{itemize}
The technical details of this outline are presented in the next subsections.

\subsection{Segment partitions of a visit}
The concept of \emph{post-dominator set} mirrors that
of \emph{dominator set} used in~\cite{jia1981complexity}:

\begin{definition}[Post-dominator set]\label{def:postdominato}
Given $H=(W,F)$ and $X\subseteq W$, a set $P\subseteq W$ is a
\emph{post-dominator} set of $V'\subseteq W$ if every directed
path from a vertex in $V'$ to an output vertex intersects
$P$. 
We denote as $pd_{min}\left(X\right)$ the minimum size of any post-dominator set of $X$.
\end{definition}
It is simple to see that $P$ is a post-dominator set of $X$
in $H$ if and only if $P$ is a dominator set of $X$ in the reverse DAG $H_R$.

Let $\visit{}$ be an $\visrule$-visit of
$G_R=(V,E_R)$. A \emph{segment partition} of $\visit{}$ into
$k$ \emph{segments} is identified by a sequence of indices ${\bf
i}=(i_1,i_2, \ldots, i_k)$, with $1\leq i_1< i_2 < \ldots< i_k=n$. We
also let $i_{0}=0$, for convenience.  Since $\psi$ is a permutation of
the vertices in $V$, the segments partition
$V$.
For $1\leq j\leq k$, $\psi(i_{j-1}..i_{j}]$ is called the
$j$-th \emph{segment} of the partition.
Two measures play a role in our \io{} lower bound analysis of any segment:
\begin{itemize}
\item The size, $pd_{min}\left(\visit(i_{j-1}..i_j]\right)$, of
minimum post-dominator sets of $\psi(i_{j-1}..i_j]$ .
\item The $\visrule$-\emph{entering boundary size}
\begin{equation*}
    b^{(ent)}_\visrule\left(\visit(i_{j-1}..i_j]\right) = |B^{(ent)}_\visrule\left(\visit(i_{j-1}..i_j]\right)|
\end{equation*}
where, denoting as $I_R$  the set of input vertices of $G_R$, 
\begin{equation*}
    B^{(ent)}_\visrule\left(\visit(i_{j-1}..i_j]\right) = \left( I_R \cup B_{\visrule}\left( \visit[1..i_{j-1}]\right)\right)\cap \visit(i_{j-1}..i_j].
\end{equation*}
\end{itemize}

\subsection{Lower bound}\label{sec:iolwb}
Let $\phi=(\phi_1,\phi_2, \ldots, \phi_T)$ be a computation of a DAG
$\gra{}$ in the \io{} model, in $T$ steps, where $\phi_t$ is the
$t$-th step.  Given an $\visrule\in\visset{G_R}$, we construct an
$\visrule$-visit $\psi$ of $G_R$ corresponding to $\phi$. Our
\io{} lower bounds will be based solely on properties of
the visit. To construct $\psi$, the computation is examined backward,
one step at a time.  The visit is constructed incrementally, by
extending an initially empty prefix.  Let $\psi[1..i(t)]$ be the
prefix already constructed just before processing computation step
$\phi_t$ (initially, $i(T)=0$).  For $t=T,T{-}1, \ldots 1$, if
$\phi_t$ is either a functional operation evaluating vertex $v \in
V\setminus I$ or a read operation copying a vertex $v \in I$ (input of $G$)
into the cache, then, if $v$ has not already been visited
(\myie{} $v \notin \psi[1..i(t)]$) and at least one enabler set
$Q \in
\ruleof{\visrule}{v}$ has already been visited (\myie{}
$Q \subseteq \psi[1..i(t)]$), then $v$ is added to the visit
(\myie{} $i(t{-}1)=i(t){+}1$ and
$\psi[1..i(t{-}1)]=\psi[1..i(t)]v$). Otherwise, the visit constructed thus far remains unchanged (\myie{} $i(t-1)=i(t)$).

By construction, a vertex is included in $\psi$ at most once and only
after at least one of its enablers has been visited. To conclude that
$\psi$ is indeed an $\visrule$-visit of $G_R$ it remains to show that
it contains all the vertices. The vertices in $I_R$ (which are the
inputs of $G_R$, that is, the outputs of $G$) are added to the visit
when they are first encountered in the backward processing of $\phi$,
since they are enabled by the empty set.  Suppose now, by
contradiction, that there are vertices in $V\setminus I_R$ that are not
included in $\psi$. Then, let $t_v$ be the smallest $t$ such that $v$
is computed (if $v\in V\setminus I$) or read from slow memory (if $v \in I$) in
step $\phi_t$, and let $u$ be the vertex with the largest $t_u$ that
is not in $\psi$. It must be the case that if $\psi[1..i(t_u)]$ does not
include any enabler of $u$, which in turn implies that there exists a
predecessor $w$ of $u$ (in $G_R$) that does not belong to
$\psi[1..i(t_u)]$. Since $t_w >t_u$, this implies that
$\psi[1..i(t_w)]$ does not include any enabler of $w$, whence
$w \notin \psi$, which contradicts the definition of $u$.

The procedure just described to construct an $\visrule$-visit of $G_R$
from an \io{} computation $\phi$ of $G$ is quite similar to the one
in the proof of Theorem~\ref{thm:visitlwb} for the analysis of the
pebbling number. The differences are due to the circumstance that the
standard \io{} model assumes the inputs to be initially available in
slow memory, whereas the pebbling model assumes that the inputs can be
(repeatedly) loaded into the working space at any time.

\begin{lemma}[Visit partition]\label{lem:iostep}
Let $\phi$ be a computation of DAG $\gra{}$ on the \io{} model with a
cache of $M$ words. Let $\visrule\in\visset{G_R}$ and let $\psi$
be the $\visrule$-visit of $G_R$ constructed from $\phi$, as described above,
and ${\bf i}=\left(i_1,i_2,\ldots, i_{k}\right)$ any of its segment
partitions. Then, the number $IO_\mathcal{W}(\phi,M)$ of write \io{}
operations and the number $IO_\mathcal{R}(\phi,M)$ of read \io{}
operations executed by $\phi$ satisfy the bounds
\begin{eqnarray}\label{eq:write}
    IO_\mathcal{W}(\phi,M) &\geq& \mathcal{W}_{\visrule}({\bf i},\psi,M)
    := \sum_{j=1}^{k} \max\{0,b^{(ent)}_\visrule\left(\visit(i_{j-1}..i_j]\right)-M\},\\
\label{eq:read0}
    IO_\mathcal{R}(\phi,M) &\geq& \mathcal{W}_{\visrule}({\bf i},\psi,M)+|I|-|O|, \\
\label{eq:read}
    IO_\mathcal{R}(\phi,M) &\geq & \mathcal{R}_{\visrule}({\bf i},\psi,M)
    := \sum_{j=1}^{k}\max\{0,pd_{min}\left(\visit(i_{j-1}..i_j]\right)-M\}.
\end{eqnarray}
The total number $IO(\phi,M)=IO_\mathcal{W}(\phi,M)+IO_\mathcal{R}(\phi,M)$
of \io{} operations executed by $\phi$ satisfies the bound
\begin{equation}\label{eq:tot}
    IO(\phi,M)\geq \mathcal{W}_{\psi}({\bf i},M)
        + \max\{\mathcal{R}_{\psi}({\bf i},M), \mathcal{W}_{\psi}({\bf
        i},M)+|I|-|O|\}.
\end{equation}
\end{lemma}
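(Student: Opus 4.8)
The strategy is to charge to each of the $k$ segments a set of I/O operations of $\phi$, arrange that these sets are pairwise disjoint across segments, and then sum. First I would fix the dictionary between $\phi$ and $\psi$: the backward construction appends each vertex $v$ of $G_R$ to $\psi$ at a single step $t(v)$, and since $\phi$ is scanned from $\phi_T$ down to $\phi_1$ the map $m\mapsto t(\psi[m])$ is strictly decreasing. With $\tau_j:=t(\psi[i_j])$ for $1\le j\le k$ and $\tau_0:=T{+}1$ we get $T{+}1=\tau_0>\tau_1>\cdots>\tau_k$, and each vertex of the $j$-th segment $\psi(i_{j-1}..i_j]$ is \emph{produced} --- computed if it lies in $V\setminus I$, read from slow memory if it lies in $I$ --- at some step in the window $P_j:=\{\,t:\tau_j\le t<\tau_{j-1}\,\}$; the $P_j$ are pairwise disjoint subsets of $\{1,\dots,T\}$. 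Since $IO(\phi,M)=IO_{\mathcal W}(\phi,M)+IO_{\mathcal R}(\phi,M)$, \eqref{eq:tot} is immediate once \eqref{eq:write}, \eqref{eq:read0} and \eqref{eq:read} are proved, so I would concentrate on those three.

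For \eqref{eq:read} I would run the classical dominator argument. Fix $j$: just before step $\tau_j$ the cache holds at most $M$ vertex values, and by unrolling the rules of a legal computation that set, together with the set $\mathcal R_j$ of values read from slow memory during $P_j$, is a dominator set \emph{in $G$} of the vertices computed during $P_j$ (every vertex computed in $P_j$ has all its $G$-predecessors in cache at that instant, and tracing predecessors back bottoms out in values resident at the start of $P_j$ or read during $P_j$). As every non-input vertex of segment $j$ is computed in $P_j$ and every input-of-$G$ vertex of segment $j$ lies in $\mathcal R_j$, this set dominates all of $\psi(i_{j-1}..i_j]$ in $G$, i.e.\ (remark after Definition~\ref{def:postdominato}) post-dominates $\psi(i_{j-1}..i_j]$ in $G_R$; hence $|\mathcal R_j|\ge pd_{min}\left(\psi(i_{j-1}..i_j]\right)-M$, and summing over the disjoint $P_j$ gives \eqref{eq:read}.

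For \eqref{eq:write} the core claim would be: \emph{every $v\in B^{(ent)}_{\visrule}\left(\psi(i_{j-1}..i_j]\right)$ has a valid value present in memory (cache or slow) in the configuration just before step $\tau_{j-1}$.} This is clear for $v\in I$; for $v\in I_R$ it holds because $t(v)$ is the last step at which $v$ is (re)computed and $t(v)<\tau_{j-1}$, so that value must survive until it reaches slow memory at the end of $\phi$; for $v\in B_{\visrule}\left(\psi[1..i_{j-1}]\right)$ I would pick an enabler $Q\in\visrule(v)$ with $Q\subseteq\psi[1..i_{j-1}]$, note each vertex of $Q$ is a $G$-successor of $v$ with $t(\cdot)\ge\tau_{j-1}$, let $s^\star\ge\tau_{j-1}$ be the first step computing a vertex of $Q$ --- so $v$ is in cache just before $s^\star$ --- and invoke a minimality argument, the instance-selection device from the proof of Theorem~\ref{thm:visitlwb}, to exclude $v$ being evicted and re-materialised inside $[\tau_{j-1},s^\star)$ (such a re-materialisation would force, scanning $\phi$ backward, some vertex of $Q$ to be computed in that same range, contradicting the choice of $s^\star$). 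Granting the claim, at least $b^{(ent)}_{\visrule}\left(\psi(i_{j-1}..i_j]\right)-M$ of these vertices sit in slow memory just before $\tau_{j-1}$, each put there by a distinct write; since the segments partition $V$, writes charged to distinct segments are distinct, and summation yields \eqref{eq:write}. Inequality \eqref{eq:read0} I would obtain by the dual count: each used input of $G$ is read at least once from a location no write touches, contributing $|I|$ reads, and of the $\mathcal W_{\visrule}({\bf i},\psi,M)$ write-charged vertices at most $|O|$ are outputs, while each of the others is consumed only by a successor computed after the window where it was stored, forcing a further read attributable --- by the same minimality device --- to a disjoint portion of $\phi$; together this gives $IO_{\mathcal R}(\phi,M)\ge|I|+\mathcal W_{\visrule}({\bf i},\psi,M)-|O|$, and \eqref{eq:tot} follows by adding \eqref{eq:write} to the larger of \eqref{eq:read0} and \eqref{eq:read}.

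The main obstacle I anticipate is the honest treatment of \emph{recomputation}: a segment's vertex may be computed several times by $\phi$, so throughout one must argue about the particular instance whose value is forced to cross the relevant segment boundary while resident in memory --- precisely the subtlety already handled in the proof of Theorem~\ref{thm:visitlwb} and reused here --- and this instance-tracking is also what guarantees that the per-segment I/O charges are disjoint, hence additive.
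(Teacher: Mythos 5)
Your overall architecture is the same as the paper's: charge I/O per segment using vertex/window disjointness, get the read bound from a dominator (post-dominator in $G_R$) argument over the disjoint time windows, get the write bound from the entering boundary, and obtain the $|I|-|O|$ variant by noting which charged vertices also force reads. Your post-dominator argument and your treatment of the vertices in $I_R$ (outputs of $G$) match the paper. The gap is in the write bound. Your core claim only establishes that every $v\in B^{(ent)}_{\visrule}(\psi(i_{j-1}..i_j])$ has its value \emph{somewhere} in memory (cache or slow) just before $\tau_{j-1}$, and from this you infer that the at least $b^{(ent)}_{\visrule}(\psi(i_{j-1}..i_j])-M$ of them residing in slow memory were ``each put there by a distinct write''. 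That inference fails for boundary vertices that are inputs of $G$: such vertices can certainly belong to $B_{\visrule}(\psi[1..i_{j-1}])$ (their predecessors in $G_R$ are their successors in $G$, so they have nonempty enablers), and an input of $G$ sits in slow memory from the very start of the computation without any write ever having occurred. So ``present in memory'' does not yield ``was written'', and a segment whose entering boundary is rich in inputs would be charged writes that need not exist. You dismiss the case $v\in I$ as ``clear'', which is exactly where the needed work lies.

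The paper closes this hole by proving a \emph{stronger} statement for boundary inputs: if $v\in I$ lies in $B_{\visrule}(\psi[1..i_{j-1}])$, let $\tau_g$ be the read at which $v$ was visited and $\tau_f$ the step computing the vertex that completed the first-visited enabler of $v$ (its position is at most $i_{j-1}$, so $\tau_f\ge\tau_{j-1}>\tau_g$); $v$ is not re-read in $(\tau_g,\tau_f]$ (a later read would have been the visit), and inputs are never computed, so the copy needed in cache at $\tau_f$ must stay \emph{in cache} throughout, in particular at time $\tau_{j-1}$. Hence all boundary inputs are among the at most $M$ cached values, and every one of the $\ge b^{(ent)}_{\visrule}(\psi(i_{j-1}..i_j])-M$ slow-memory residents is a computed non-input or an output, each of which genuinely forces a write. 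Your ``minimality device'' already gives the no-intermediate-read fact, so the repair is short, but it must be made: you need the cache-level conclusion for inputs, not merely membership in memory. The same refinement is also what keeps your derivation of $IO_{\mathcal R}(\phi,M)\ge \mathcal{W}_{\visrule}({\bf i},\psi,M)+|I|-|O|$ free of double counting between the $|I|$ input reads and the reads you attach to write-charged vertices, since after the fix no write-charged vertex is an input.
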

\begin{proof}
We will analyze the entering boundary and the post-dominator
contributions to the lower bound on the number of, respectively, write
and read \io{} operations for a generic segment of the visit
$\psi(h..i]$, with $0 \leq h < i \leq n$, and then compose the
contributions of the segments in partition ${\bf i}$.  For
$l=1,\ldots,n$, we let $\tau_l$ be such that $\psi[l]$ has been added
to visit $\psi$ in correspondence of computation step
$\phi_{\tau_{l}}$. Observe that $\tau_1 > \tau_2 >\ldots >\tau_n$,
since the visit is constructed from the computation in reverse. When
speaking of cache or of the slow memory at time $t$, we refer to their
state just before the execution of computation step $\phi_t$.

\emph{Proof of~\eqref{eq:write} - Boundary bound}: We claim that, at time $\tau_h$,
the value of each vertex of set
$B^{(ent)}_\visrule\left(\visit(h..i]\right)$ is stored in cache or in
slow memory. Let
\[v \in B^{(ent)}_\visrule\left(\visit(h..i]\right) =
(B_{\visrule}\left(\visit[1..h]\right)\cup O)\cap \visit(h..i],\]
where $O=I_R$ is the set of output vertices of $G$ (also, input vertices
of $G_R$). Let $v = \psi[g]$, with $h<g\leq i$.

\underline{Case 1}. If $v \in O$,
then at the time $\tau_g \in [\tau_i,\tau_h)$ when it has been visited
$v$ has been computed for the last time. Thereafter, by the rules of
the \io{} model, the value of $v$ must be kept in memory till the end
of the computation. At time $\tau_h$, $v$ can be either in cache or
slow memory, but at some time $t>\tau_g$ it must be written in slow
memory.

\underline{Case 2}. If $v\in B_{\visrule}\left(\visit[1..h]\right)$,
then $\visit[1..h]$ includes an $\visrule$-enabler of $v$ and we
consider the smallest index $f$ such that $\visit[1..f])$ includes an
$\visrule$-enabler of $v$. Clearly $f \leq h$ and $u=\psi[f]$ is a
successor of $v$ in $G$.  We argue that the value of $v$ must be in
memory during the interval $(\tau_g,\tau_f]$. In fact, when $u$ is
computed, $v$ must be in cache. We separately analyze two subcases.

\underline{Case 2.1}. If $v \in V\setminus I$, then it is not computed at any time
$t \in (\tau_g,\tau_f]$, otherwise $v$ would be visited at $t$. Since
$\tau_g < \tau_h \leq \tau_f$, we have that (a copy of the value of)
$v$ is in memory at time $\tau_h$.

\underline{Case 2.2}. If $v \in I$, then it is not read from slow memory
at any time $t \in (\tau_g,\tau_f]$, otherwise $v$ would be visited at
$t$. Then, throughout this interval, $v$ must be kept in cache. Since
$\tau_g < \tau_h \leq \tau_f$, we have that (a copy of the value of)
$v$ is in cache at time $\tau_h$.

Given that at most $M$ vertices of
$B^{(ent)}_\visrule\left(\visit(h..i]\right)$ can be in cache at
$\tau_h$, we conclude that at least
$\max\{0,|B^{(ent)}_\visrule\left(\visit(h..i]\right)|-M\} =
\max\{0, b_r^{ent}(\psi(h..i])-M\}$ vertices must be in slow memory.
Such vertices must all fall under Case 2.1 since the vertices in Case
2.2 must be in the cache. Then, they must have been written into slow
memory, thus contributing to the number of write \io{}, like the
vertices in $O$ (Case 1). Moreover, the contributions of the
(disjoint) segments of partition ${\bf i}$ can be added, since they
count write operations involving vertices that belong to disjoint
sets. This concludes the proof for~\eqref{eq:write}.

\emph{Proof of \eqref{eq:read0} - Modified Boundary bound:} By examining the argument for Case 2.1, we see that the vertices
involved must also be read, at some time $t \geq \tau_h$, so that,
those that are not in cache at time $\tau_h$ contribute to the number
of read \io{}. Each vertex in $I$ is read at least once from
slow memory. Finally, considering that no read \io{} has been argued
when $v \in O$, we reach~\eqref{eq:read0}.

\emph{Proof of~\eqref{eq:read} - Post-dominator bound}: We claim that
the set $Y$ of vertices that are in cache at time $\tau_i$ or are read
into the cache during the interval $[\tau_i,\tau_h)$ is a dominator
set of $\psi(h..i]$ in $G$ or, equivalently, a post-dominator set of
$\psi(h..i]$ in $G_R$. Let $v = \psi[g]$, with $h<g\leq i$.

If $v \in I$, then at the time $\tau_g \in [\tau_i,\tau_h)$, when $v$
has been visited, it has been read into the cache, so that $v \in Y$,
whence $v$ is dominated by $Y$.

If $v \in V\setminus I$, then at the time $\tau_g \in [\tau_i,\tau_h)$ when $v$
has been visited it has been computed. If, by way of contradiction,
$v$ is not dominated by $Y$, then there is a directed path in $G$, say
$(v_1,v_2, \ldots v_q=v)$, with no vertex in $Y$. Let $v_s$ be the 
first vertex on this path computed during interval $[\tau_i,\tau_h)$,
which must exist since $v$ is computed during such interval. When
$v_s$ is computed, $v_{s-1}$ must be in cache, since it is one of its
operands. However, during $[\tau_i,\tau_h)$, $v_{s-1}$ cannot be in
cache since is not computed (by the definition of $v_s$), nor is it
available  at the initial time $\tau_i$ or read from slow memory
(since $v_s \notin Y$). Thus, we have reached a contradiction, which
shows that $v$ is actually dominated by $Y$.

Given that at most $M$ vertices of $Y$ can be initially in the cache, we
conclude that at least
$\max\{0,|Y|-M\} \geq \max\{0,pd_{min}(\psi(h..i])-M\}$ must be
brought into cache by read operations that occur during the interval
$[\tau_i,\tau_h)$. Moreover, the contributions of the (disjoint)
segments of partition ${\bf i}$ can be added since they count read
operations occurring in different time intervals. This concludes the
proof for~\eqref{eq:read}.

Finally, a straightforward combination of
bounds~\eqref{eq:write},~\eqref{eq:read0}, and~\eqref{eq:read}
yields bound~\eqref{eq:tot}.
\end{proof}
We observe that, in Lemma~\ref{lem:iostep}, we can choose the visit
rule and then, for the visit $\psi$ corresponding to a given
computation $\phi$, we can choose the segment partition ${\bf i}$ with
the goal of maximizing the resulting lower bound for the cost metric
of interest. However, for the lower bound to apply to (all
computations of) DAG $G$, we have to consider the minimum lower bound
over all visits. The preceding observations are made more formal in
the next theorem. It is generally possible that none of the
computations that minimize the number of read operations also
minimizes the number of write operations, so that the \io{} complexity
may be larger than the sum of the read and of the write complexity.

\begin{theorem}[\io{} lower bound]\label{thm:finiolwb}
Given a DAG $\gra$, with input set $I$ and output set $O$, a visit
rule $\visrule\in\visset{G_R}$, a visit of $G_R$ according to this
rule $\visit{}\in\visitset{\visrule}{G_R}$, and a cache size $M$, we
define the quantities
\begin{eqnarray}\label{eq:visitWriteLB}
{\mathcal{W}}_\visrule(\psi,M) &=&
\max_{{\bf i} \in \mathcal{I}(\visit{})} \mathcal{W}_{\visrule}({\bf i},\psi,M),\\
\label{eq:visitReadLB}
{\mathcal{R}}_{\visrule}(\psi,M) &=&
\max_{{\bf i} \in \mathcal{I}(\visit{})} \mathcal{R}_{\visrule}({\bf i},\visit,M).
\end{eqnarray}
where $\mathcal{I}(\visit{})$ denotes the set of all segment
partitions of $\psi$ and the quantities $\mathcal{W}_{\psi}({\bf
i},M)$ and $\mathcal{R}_{\psi}({\bf i},M)$ are those introduced in
Equations~\eqref{eq:write} and~\eqref{eq:read}, respectively. Then,
the write \io{} complexity $IO_{\mathcal W}\left(G,M\right)$, the
read \io{} complexity $IO_{\mathcal W}\left(G,M\right)$, and the
total \io{} complexity $IO\left(G,M\right)$ satisfy the following
bounds:
\begin{eqnarray}\label{eq:totwrite}
    IO_\mathcal{W}\left(G,M\right) & \geq &
    \min_{\visit{}\in\visitset{\visrule}{G_R}}
    {\mathcal{W}}_{\visrule}(\psi,M),\\
    \label{eq:totread}
    IO_\mathcal{R}\left(G,M\right) & \geq &
    \min_{\visit{}\in\visitset{\visrule}{G_R}}
    \max\{{\mathcal{R}}_{\visrule}(\psi,M),~{\mathcal{W}}_{\visrule}(\psi,M)+|I|-|O|\},\\
    \label{eq:totot}
    IO\left(G,M\right)& \geq &\min_{\visit{}\in\visitset{\visrule}{G_R}}
    \{{\mathcal{W}}_{\visrule}(\psi,M) + 
    \max\{{\mathcal{R}}_{\visrule}(\psi,M),{\mathcal{W}}_{\visrule}(\psi,M)+|I|-|O|\} \}.
\end{eqnarray}
\end{theorem}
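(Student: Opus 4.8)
The plan is to obtain all three bounds as an essentially immediate consequence of Lemma~\ref{lem:iostep} (the visit-partition lemma), by a two-step quantifier manipulation: first fix an arbitrary computation and optimize over segment partitions, then minimize over computations while exploiting that every computation induces a legal $\visrule$-visit of $G_R$.

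First I would let $\phi$ be any valid computation of $G$ in the \io{} model with a cache of $M$ words, and let $\psi=\psi(\phi)\in\visitset{\visrule}{G_R}$ be the $\visrule$-visit of $G_R$ built from $\phi$ by the backward procedure described just before Lemma~\ref{lem:iostep}; recall that the argument given there already establishes that this $\psi$ is a well-defined $\visrule$-visit of $G_R$. By Lemma~\ref{lem:iostep}, for \emph{every} segment partition ${\bf i}\in\mathcal{I}(\psi)$ the inequalities~\eqref{eq:write},~\eqref{eq:read0}, and~\eqref{eq:read} give $IO_\mathcal{W}(\phi,M)\geq \mathcal{W}_{\visrule}({\bf i},\psi,M)$, $IO_\mathcal{R}(\phi,M)\geq \mathcal{R}_{\visrule}({\bf i},\psi,M)$, and $IO_\mathcal{R}(\phi,M)\geq \mathcal{W}_{\visrule}({\bf i},\psi,M)+|I|-|O|$. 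Since the left-hand sides do not depend on ${\bf i}$, I may take the maximum over ${\bf i}$ in each of these three inequalities \emph{separately}; invoking definitions~\eqref{eq:visitWriteLB} and~\eqref{eq:visitReadLB} of $\mathcal{W}_{\visrule}(\psi,M)$ and $\mathcal{R}_{\visrule}(\psi,M)$ then yields
\begin{equation*}
IO_\mathcal{W}(\phi,M)\geq \mathcal{W}_{\visrule}(\psi,M),\qquad
IO_\mathcal{R}(\phi,M)\geq \max\{\mathcal{R}_{\visrule}(\psi,M),\ \mathcal{W}_{\visrule}(\psi,M)+|I|-|O|\}.
\end{equation*}
Adding the two estimates, via $IO(\phi,M)=IO_\mathcal{W}(\phi,M)+IO_\mathcal{R}(\phi,M)$, gives the per-computation total bound $IO(\phi,M)\geq \mathcal{W}_{\visrule}(\psi,M)+\max\{\mathcal{R}_{\visrule}(\psi,M),\ \mathcal{W}_{\visrule}(\psi,M)+|I|-|O|\}$. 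Finally, since $IO_\mathcal{W}\left(G,M\right)$, $IO_\mathcal{R}\left(G,M\right)$, and $IO\left(G,M\right)$ are by definition the minima of the respective quantities over all valid computations $\phi$, and since each such $\phi$ produces a visit $\psi(\phi)$ lying in $\visitset{\visrule}{G_R}$, replacing $\psi(\phi)$ by the minimizer over all of $\visitset{\visrule}{G_R}$ can only weaken each per-computation bound; this gives~\eqref{eq:totwrite},~\eqref{eq:totread}, and~\eqref{eq:totot}.

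The only point that deserves care — and it is really the reason the total bound~\eqref{eq:totot} looks stronger than a single application of the partition lemma would suggest — is that the optimal segment partition for the write term need not coincide with the optimal partition for either read term. This causes no trouble here precisely because Lemma~\ref{lem:iostep} bounds $IO_\mathcal{W}(\phi,M)$ and $IO_\mathcal{R}(\phi,M)$ separately for each fixed ${\bf i}$, so the three ``$\max$ over ${\bf i}$'' operations can be performed independently before the final addition. I do not anticipate any genuine technical obstacle beyond making this quantifier bookkeeping explicit, together with the already-verified facts that $\psi(\phi)$ is a legal $\visrule$-visit and that $\mathcal{I}(\psi)$ is nonempty; the substantive work is entirely contained in Lemma~\ref{lem:iostep}.
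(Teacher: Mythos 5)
Your proposal is correct and follows essentially the same route as the paper: apply Lemma~\ref{lem:iostep} to the visit $\psi(\phi)$ induced by an arbitrary computation $\phi$, maximize over segment partitions (separately for the write and read bounds, which is exactly why~\eqref{eq:totot} is legitimately stronger than a single application of~\eqref{eq:tot}), and then minimize over $\visitset{\visrule}{G_R}$ since every computation yields some visit in that set. The paper's own proof is just a more compressed statement of this same quantifier bookkeeping.
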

\begin{proof}
Bound~\eqref{eq:write} of Lemma~\ref{lem:iostep} applies to a
computation $\phi$ of $G$ for which the procedure constructing the
visit outputs $\psi$.  The bound holds for any segment partition ${\bf
i}$ and, in particular, for the partition that maximizes
$\mathcal{W}_{\visrule}({\bf i},\psi,M)$.  Using
Definition~\eqref{eq:visitWriteLB}, we obtain
$IO_{W}(\phi,M) \geq {\mathcal{W}}_\visrule(\psi,M)$.  To formulate a
lower bound that holds for any computation $\phi$, hence for any visit
$\psi$, we need to minimize with respect to
$\visit{}\in\visitset{\visrule}{G_R}$, arriving at
Bound~\eqref{eq:totwrite}.

Bounds~\eqref{eq:totread} and~\eqref{eq:totot} are established by
analogous arguments.
\end{proof}
\subsection{Comparison with Hong and Kung's S-partition technique}
Hong and Kung~\cite{jia1981complexity}
introduced the ``\emph{S-partition technique}'', for \io{} lower
bounds. In this section, we show that the central result of their
approach can be derived as a corollary of the visit partition
approach, when the latter is specialized to the topological visit rule
$\toporule$.

The $S$-partitions of a DAG are defined in terms of dominator and
minimum sets. Given a DAG $G=(V,E)$ and a set $V'\subseteq V$, we say
that $D\subseteq V$ is a \emph{dominator set} of $V'$ if every
directed path from an input vertex of $G$ to a vertex in $V'$
intersects $D$. The \emph{minimum set} of $V'$ is the set of all
vertices of $V'$ that have no successors in $V'$. An $S$-partition is
a sequence $(V_1,V_2,\ldots,V_k)$ of sets such that (a) they are
disjoint and their union equals $V$; (b) each $V_j$ has a dominator
set of size at most $S$; (c) the minimum set of each $V_j$ has size at
most $S$; (d) there is no edge from a vertex in $V_j$ to a vertex in
$\cup_{i=1}^{j-1} V_i$.
\begin{theorem}[Adapted from {\cite[Theorem
        3.1]{jia1981complexity}}]\label{thm:hongkung} Any computation
  of a DAG $\gra$ on the \io{} model with a cache of $M$ words, executing
  $q$ \io{} operations, is associated with a $2M$-partition of $G$ with
  $k$ sets, such that $M k > q > M(k-1)$. Therefore, if $k(G,2M)$ is the minimum size of a $2M$-partition of
$G$, the \io{} complexity of $G$ satisfies:
\begin{equation}\label{eq:hongkungbound}
    IO\left(G,M\right)\geq M\left(k(G,2M)-1\right).
\end{equation}
\end{theorem}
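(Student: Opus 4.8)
The plan is to derive Theorem~\ref{thm:hongkung} as a corollary of the visit partition lower bound (Theorem~\ref{thm:finiolwb}), specialized to the topological rule $\toporule$. The key observation is that a topological visit of $G_R$ is exactly a reverse topological ordering of $G$, and that, under $\toporule$, the entering boundary of a segment corresponds precisely to the minimum set in the $S$-partition sense, while the post-dominator set of a segment corresponds to a dominator set in $G$. So the two measures appearing in our bound align with properties (b) and (c) in the definition of an $S$-partition.

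First I would fix a computation $\phi$ of $G$ executing $q$ I/O operations, and let $\psi$ be the $\toporule$-visit of $G_R$ constructed from $\phi$ as in Section~\ref{sec:iolwb}. The goal is to build a $2M$-partition of $G$ with $k$ sets satisfying $Mk > q > M(k-1)$. I would construct the segment partition ${\bf i}=(i_1,\ldots,i_k)$ of $\psi$ greedily: starting from the end of $\psi$ and moving backward (equivalently, forward in $G$), each segment is grown maximally subject to the constraint that both its minimum set (i.e., $\toporule$-entering boundary size) and its minimum post-dominator set have size at most $2M$. The vertices visited within one segment, read in the order in which $G$ processes them, form one block $V_j$ of the $S$-partition. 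Property (d), no edge from $V_j$ back to earlier blocks, follows because $\psi$ reversed is a topological ordering of $G$ and segments are consecutive intervals. Properties (b) and (c) hold by construction of the segments.

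The second step is the counting argument that forces $Mk > q > M(k-1)$. For the lower bound $q > M(k-1)$: by maximality of the greedy choice, for each $j<k$ the segment could not be extended by one more vertex without violating a size-$2M$ bound, so either the minimum set or the minimum post-dominator set of the enlarged segment exceeds $2M$; plugging into Lemma~\ref{lem:iostep} with cache size $M$, this enlarged segment contributes strictly more than $M$ I/O operations to the relevant sum (write or read), and summing over the $k-1$ non-final segments, $q \geq IO_\mathcal{W}(\phi,M)+IO_\mathcal{R}(\phi,M)$ exceeds $M(k-1)$ — though care is needed because the write and read contributions are counted against different totals; I would instead argue directly in terms of the $2M$-partition machinery of Hong and Kung, using that each non-final block has a dominator or minimum set of size exactly $2M$ after enlargement, which forces $>M$ I/O per block. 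For the upper bound $Mk > q$: this is essentially the classical pigeonhole bound showing that $q$ I/O operations can always be covered by $\lceil q/M \rceil + O(1)$ blocks, giving $k \leq \lceil q/M\rceil + 1$, hence $Mk > q$. Finally, taking $k=k(G,2M)$ to be the minimum over all $2M$-partitions gives $q \geq M(k(G,2M)-1)$, which is~\eqref{eq:hongkungbound}, and then minimizing over all computations $\phi$ yields the stated bound on $IO(G,M)$.

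\textbf{The main obstacle} I anticipate is matching the bookkeeping of the original Hong--Kung argument, where reads and writes are counted together against a single quantity $q$, with our refined Lemma~\ref{lem:iostep}, which separates write contributions (from entering boundary / minimum sets) and read contributions (from post-dominators / dominator sets) and bounds them against $IO_\mathcal{W}$ and $IO_\mathcal{R}$ respectively. The cleanest route is probably not to invoke Theorem~\ref{thm:finiolwb} as a black box, but rather to re-run its proof with the greedy segment partition and observe that each non-final segment, being maximal under the $2M$ constraint, forces either $\max\{0, b^{(ent)}_{\toporule}(\cdot)-M\} > M$ or $\max\{0, pd_{min}(\cdot)-M\} > M$ on a one-vertex enlargement — and then handle the $\pm 1$ vertex slack and the interplay between the two metrics exactly as in~\cite[Theorem 3.1]{jia1981complexity}. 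I expect the remaining steps (verifying properties (a)–(d) of the $S$-partition, and the $Mk > q$ direction) to be routine given the earlier results.
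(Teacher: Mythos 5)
The paper itself does not prove Theorem~\ref{thm:hongkung}: the statement is quoted (adapted) from Hong and Kung~\cite{jia1981complexity}, and what the paper actually establishes is Theorem~\ref{lem:comparison1}, which re-derives only the displayed bound~\eqref{eq:hongkungbound} from the visit-partition framework. Your core idea for that bound --- specialize to $\toporule$, note that the entering boundary of a segment of a $\toporule$-visit of $G_R$ is the minimum set of that segment in $G$ while a post-dominator in $G_R$ is a dominator in $G$, and build segments greedily against a $2M$ threshold --- is exactly the route of Theorem~\ref{lem:comparison1}. However, as a proof of the full statement your proposal has two genuine gaps.

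First, your visit-based construction cannot deliver the first sentence of the theorem, namely that \emph{every} computation with $q$ I/O operations is associated with a $2M$-partition whose size $k$ satisfies $Mk > q > M(k-1)$. The inequality $Mk>q$ bounds $k$ from above in terms of $q$, and the $k$ produced by your greedy segment partition has no such relation to $q$: a wasteful computation can perform arbitrarily many I/O operations while your segment count, which depends only on the visit, stays fixed. Hong and Kung obtain both inequalities by partitioning the \emph{computation} into consecutive phases containing exactly $M$ I/O operations each and showing the vertices computed in each phase form a block of a $2M$-partition; your appeal to a pigeonhole bound $k\leq\lceil q/M\rceil+1$ implicitly switches to that phase-based partition, i.e.\ falls back on the original argument rather than deriving it from Theorem~\ref{thm:finiolwb}. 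Second, even for the lower-bound direction your maximal-growth bookkeeping is incomplete: maximality only says that the one-vertex enlargement of a segment violates the $2M$ bound, and, as you observe yourself, the enlarged segment is not one of the segments you feed to Lemma~\ref{lem:iostep}, so it contributes nothing to the sums there. The paper closes this by first recording that adding a single vertex increases the minimum post-dominator size and the minimum-set size by at most one, and then closing each segment exactly when one of the two quantities reaches $2M$; with that, each non-final segment itself contributes at least $2M-M=M$, the sum over segments is at least $(k-1)M\geq M\left(k(G,2M)-1\right)$, and minimizing over visits via Theorem~\ref{thm:finiolwb} finishes the bound. Without that $+1$-increment property, the ``care is needed'' step in your write-up does not close.
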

Next, we show that when $\visrule$ is the topological rule,
Bound~\eqref{eq:totot} of Theorem~\ref{thm:finiolwb} implies
Bound~\eqref{eq:hongkungbound} of Theorem~\ref{thm:hongkung}. In the visit framework, minimum sets arise as the boundary
of topological visits.
\begin{theorem}\label{lem:comparison1}
Given a DAG $\gra{}$, let $\psi$ be any $\toporule$-visit of
$G_R$. There exists at least one segment partition
${\bf i}=\left(i_1,i_2,\ldots,i_k\right)$ of $\psi$ such that
\begin{equation*}
\mathcal{W}_{\visrule}({\bf i},\psi,M)+\mathcal{R}_{\visrule}({\bf
  i},\psi,M) \geq M\left(k(G,2M)-1\right),
\end{equation*}
whence $IO\left(G,M\right) \geq M\left(k(G,2M)-1\right)$.
\end{theorem}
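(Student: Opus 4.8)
The plan is to specialize the visit‑partition machinery to $\visrule=\toporule$ and show that, for \emph{any} $\toporule$‑visit $\psi$ of $G_R$, a judiciously chosen \emph{maximal} segment partition ${\bf i}$ of $\psi$ simultaneously (a) reverses to a $2M$‑partition of $G$ in the sense of Theorem~\ref{thm:hongkung}, so that its number of parts is at least $k(G,2M)$, and (b) forces essentially every one of its segments to contribute at least $M$ to $\mathcal{W}_{\toporule}({\bf i},\psi,M)+\mathcal{R}_{\toporule}({\bf i},\psi,M)$. Putting (a) and (b) together gives the stated inequality $\mathcal{W}_{\toporule}({\bf i},\psi,M)+\mathcal{R}_{\toporule}({\bf i},\psi,M)\ge M\left(k(G,2M)-1\right)$; the bound on $IO(G,M)$ then follows immediately from bound~\eqref{eq:totot} of Theorem~\ref{thm:finiolwb}, using $\max\{\mathcal{R},\mathcal{W}+|I|-|O|\}\ge\mathcal{R}$ and the fact that $\mathcal{W}_{\toporule}(\psi,M)$ and $\mathcal{R}_{\toporule}(\psi,M)$ are maxima over segment partitions, hence at least the values attained at our particular ${\bf i}$.

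First I would set up a small dictionary translating visit quantities into Hong--Kung quantities. Write $V_j=\psi(i_{j-1}..i_j]$. Because $\psi$ is a topological visit of $G_R$, every $G$‑successor of a vertex (i.e.\ every $G_R$‑predecessor) precedes it in $\psi$; from this one gets: (i) the \emph{minimum set} of $V_j$ in $G$ coincides with $B^{(ent)}_{\toporule}\left(\psi(i_{j-1}..i_j]\right)$ — indeed $v\in V_j$ has no $G$‑successor inside $V_j$ exactly when all its $G$‑successors lie in $\psi[1..i_{j-1}]$, i.e.\ when either $v\in I_R$ (output of $G$, empty enabler) or $\predc{v}\subseteq\psi[1..i_{j-1}]$, i.e.\ $v\in B_{\toporule}(\psi[1..i_{j-1}])$; and (ii) $pd_{min}\left(\psi(i_{j-1}..i_j]\right)$, computed in $G_R$, equals the size of a minimum dominator set of $V_j$ in $G$, by the post‑dominator/dominator duality noted after Definition~\ref{def:postdominato}. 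I would also record two elementary monotonicity facts in the index: as $i$ grows by one (with $i_{j-1}$ fixed), both $b^{(ent)}_{\toporule}\left(\psi(i_{j-1}..i]\right)$ and $pd_{min}\left(\psi(i_{j-1}..i]\right)$ are non‑decreasing and increase by at most one — the former because it is the intersection of the fixed set $I_R\cup B_{\toporule}(\psi[1..i_{j-1}])$ with a set that gains one element, the latter because appending the new vertex $w$ to a minimum dominator of the old segment still dominates $w$.

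Next I would build ${\bf i}$ greedily: with $i_0=0$, let $i_j$ be the largest index with $b^{(ent)}_{\toporule}\left(\psi(i_{j-1}..i_j]\right)\le 2M$ and $pd_{min}\left(\psi(i_{j-1}..i_j]\right)\le 2M$; by the monotonicity facts the set of admissible indices is an interval starting at $i_{j-1}+1$, and it is non‑empty since a single vertex has minimum set and dominator of size at most $1\le 2M$. Let $V_1,\dots,V_k$ be the resulting segments. Using (i) and (ii), the reversed sequence $(V_k,V_{k-1},\dots,V_1)$ is a $2M$‑partition of $G$: the parts are disjoint and cover $V$; each has a dominator and a minimum set of size at most $2M$; and there is no edge from a part into an earlier part, because an edge $(a,b)\in E$ corresponds to $(b,a)\in E_R$, forcing $b$ to precede $a$ in $\psi$, hence $b$ lies in a $\psi$‑segment no later than that of $a$, i.e.\ in a part of index no smaller. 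Thus $k\ge k(G,2M)$. Finally, for each $j<k$, $V_j$ cannot be extended by $\psi[i_j+1]$ without one of the two measures exceeding $2M$; by the unit‑increment property that measure equals exactly $2M$ on $V_j$, so $V_j$ contributes $\max\{0,2M-M\}=M$ to $\mathcal{W}_{\toporule}({\bf i},\psi,M)+\mathcal{R}_{\toporule}({\bf i},\psi,M)$. Summing over $j=1,\dots,k-1$ yields $\mathcal{W}_{\toporule}({\bf i},\psi,M)+\mathcal{R}_{\toporule}({\bf i},\psi,M)\ge (k-1)M\ge M\left(k(G,2M)-1\right)$, as required.

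The main obstacle I anticipate is making the dictionary fact (i) airtight, in particular the bookkeeping around vertices of $I_R$ (outputs of $G$), which are caught by the $I_R$ term rather than the $B_{\toporule}$ term of $B^{(ent)}_{\toporule}$, and verifying cleanly that minimum dominator size is monotone under set inclusion and grows by at most one per appended vertex (so that the greedy stopping condition really pins a measure to the value $2M$). Once these are in place, the rest is routine composition of the bounds already proved in Lemma~\ref{lem:iostep} and Theorem~\ref{thm:finiolwb}.
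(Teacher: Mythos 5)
Your proposal is correct and follows essentially the same route as the paper's proof: the identification of the entering boundary of a segment of a $\toporule$-visit with the Hong--Kung minimum set (and of $pd_{min}$ in $G_R$ with dominator size in $G$), the unit-increment property of the two measures, a greedy segmentation at threshold $2M$, reversal of the segments into a $2M$-partition giving $k\geq k(G,2M)$, a contribution of at least $M$ per non-final segment, and the final appeal to Theorem~\ref{thm:finiolwb}. The only difference (stopping at the maximal index with both measures at most $2M$ rather than closing as soon as one measure reaches $2M$) is immaterial, and the points you flag as potential obstacles are handled exactly as you sketch them.
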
%
\begin{proof}
For notational simplicity, throughout this proof, $\visrule$ stands
for $\toporule$. We preliminarily observe that if
${\visit{}\in\visitset{\toporule}{G_R}}$, then, for any segment
$\psi(h,i]$, $B^{(ent)}_{\visrule}\left(\psi(h,i]\right)$ is the
    minimum set of $\psi(h,i]$ in $G$.  In fact, by the definition of
      $\toporule$, $v \in B_{\visrule}\left(\psi[1,h]\right)$ if and
      only if \emph{all} of its predecessors in $G_R$ (\emph{i.e.},
      its successors in $G$) are in $\visit[1..h]$.  Hence,
      $B^{(ent)}_\visrule\left(\visit(h..i]\right)$ contains exactly
        those vertices in $\visit(h,i]$ which are either inputs of
          $G_R$ (outputs of $G$) or for which $\visit(h,i]$ contains
            no predecessor in $G_R$ (thus, no successor in $G$).

Below we will make use of the following two properties, whose simple
proof is omitted here. Let $U \subseteq V$ and $x \in V$. Then
$min_{pd}(U \cup \{x\}) \leq min_{pd}(U)+1$ and $sms(U \cup \{x\}) \leq
sms(U)+1$, where $sms(X)$ denotes the size of the minimum set of $X
\subseteq V$.

Given ${\visit{}\in\visitset{\toporule}{G_R}}$, we consider a segment
partition ${\bf i}=\left(i_1,i_2,\ldots,i_k\right)$ where each
segment, with the possible exception of the last one, has a minimum
postdominator or the minimum set of size $2M$. Based on the properties
stated in the preceding paragraph, such a partition can be easily
constructed by scanning the visit one vertex at a time and closing a
segment as soon as the desired condition is met, or all vertices have
been scanned.

For $j=1,\ldots,k$, let $V_{j}=\{\psi[i_{k-j}{+}1], \ldots,
\psi[i_{k+1-j}]\}$. We show that sequence $(V_1,\ldots,V_k)$ is a
$2M$-partition of $G$, by proving the defining properties: (a) As they
correspond to the segments of a visit, the $V_j$'s are disjoint and
their union equals $V$. (b) By construction, the dominator of each
$V_j$ in $G$ has size at most $2M$.  (c) By construction, the minimum
set of each $V_j$ in $G$ has size at most $2M$. (d) Finally, since
$\psi$ is an $\toporule$-visit of $G_R$, it is a reverse topological
ordering of the vertices in $G$. Therefore, there are no edges of $G$
from $V_j$ to $\cup_{i=1}^{j-1} V_i$. Clearly, $k \geq k(G,2M)$, by
the definition of the latter quantity.

To analyze the \io{} requirements of $\psi$, consider the
following sequence of inequalities:
\begin{eqnarray*}
  \mathcal{W}_{\visrule}(\psi,M)
  &+&\mathcal{R}_{\visrule}(\psi,M) \geq
    \mathcal{W}_{\visrule}({\bf i},\psi,M)
   + \mathcal{R}_{\visrule}({\bf i},\psi,M)\\
  & \geq & \sum_{j=1}^{k}\max
  \{0, b^{(ent)}_\visrule\left(\visit(i_{j-1}..i_{j+1}]\right)-M,
    min_{pd}\left(\visit(i_{j-1}..i_{j}]\right) -M\}  \\
  & \geq & \sum_{j=1}^{k-1} (2M-M) = (k-1)M\\
  & \geq & M\left(k(G,2M)-1\right).
\end{eqnarray*}  
These four inequalities respectively take into account (i) the
definitions in~\eqref{eq:visitWriteLB} and~\eqref{eq:visitReadLB};
(ii) the definitions in~\eqref{eq:write} and~\eqref{eq:read}; (iii)
the fact that, for $j=1,\ldots,k-1$, at least one of the arguments of
the $\max$ operator equals $M$; and (iv) the relationship $k \geq
k(G,2M)$, seen above. Finally, since the chain of inequalities
applies to any visit ${\visit{}\in\visitset{\toporule}{G_R}}$, we can
invoke inequality~\eqref{eq:totot} to conclude that
$IO\left(G,M\right) \geq M\left(k(G,2M)-1\right)$.
\end{proof}
\subsection{\io{} of Diamond DAG}
In this section, we present an example application of our new lower bound technique to obtain asymptotically tight \io{} lower bounds for  ``\emph{Diamond DAGs}'' which can be obtained by taking a $b\times b$ mesh (\myie{} a two-dimensional array), by directing all the edges towards the upper right corner. The graph obtained as such has $n=b^2$ vertices, a single input vertex (\myie{} in the bottom left corner), and a single output vertex (\myie{} in the upper right corner). An example for $b=9$ is given in Figure~\ref{fig:brokend}. 

Besides its independent interest, this example is meant to showcase the advantage of our technique compared to that of Hong and Kung and the power of the introduced generalization. Let $G=\left(V,E\right)$ be a $b$-side Diamond DAG. According to the definition of $S$-partition, the family $\{V\}$ is indeed a 1-partition of $V$ as $V$ has a dominator of cardinality $1$ (composed by the single input vertex)
and empty minimum set. Hence, the tightest lower bound that can be claimed by using Hong and Kung's method in Theorem~\ref{thm:hongkung} (\cite[Theorem3.1]{jia1981complexity}) is the trivial one according to which  the \io{} complexity of $G$ is greater or equal to zero. This is due to the fact that the $M$-partition technique does not capture the fact that the computation of the DAG itself  requires the use of a certain amount of memory locations, and, if the available cache memory is of finite size, data has to be moved from cache to slow memory and vice versa.
In contrast, our technique correctly accounts for this phenomenon thanks to the characterization of the internal and external boundaries in the definition of a segment partition.


Alternatively, to the previous definition, it is possible to think of a Diamond DAG as obtained by ``\emph{merging}'' a $2$-pyramid of height $b$ and a reverse $2$-pyramid of height $b$ by fusing the $b$ input vertices of the pyramid with the $b$ output vertices of the reverse pyramid. This approach allows generalizing the family of Diamond DAGs to those which can be constructed by merging a $q$-pyramid of height $b$.

Various complexity measures of diamond DAGs have been studied in several models of computation
since such DAGs model a number of interesting computations. For 
example, 3-diamonds are sub-DAGs of the computation DAG of a linear array of processing elements, each endowed with one word of memory.

Before delving into the proof of the main result, we introduce the following lemma:
\begin{lemma}\label{lem:vdispaths}
\sloppy Given a $q$-diamond of side $b$ DAG $G = \left(V,E\right)$, let $\Pi= \{v_1,v_2,v_3,\ldots,v_{b-1},v_b\}$ denote the set of vertices in a directed path from the input vertex $v_1$ of $G$ to a vertex $v_b$ in the diagonal. Further let $\Pi'\subseteq \Pi\setminus \{v_b\}$, with $|\Pi'|\neq 0$. 
There exist $|\Pi'|\left(r-1\right)+1$ paths connecting vertices in $\Pi'$ to the vertices of the diagonal of $G$ which share only vertices in $\Pi'$. 
\end{lemma}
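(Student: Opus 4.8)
The plan is to confine the whole question to the ``source half'' of the diamond and then reuse, almost verbatim, the branching construction from the proof of Lemma~\ref{lem:piramid}. Since a $q$-diamond of side $b$ is the merge of a reverse $q$-pyramid $H$ of height $b$ (whose apex is the input vertex $v_1$ of $G$ and whose ``base'' is the diagonal of $G$) with a $q$-pyramid, and since $\Pi$ is a directed path from $v_1$ to a diagonal vertex $v_b$, both $\Pi$ and every directed path from a vertex of $\Pi$ to the diagonal lie entirely inside the induced sub-DAG $H$; hence it suffices to work in $H$. Write $v_\ell$ for the vertex of $\Pi$ on layer $\ell$ of $H$, numbered so that $v_1$ is the apex and $v_b$ lies on the diagonal (layer $b$). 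I would invoke the structural facts about $q$-pyramids already recorded in the proof of Lemma~\ref{lem:piramid}: in $H$ every non-diagonal vertex has exactly $q$ successors, these occupy a contiguous block in the left-to-right order of the next layer, and distinct vertices of a layer have distinct leftmost and distinct rightmost successors. A consequence I will use repeatedly is that the left-to-right position of $v_\ell$ is non-decreasing in $\ell$ (because $v_{\ell+1}$ is one of the contiguous successors of $v_\ell$), and more precisely increases by at most $q-1$ per layer.

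Next I would exhibit the family of $|\Pi'|(q-1)+1$ paths. Because $\Pi' \subseteq \{v_1,\dots,v_{b-1}\}$, every $v_\ell \in \Pi'$ has $q$ successors in $H$, exactly one of which is $v_{\ell+1}$; for each of the other $q-1$ successors $w$ of $v_\ell$ I start a directed path at $v_\ell$ through $w$ and then descend to the diagonal by repeatedly taking the \emph{leftmost} successor when $w$ lies to the left of $v_{\ell+1}$, and the \emph{rightmost} successor when $w$ lies to its right, exactly as in Lemma~\ref{lem:piramid}. Leftmost and rightmost successors are always available, so each such branch reaches layer $b$. This produces $|\Pi'|(q-1)$ paths; the remaining path is the suffix of $\Pi$ from the shallowest vertex of $\Pi'$ down to $v_b$, which already ends on the diagonal. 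Every path in the family has one endpoint in $\Pi'$ and the other on the diagonal, giving the stated cardinality.

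The crux, and the step I expect to be the most tedious, is verifying that the family shares only vertices of $\Pi'$. The first observation is that a branch started at $v_\ell$ meets $\Pi$ only at $v_\ell$: a leftmost descent from a successor strictly to the left of $v_{\ell+1}$ keeps a fixed position that is strictly smaller than the position of every later $v_{\ell'}$ (by monotonicity of positions along $\Pi$), and symmetrically a rightmost descent overtakes $\Pi$ — its position at any layer $\ell' \geq \ell+1$ exceeds that of $v_{\ell'}$, since a rightmost descent gains exactly $q-1$ per layer while $\Pi$ gains at most $q-1$ per layer. Consequently a branch can share with the suffix-of-$\Pi$ path, or with a branch started at a different vertex of $\Pi'$, at most its own start vertex, which lies in $\Pi'$. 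It then remains to show that two branches started at distinct vertices $v_\ell, v_{\ell'} \in \Pi'$ are in fact vertex-disjoint; this is a short case analysis over the four combinations of leftmost/rightmost descent, in each case comparing the two positions at an arbitrary common layer and using the same ``at most $q-1$ per layer along $\Pi$'' bound together with the exact per-layer increments of leftmost ($0$) and rightmost ($q-1$) descents to conclude that the positions differ. Two branches started at the same $v_\ell$ leave $v_\ell$ through distinct successors and, by the identical positional monotonicity, never re-merge. Putting these facts together yields the desired $|\Pi'|(q-1)+1$ paths, pairwise intersecting only inside $\Pi'$.
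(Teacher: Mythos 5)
Your proof is correct and takes essentially the same route as the paper's: restrict attention to the source half of the diamond (the reverse $q$-pyramid whose base is the diagonal), produce $q-1$ leftmost/rightmost branches from each vertex of $\Pi'$ exactly as in the construction used for Lemma~\ref{lem:piramid}, and append one suffix of $\Pi$ as the $\left(|\Pi'|(q-1)+1\right)$-st path. The only differences are presentational: the paper just invokes Lemma~\ref{lem:piramid} on the reverse of this sub-DAG and takes the suffix of $\Pi$ starting from the deepest vertex of $\Pi'$, whereas you inline that construction with explicit position bookkeeping and start the suffix at the shallowest vertex of $\Pi'$; both choices are immaterial, and your positional case analysis does go through.
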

\begin{proof}
Let $G'=\left(V',E'\right)$ denote  the sub-DAG of $G$ such that $V'$ includes the vertices of $V$ besides the successors of the vertices on the diagonal of $G$, and $E'=\left(V'\times V'\right)\cap E$. That is, $G'$ is the reverse $q$-pyramid which includes the input of $G$ up to the vertices of the diagonal. 
In the following, we consider the reverse DAG of $G'$, denoted as $G'_R$. If $v_{b-1}\in\Pi'$ the statement follows directly from Lemma~\ref{lem:piramid}. If $v_{b-1}\notin\Pi'$, let $v_j$ denote the vertex of $\Pi'$ on the lowest layer of $G'_R$ (according to Definition~\ref{def:pyramid}). The statement then follows from Lemma~\ref{lem:piramid} and by noting that the path composed by the last part of $\Pi$ starting from $v_j$, that is $\{v_j,v_{y+1},\ldots,v_b\}$ does not share vertices with the paths constructed according to the proof of Lemma~\ref{lem:piramid} besides $v_j$.
\end{proof}

Lemma~\ref{lem:vdispaths} captures an important structural property of diamond (and pyramid) DAGs which will be of crucial importance for the analysis of  their \io{} complexity.

\subsubsection{\io{} lower bound}
\begin{theorem}\label{thm:diamio}
    Let $G=\left(V,E\right)$ be a $q$-diamond DAG of side $b$. The \io-complexity of $G$ when run on a machine equipped with a cache memory of size $M$ and where for each \io{} operation it is possible to move up to $L$ memory words stored in consecutive memory locations from the cache to slow memory or vice versa, is:
\begin{equation*}
    IO\left(G,M\right)\geq \left\lfloor b\bigg/\left\lceil \frac{2M+r-2}{r-1}\right\rceil\right\rfloor^2 \frac{M}{L}.
\end{equation*}
\end{theorem}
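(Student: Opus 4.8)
The plan is to apply the visit partition lower bound of Theorem~\ref{thm:finiolwb}, specializing the visit rule to the singleton rule $\sinrule{}$ on the reverse DAG $G_R$, and then to exhibit, for every $\sinrule{}$-visit $\psi$ of $G_R$, a segment partition whose post-dominator terms force a large number of read \io{} operations. (The extra factor $1/L$ simply comes from noting that each \io{} operation moves at most $L$ words, so the number of operations is at least the number of words moved divided by $L$; this is a cosmetic final division and not where the work lies.) The heart of the argument is a counting claim: for \emph{any} $\sinrule{}$-visit $\psi$ of $G_R$, and for a suitable choice of ``checkpoint'' indices, each segment has a minimum post-dominator set of size $\Omega(M)$, and the number of such segments is $\Omega(b/\lceil (2M+r-2)/(r-1)\rceil)$ along each of two independent ``directions'', giving the square.

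Concretely, I would proceed as follows. First, fix an arbitrary $\sinrule{}$-visit $\psi$ of $G_R$; recall that $G_R$ is a $q$-diamond ``pointing the other way'', and that by the argument used in the proof of Theorem~\ref{thm:spacepyra} the first time $\psi$ visits a vertex on the anti-diagonal (the merge layer between the reverse pyramid and pyramid halves), the prefix of $\psi$ must already contain a full directed path $\Pi$ from the source of $G_R$ to that anti-diagonal vertex. Using Lemma~\ref{lem:vdispaths}, any sub-path $\Pi' \subseteq \Pi$ spawns $|\Pi'|(r-1)+1$ internally-disjoint paths reaching the diagonal; dually, on the other half of the diamond, an analogous family of disjoint paths reaches the sink. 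Second, I would use these disjoint-path families to lower bound post-dominator sizes: if a segment $\psi(h..i]$ contains a ``band'' of consecutive layers of width roughly $\lceil (2M+r-2)/(r-1)\rceil$, then any post-dominator of that segment must hit each of $\geq 2M$ disjoint path-segments, hence has size $\geq 2M$, so the post-dominator term contributes $2M - M = M$ to $\mathcal{R}_{\sinrule{}}({\bf i},\psi,M)$. Third, I would construct the segment partition ${\bf i}$ greedily: scan $\psi$ and close a segment each time the set of vertices accumulated since the last checkpoint spans a band of that width in the reverse-pyramid half (and symmetrically in the pyramid half); since $\psi$ respects no topological order under $\sinrule{}$, I must argue that a band of $w = \lceil (2M+r-2)/(r-1)\rceil$ \emph{levels} is crossed at least $\lfloor b/w \rfloor$ times in each half, independently, which yields $\lfloor b/w \rfloor^2$ qualifying segments overall (the two factors come from the two halves / two coordinate axes of the mesh acting independently). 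Fourth, summing the per-segment contribution $M$ over these $\lfloor b/w\rfloor^2$ segments, dividing by $L$, and invoking the minimization over all visits in bound~\eqref{eq:totread} (or~\eqref{eq:totot}) of Theorem~\ref{thm:finiolwb}, gives exactly $IO(G,M) \geq \lfloor b/w\rfloor^2 \, M/L$.

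The main obstacle I anticipate is making the ``$\lfloor b/w\rfloor^2$ independent bands'' counting rigorous for an \emph{arbitrary} $\sinrule{}$-visit, not just a topological one. Unlike the topological case, a singleton visit can interleave vertices from far-apart regions, so I cannot simply cut the visit at evenly spaced positions; instead I must track, as a function of the visit index, how far the visited-so-far set has ``advanced'' along each of the two mesh coordinates, show this advance is monotone-enough and eventually spans all $b$ rows and all $b$ columns, and partition accordingly. Concretely I would define, for each prefix $\psi[1..t]$, a pair of progress measures (e.g., the maximal diagonal index for which a full reverse-pyramid-path prefix has been visited, and the analogous quantity in the pyramid half) and verify that when the increment in one of these measures reaches $w$, the corresponding segment has post-dominator size $\geq 2M$ by Lemma~\ref{lem:vdispaths}; the disjointness of post-dominator ``work'' across segments needed to add the contributions (as in the proof of Lemma~\ref{lem:iostep}, equation~\eqref{eq:read}) follows because the read operations charged to different segments occur in disjoint time intervals. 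Getting the constants in $w = \lceil (2M+r-2)/(r-1)\rceil$ exactly right — i.e.\ checking that $2M$ disjoint paths really are available once a band of $w$ levels has been crossed, which needs $(w-1)(r-1)+1 \geq 2M+1$, equivalently $w \geq \lceil (2M+r-2)/(r-1)\rceil$ — is routine once the structural picture from Lemma~\ref{lem:vdispaths} is in hand, but bookkeeping the floor functions and the two-dimensional product will require care.
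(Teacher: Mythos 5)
There is a genuine gap, and it is fatal to the route you chose: for a diamond DAG the post-dominator terms of Lemma~\ref{lem:iostep} are vacuous. A post-dominator of a segment of a visit of $G_R$ is exactly a dominator of that segment in $G$, and the $q$-diamond has a \emph{single} input vertex, so every subset of $V$ has a dominator of size $1$; hence $pd_{min}\left(\visit(h..i]\right)=1$ for every segment and every term $\max\{0,pd_{min}(\cdot)-M\}$ in $\mathcal{R}_{\visrule}({\bf i},\psi,M)$ is $0$. (This is precisely why the $S$-partition bound of Theorem~\ref{thm:hongkung} collapses to a trivial bound on this DAG, as discussed in the paper.) Your step ``any post-dominator of the segment must hit each of $\geq 2M$ disjoint path-segments'' is incorrect: those disjoint paths all eventually funnel into the unique output of $G_R$, so a single vertex post-dominates all of them. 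The actual proof must, and does, extract the whole bound from the \emph{entering-boundary} (write) terms $\mathcal{W}_{\visrule}$ of~\eqref{eq:write}/\eqref{eq:totwrite}, using Lemma~\ref{lem:vdispaths} to place one boundary vertex on each of $2M$ vertex-disjoint paths, which gives $2M-M=M$ per segment.

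A second gap is the choice of rule. You fix the plain singleton rule $\sinrule{}$ on all of $G_R$, but then a vertex on the diagonal of a $2M\times 2M$ tile can be enabled by a predecessor lying in a neighbouring tile, so the prefix of the visit need not contain any long path \emph{inside} that tile, and the localized disjoint-path argument (your ``band'' claim) has no footing; tracking two monotone progress measures does not repair this, because progress in one region can be driven entirely by enablement from another. The paper's proof resolves exactly this difficulty by exploiting the freedom in Theorem~\ref{thm:finiolwb} to pick a \emph{tailored} rule $\visrule^*$: it tiles $G_R$ into $\left(b/\lceil(2M+r-2)/(r-1)\rceil\right)^2$ sub-diamonds and disables all edges crossing tile borders, so that each tile's corner is enabled by the empty set and the first visit of a tile's diagonal forces a full directed path within that tile; the segment partition is then given by these $\left(b/w\right)^2$ first-visit times (one per tile, tiles being vertex-disjoint, so boundary contributions add), not by a product of two independent band-crossing counts. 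So the quadratic factor is a count of tiles under a custom rule plus the boundary bound, not bands along two axes plus the post-dominator bound; as written, your argument cannot be pushed through.
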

\begin{proof}
\input{brokendiamond.tex}
We present proof for  the  case $L=  1$. The result then trivially generalizes for other values of $L$.
In order to simplify the presentation, in the proof, we focus on the case for $r=2$. The proof for the general case follows simple, albeit tedious, modifications. Note that for $r=2$ we have:
\begin{equation*}
    \left\lceil\frac{2M+r-2}{r-1}\right\rceil = 2M.
\end{equation*}
Further, we assume that $b$ is a multiple of $2M$. If that is not the case, the proof proceeds as if the DAG begin considered is the $\left\lceil\frac{b}{2M}\right\rceil2M$-sided Diamond sub-DAG whose input vertex corresponds to the input of $G$.

Let $G_R= \left(V,E_R\right)$ denote the reverse DAG of $G$. $G_R$ is itself a $b$-sided Diamond DAG.
We introduce  the visit rule $\visrule^*$ of $G_R$ using Figure~\ref{fig:brokend} as reference. We divide the DAG $G_R$ into $\left(b/2M\right)^2$ Diamond sub-DAGs, each of side $2M$, starting from the bottom left corner of $G_R$ (\myie{} the input vertex of $G_R$), as depicted in Figure~\ref{fig:brokend}. Let $G_j= \left(V_j,E_j\right)$ denote the $j$-th Diamond sub-DAG of $G_R$ for $j=1,2,\ldots,\left(b/2M\right)^2$. By construction, $\{V_1,V_2,\ldots V_{\left(b/2M\right)^2}\}$ (resp., $\{E_1,E_2,\ldots E_{\left(b/2M\right)^2}\}$) partition $V$ (resp., $E_R$).

$\visrule^*$ is defined in such a way that each vertex $v\in V$ is enabled by the family of singleton sets each containing a predecessor of $V$, if any, provided that the edge connected said predecessor to $v$ is not crossed by a red dashed line. That is, $\visrule^*$  behaves like the standard singleton visit rule \emph{internally} to the Diamond sub-DAGs and which \emph{disables} the edges connecting vertices of different sub-DAGs. Clearly $\visrule^*\in\visset{G_R}$.

In the following, we show that for any $\visrule{}^*$-visit $\psi\in\visitset{\visrule{}^*}{G_R}$ there exists a segment partition $\left(i_1,i_2,\ldots,i_{(b/2M)^2+1}\right)$ such that 
\begin{equation*}
\sum_{j=1}^{(b/2M)^2}\max\{0, |B^{(ent)}_{\visrule^*}\left(\psi[i_j..i_{j+1})\right)|-M\}\geq \left( \frac{b}{2M}\right)^2 M
\end{equation*}
from whence, by Theorem~\ref{thm:finiolwb}, the statement follows.
For all $j\in\{1,\ldots,(b/2M)^2$, let $i_j$ denote the index of the first step of the visit $\psi$ corresponding to which a vertex of the diagonal of $G_{j-1}$ is visited in $\psi$, and let us assume that $i_1<i_2<\ldots<i_{(b/2M)^2}$. This assumption is without loss of generality as it is possible to assign indices to the Diamond sub-DAG to be consistent with it. We set $i_0=1$

For each $G_j$, we denote the ``\emph{vertices on the diagonal}'' the vertices on the diagonal from the top-left corner to the bottom right corner. The vertices on the diagonal in the bottom left Diamond sub-DAG depicted in Figure~\ref{fig:brokend} are encircled by a blue dashed line.
For  each of the $(b/2M)^2$ Diamond sub-DAGs, we denote as the ``\emph{input vertex}'' the vertex in the bottom-left corner of the Diamond sub-DAG. 
For all $j\in\{1,\ldots,(b/2M)^2\}$, there exists a set $\Pi_j\in\infix{\psi}{1}{i_j}$ such that the vertices in $\Pi_j$ form a path directed from the input vertex of $G_j$ to the vertex of its diagonal visited in $\psi[i_j]$(excluded). This follows from the properties of $\visrule^*$: For vertex $\psi[i_j]$ to be visited at step $i_j$, at least one of its predecessors must have been previously visited during $\psi[1..i_j)$. For such vertex to have been visited, one of its predecessors must have been visited previously. The same reasoning can be applied iteratively until the input vertex of $G_j$, which, by the construction of $\visrule^*$, is enabled by the empty set. 

For each $i_j$, with $j\in\{1,\ldots,(b/2M)^2\}$,  by Lemma~\ref{lem:vdispaths} there exist $|\Pi_j|+1=2M$ vertex disjoints paths connecting vertices of $\Pi_j$, which are visited in $\psi[1..i_j)$, to vertices in the diagonal of $G_j$, which, by construction, are all visited in  $\infix{\psi}{i_j}{n}$. By construction of the visit rule $\visrule^*$ there must therefore be at least one vertex for each of such paths in $B_{\visrule^*}\left(\psi[1..i_j)\right)$ and, thus, in
\begin{equation*}
    \cup_{j=1}^{(b/2M)^2} B_{\visrule^*}\left(\psi[1,i_j)\right) \cap \psi[i_j..i_{j+1}) = \cup_{j=1}^{(b/4M)^2} B_{r^*}^{(ent)}(\psi[i_j..i_{j+1}))
\end{equation*}
As, by construction, the intervals are disjoint, a vertex in $B_{\visrule^*}\left(\psi[1,i_j)\right)$ may appear in exactly one of the entering boundaries $B_{r^*}^{(ent)}(\psi[i_k..i_{k+1}))$ for $j\leq k \leq (b/4M)^2$. 

The same reasoning holds for all $i_j$'s. As, by construction the $G_j$ are vertex disjoint, vertex disjoint paths in each of them will also be vertex disjoint among each other. Thus, 

\begin{align*}
     \sum_{j=1}^{(b/2M)^2} \max\{0, |B_{r^*}^{(ent)}(\psi[i_j..i_{j+1}))|-M\} &\geq  \sum_{j=1}^{(b/2M)^2} |B_{r^*}^{(ent)}(\psi[i_j..i_{j+1}))|-  \left(\frac{b}{2M}\right)^2M\\
     &\geq \sum_{j=2}^{(b/2M)^2+1} |\Pi_j|+1 - \left(\frac{b}{2M}\right)^2M\\
     &\geq \left(\frac{b}{2M}\right)^2 2M - \left(\frac{b}{2M}\right)^2M.
\end{align*}
\end{proof}

For the special case of $2$-diamond DAGs, it is possible to easily restate the \io{} lower bound given by Theorem~\ref{thm:diamio} in terms of its number of vertices $|V|=n = b^2$ as $IO_M\left(G \right)\geq \BOme{n/M}$. In general, for  $q$-diamond DAGs we have $|V|=n = \left(b-1\right)\left(2+\left(r-1\right)\left(b-1\right)\right)+1$, and thus $IO_M\left(G \right)\geq \BOme{nrM/\left(M+r\right)^2}$.

\subsubsection{Upper bound} 
In order to verify the tightness of the lower bound in Theorem~\ref{thm:diamio}, we present an algorithm $\mathcal{A}^*$ which allows to compute a given a $q$-diamond of side $b$ DAG  $G\left(r,b\right)$ using $\BO{\left(br\right)^2/\left(M+r\right)}$.

\begin{figure}[t]
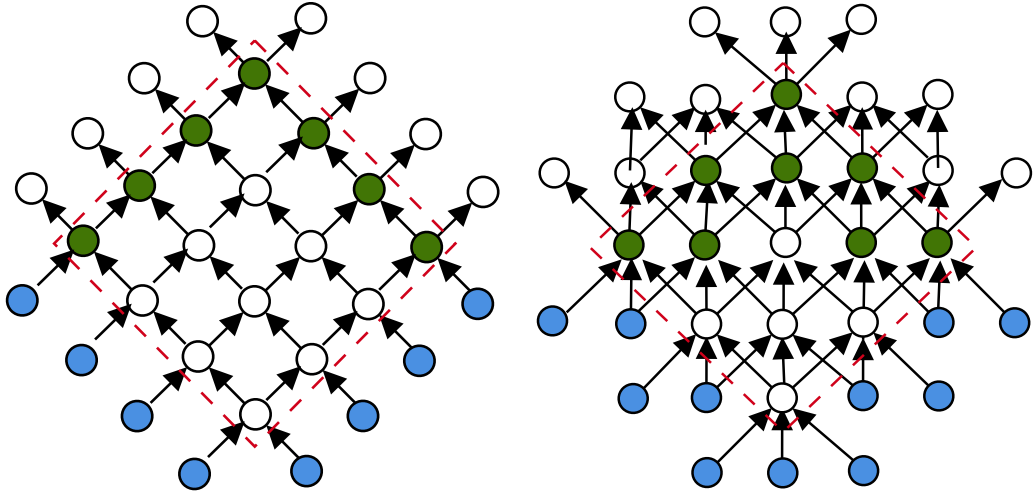

\begin{subfigure}{.5\linewidth}
\centering
\input{uppb1.tex}
\caption{Example for $2$-diamond DAG} 
\end{subfigure}%
\begin{subfigure}{.5\linewidth}
\centering
\input{uppb2.tex}
\caption{Example for $3$-diamond DAG} 
\end{subfigure}
\caption{Examples of sub-DAGs $G_j$ constructed in the computation of algorithm $\mathcal{A}^*$. The sub-DAG $G_j$ are delimited by red dashed lines, the vertices highlighted in green (resp., blue) are vertices in $G_j$ (resp., not in $G_j$) which are immediate predecessors of vertices in not in $G_j$ (resp., in $G_j$) which are written (resp., read) from cache to slow memory (resp., from slow to cache memory) during the computation of $G_j$ by algorithm $\mathcal{A}^*$.}
\label{fig:uppbound}
\end{figure}

Consider a partitioning of the given DAG into  $q$-diamond sub-DAGs of side $b^*$ according to the subdivision scheme discussed in the proof of Theorem~\ref{thm:diamio} and represented in Figure~\ref{fig:brokend}, where $b^*$ is chosen as the maximum integer value such that 
$$\left(b^*-1\right)\left(r-1\right)+1 \leq M.$$
That is,
$$b^*=\lfloor \left(M-1\right)\left(r-1\right)+1\rfloor.$$
Clearly there are at most $\left\lceil b\bigg/\left\lfloor\left(M-1\right)\left(r-1\right)^{-1}+1\right\rfloor\right\rceil^2$ such sub-DAGs.
Consider one of such sub-DAG $G_j$: there are at most $2\left(b^*-1\right)\left(r-1\right)+1<2M$ vertices of $G$ which are immediate predecessors (resp., successors) of vertices in $G_j$ while not included in $G_j$. We present an example of such sub-DAGs for 2 and 3-diamond DAGs in Figure~\ref{fig:uppbound}.
Algorithm  $\mathcal{A}^*$ proceeds to evaluate $G$ starting from the sub-DAG denoted as $G_1$, which has a single input corresponding to the single input of $G$ itself. We assume this input value to be initially stored in the slow memory. Note that $G_1$ can be evaluated entirely in memory without additional \io{} operations. This can be achieved by evaluating the vertices in a \emph{breath-first} manner according to the layers of the sub-DAG. Whenever a vertex that has at least one success outside of $G_1$ is evaluated, such value is written to the slow memory using a \texttt{write} \io{} operation. 

Algorithm $\mathcal{A}^*$ then proceeds by evaluating in a similar manner those sub-DAGs $G_j$ for which all vertices of $G$ which have immediate successors in  $G_j$ were already computed and written in the slow memory. In particular when evaluating  $G_j$ the algorithm process by evaluating its vertices one layer at a time, loading the at most $2M$ predecessors of vertices in $G_j$ which are not in  $G_j$ when necessary from the slow memory using a \texttt{read} \io{} operation. Once used to compute their successors in $G_j$, such vertices are removed from the cache. Whenever a vertex of $G_j$ which is the output vertex of $G$ or  which has a successor outside $G_j$ itself  is evaluated, the value associated with such vertex is written to the slow memory using a \texttt{write} \io{} operation. By construction, there must be at most $2M$ such vertices for each $G_j$.  

Algorithm $\mathcal{A}^*$ proceeds until all of the sub-DAGs have been evaluated. This ensures that the entire DAG $G$ is indeed evaluated. These considerations straightforwardly lead to the following result:

\begin{theorem}\label{thm:diamioupp}
The  number  of  I/O  operations  executed  by $\mathcal{A}^*$ when  evaluating a $q$-diamond DAG of side $b$, $G$ when run using a machine equipped with a cache memory of size $M$ is:
\begin{equation*}
    IO_M\left(G\right)\leq \BO{\frac{b^2r^2}{M+r}}.
\end{equation*}
\end{theorem}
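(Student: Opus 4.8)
The plan is to bound the number of \io{} operations executed by the algorithm $\mathcal{A}^*$ described just above, by a direct accounting argument that sums the cost over all the sub-DAGs $G_j$ into which $G$ is partitioned. First I would fix the parameters: the side length of the sub-DAGs is $b^* = \lfloor (M-1)(r-1)^{-1}+1 \rfloor = \Theta(M/r)$, chosen so that each $G_j$ (together with the at most $2(b^*-1)(r-1)+1 < 2M$ ``border'' vertices that are predecessors in $G$ of vertices in $G_j$) fits in a cache of size $M$; this is exactly the condition $(b^*-1)(r-1)+1 \le M$ enforced in the construction. The number of such sub-DAGs is $\lceil b/b^* \rceil^2 = \BO{(br/M)^2}$, since $G_R$ (hence $G$) is partitioned into a $\lceil b/b^* \rceil \times \lceil b/b^* \rceil$ grid of $q$-diamond sub-DAGs of side $b^*$, analogously to the partition used in the proof of Theorem~\ref{thm:diamio}.

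Next I would argue correctness of $\mathcal{A}^*$ as a valid \io{} computation of $G$: process the sub-DAGs in an order consistent with the partial order induced by $G$ (a sub-DAG $G_j$ is evaluated only after every vertex of $G$ with an immediate successor in $G_j$ has been computed and written to slow memory), which is possible since the grid partition respects topological order. For each $G_j$, evaluate its vertices layer by layer in breadth-first fashion; this keeps only one ``frontier'' of the sub-DAG plus the at most $2M$ border vertices live at any time, all of which fit in cache by the choice of $b^*$ (here I would invoke the structural bound, stated in the construction, that a $q$-diamond of side $b^*$ has at most $(b^*-1)(r-1)+1 \le M$ vertices on any antichain/frontier, and at most $2M$ external neighbours total). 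Hence no \io{} is forced \emph{inside} the evaluation of $G_j$ beyond reading its $\le 2M$ external predecessors and writing its $\le 2M$ vertices that feed outside $G_j$ (or that are the output of $G$).

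The accounting step is then straightforward: each $G_j$ contributes at most $2M$ \texttt{read} operations and at most $2M$ \texttt{write} operations, for a total of $\BO{M}$ \io{} operations per sub-DAG. Multiplying by the number $\BO{(br/M)^2}$ of sub-DAGs gives
\begin{equation*}
IO_M(G) \le \BO{\left(\frac{br}{M}\right)^2 \cdot M} = \BO{\frac{b^2 r^2}{M}}.
\end{equation*}
To get the sharper denominator $M+r$ claimed in the statement, I would keep $b^* = \Theta((M+r)/r)$ more carefully: since $b^* = \lfloor (M-1)/(r-1) \rfloor + 1 = \Theta((M+r)/r)$, the number of sub-DAGs is $\BO{(br/(M+r))^2}$ while the per-sub-DAG cost remains $\BO{M} = \BO{M+r}$, yielding $IO_M(G) \le \BO{(br/(M+r))^2 (M+r)} = \BO{b^2 r^2/(M+r)}$. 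For the case $L>1$, I would note (as the theorem statement for the lower bound already does) that each block of $L$ consecutive memory words transferred counts as one \io{} operation, which can only decrease the count, so the bound carries over unchanged; and the ceilings/floors only affect constant factors absorbed into the $\BO{\cdot}$.

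The main obstacle I expect is the bookkeeping in the second step: verifying precisely that the breadth-first layer-by-layer evaluation of a $q$-diamond sub-DAG of side $b^*$ never needs more than $M$ cache words live simultaneously, i.e.\ that the live set (current frontier of already-computed-but-still-needed vertices, plus the external predecessors being loaded) stays within $M$. This requires a clean statement of the frontier size of a $q$-diamond — essentially that a diagonal antichain of a side-$b^*$ $q$-diamond has $(b^*-1)(r-1)+1$ vertices, matching the height/width relation for $q$-pyramids in Definition~\ref{def:pyramid} — and a careful argument that border predecessors can be discarded immediately after use so they do not accumulate. Everything else (counting sub-DAGs, summing $2M$ per sub-DAG, the $L>1$ reduction, and the ceiling manipulations) is routine.
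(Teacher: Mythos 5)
Your proposal is correct and follows essentially the same route as the paper's proof: partition $G$ into $q$-diamond sub-DAGs of side $b^*=\lfloor (M-1)(r-1)^{-1}+1\rfloor$, charge each sub-DAG at most $\BO{M}$ read and write operations for its at most $2M$ external predecessors and at most $2M$ exported values, and multiply by the $\left\lceil b\big/ b^*\right\rceil^2 = \BO{\left(br/(M+r)\right)^2}$ sub-DAGs to obtain $\BO{b^2r^2/(M+r)}$. The extra care you flag about the in-cache frontier size and the $L>1$ case matches the algorithm description the paper gives just before the theorem, so there is no gap.
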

\begin{proof}
Following from the previous description of the execution of  algorithm $\mathcal{A}^*$ we have that at most $4M$ \io{} operations are executed while evaluating each of the $G_j$ sub-DAGs. As, by construction, the $\left\lceil b\bigg/\left\lfloor\left(M-1\right)\left(r-1\right)^{-1}+1\right\rfloor\right\rceil^2$ sub-DAGs are vertex disjoint, we have that the algorithm executes at most
$$\left\lceil \frac{b}{\left\lfloor\frac{M-1}{r-1}+1\right\rfloor}\right\rceil^24M\leq \BO{\frac{b^2r^2}{M+r}}$$
\io{} operations. The theorem follows.
\end{proof}

Thus, we can conclude that the lower bound in Theorem~\ref{thm:diamio} is asymptotically tight and that algorithm $\mathcal{A}^*$ is asymptotically optimal.

\subsection{Extensions to related I/O models}\label{sec:otherModels}
\subparagraph{External
Memory Model:}The result in Theorem~\ref{thm:finiolwb} can be straightforwardly extended to the External
Memory Model of  Aggarwal and Vitter~\cite{Aggarwal:1988:ICS:48529.48535},
where a single \io{} operation can move $L\geq 1$ memory words between
cache and consecutive slow-memory locations. 

\subparagraph{Models with asymmetric cost of read and write I/O operations:} Since our method distinguishes the contribution of write and reads \io{} operations, it would be interesting to use it to investigate \io{} lower bounds where reads and writes have different cost~\cite{Blelloch1,Blelloch2}, including the case in which only the cost of write \io{} operations is considered~\cite{BallardBDDDPSTY14,CarsonDGKKSS16}. To this end, the lower bound in~\eqref{eq:totot} can be modified to include multiplicative scaling for each component.

\subparagraph{Dropping the slow memory requirement for output values:}By using a modified concept of  $\visrule$-entering boundary of a segment, defined as $\hat{B}^{(ent)}_\visrule\left(\visit(i_{j-1}..i_j]\right)=B_{\visrule}\left(\visit[1..i_{j-1}]\right)\cap \visit(i_{j-1}..i_j]$, our method yields I/O lower bounds in a modified version of the \io{} models where output values are not required to be written into the slow memory.

\subparagraph{Free-input model:} The visit partition technique can also be adapted to the ``\emph{free input}'' model, more akin to the pebbling model, in which input values can be generated into cache at any time (e.g., they are read from a dedicated ROM memory), rather than being initially stored in the slow memory. While our lower bound to the number of write \io{} operations~\eqref{eq:totwrite} remains unchanged,  the lower bound to the number of read \io{} operations~\eqref{eq:totread} must be revised removing the contribution of the post-dominator bound and the read \io{} term of the boundary-bound. For any $\visrule\in\visset{G_R}$ we have:
\begin{equation*}
IO^{fi}_\mathcal{R}\left(G,M\right)  \geq \min_{\psi\in\visitset{\visrule}{G_R}}
    {\mathcal{W}}_\visrule(\psi,M)-|O|,
\end{equation*}
and, thus, 
\begin{equation*}
IO^{fi}\left(G,M\right) \geq \min_{\visit{}\in\visitset{\visrule}{G_R}}
    2\mathcal{W}_\visrule(\psi,M) -|O|.
\end{equation*}
\subparagraph{Execution with no recomputation:} Finally, our result in Theorem~\ref{thm:finiolwb} can be adapted and simplified to yield \io{} lower bounds assuming that the value associated to any vertex $V\setminus I$ is computed exactly once (the no-recomputation assumption). This simplifying assumption is often of interest as it focuses the analysis on schedules with a minimum number of computational steps. Moreover, it may provide a stepping stone towards the analysis of the more general and challenging case where recomputation is allowed. Without recomputation, computational schedules correspond to the topological orderings of $G$. Thus, for any $\visrule\in\visset{G_R}$, the corresponding $\visrule$-visits of $G_R$ constructed according to the procedure discussed in Section~\ref{sec:iolwb}, are the  topological orderings of $G_R$. Therefore, we can restrict our attention  to such orderings obtaining the following corollary:
\begin{corollary}
Given a DAG $\gra$, with input set $I$ and output set $O$, consider its computations in the I/O model using a cache of size $M$ such that no value is ever computed more than once. For any $\visrule\in\visset{G_R}$ we have:
\begin{eqnarray*}
    IO^{nr}_\mathcal{W}\left(G,M\right) & \geq &
    \min_{\visit{}\in\visitset{\toporule}{G_R}}
    {\mathcal{W}}_\visrule(\psi,M),\\
    IO^{nr}_\mathcal{R}\left(G,M\right) & \geq &
    \min_{\visit{}\in\visitset{\toporule}{G_R}}
    \max\{{\mathcal{R}}_\visrule(\psi,M),~{\mathcal{W}}_\visrule(\psi,M)+|I|-|O|\},\\
    IO^{nr}\left(G,M\right)& \geq &\min_{\visit{}\in\visitset{\toporule}{G_R}}
    \{{\mathcal{W}}_\visrule(\psi,M) + 
    \max\{{\mathcal{R}}_\visrule(\psi,M),{\mathcal{W}}_\visrule(\psi,M)+|I|-|O|\} \}.
\end{eqnarray*}
\end{corollary}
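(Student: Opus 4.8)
The plan is to specialize Theorem~\ref{thm:finiolwb} to computations that evaluate each vertex of $V \setminus I$ exactly once, and then observe that for such computations the construction of Section~\ref{sec:iolwb} yields a topological ordering of $G_R$ rather than an arbitrary $\visrule$-visit, so the outer minimization can be restricted to $\visitset{\toporule}{G_R}$. First I would recall that a no-recomputation computation $\phi$ of $G$ visits the vertices of $V \setminus I$ in an order that respects the partial order of $G$: each vertex is evaluated after all its predecessors and before all its successors. The input vertices, which under the standard \io{} model are read from slow memory, may be read at various times, but each is read at least once; since in this setting inputs have no predecessors in $G$, a read of an input can be placed consistently with a topological order as well.

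Next I would trace through the visit-construction procedure of Section~\ref{sec:iolwb} applied to such a $\phi$. Scanning $\phi$ backward, a vertex $v$ is appended to $\psi$ the first time (from the end) that $v$ is evaluated (or, for $v \in I$, read) and an enabler of $v$ in $G_R$ has already been visited. Because each $v \in V \setminus I$ is evaluated exactly once, the step $\tau_v$ at which $v$ is appended is simply the unique evaluation step of $v$; hence the order in which vertices enter $\psi$ is exactly the reverse of the order in which they are evaluated in $\phi$. Since the evaluation order of $\phi$ is a topological ordering of $G$, its reverse is a topological ordering of $G_R$, i.e. $\psi \in \visitset{\toporule}{G_R}$. (For $v \in I$, which in $G_R$ is an output with no predecessors, placement is unconstrained and can be fit into the topological order of $G_R$ without difficulty.) This is the one step that requires a little care: I would need to check that the backward-scan tie-breaking and the handling of repeated reads of input vertices do not disturb the claim that $\psi$ is a legal $\toporule$-visit; but since inputs of $G_R$ are enabled by the empty set and non-inputs of $G_R$ acquire their (unique) enabler $\predc{v}$ only after all successors-in-$G$ have been evaluated, the resulting $\psi$ is automatically a reverse topological ordering of $G$.

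Finally I would invoke Lemma~\ref{lem:iostep}: for this $\psi$ and any segment partition ${\bf i}$, the bounds~\eqref{eq:write},~\eqref{eq:read0},~\eqref{eq:read} hold verbatim, since their proof made no use of the absence of recomputation. Taking the maximum over ${\bf i}$ gives $IO_\mathcal{W}(\phi,M) \geq {\mathcal{W}}_\visrule(\psi,M)$, and analogously for reads and the total, exactly as in the proof of Theorem~\ref{thm:finiolwb}. Because every no-recomputation computation $\phi$ produces some $\psi \in \visitset{\toporule}{G_R}$, minimizing over this restricted set of visits yields the three displayed inequalities for $IO^{nr}_\mathcal{W}$, $IO^{nr}_\mathcal{R}$, and $IO^{nr}$. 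The main obstacle, such as it is, is purely bookkeeping: verifying that the visit-from-computation map sends no-recomputation computations into $\visitset{\toporule}{G_R}$ and not merely into $\visitset{\visrule}{G_R}$ for a general $\visrule$; once that is established the corollary follows immediately from the already-proved Lemma~\ref{lem:iostep} and Theorem~\ref{thm:finiolwb}.
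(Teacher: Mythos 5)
Your route is the same as the paper's: specialize the machinery of Lemma~\ref{lem:iostep} and Theorem~\ref{thm:finiolwb} to no-recomputation computations and observe that the visit constructed in Section~\ref{sec:iolwb} from such a computation lies in $\visitset{\toporule}{G_R}$, so that the outer minimization can be restricted to topological visits. Your argument for the vertices of $V\setminus I$ --- each is appended exactly at its unique evaluation step, so on $V\setminus I$ the visit is the reverse of the evaluation order and therefore respects $\toporule$ in $G_R$ --- is correct, and is in fact more detailed than the paper's own one-sentence justification.

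However, the step you yourself flag as ``the one that requires a little care'' is settled with two false claims. An input $v$ of $G$ is a \emph{sink} of $G_R$, not a source: it has no successors in $G_R$, but its predecessors in $G_R$ are exactly its successors in $G$, so its position in $\psi$ is constrained by the topological rule, and you cannot ``fit it into the topological order without difficulty'' --- $\psi$ is completely determined by the backward scan, you do not get to re-place vertices. Likewise, the statement that non-inputs of $G_R$ acquire their ``(unique) enabler $\predc{v}$'' tacitly replaces the arbitrary rule $\visrule$ used in the construction by $\toporule$; under a general rule a vertex is appended as soon as \emph{some} enabler lies in the current prefix. These points matter precisely because the no-recomputation hypothesis constrains only $V\setminus I$: an input $v$ of $G$ may be read several times, and with $\visrule=\sinrule$ (the choice the paper singles out as giving the best bounds) the backward scan appends $v$ at its last read as soon as one later-computed successor of $v$ has been visited, possibly before another successor of $v$ --- a predecessor of $v$ in $G_R$ --- enters $\psi$; the resulting visit is then not in $\visitset{\toporule}{G_R}$, and the restricted minimization no longer follows for that computation. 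To close the argument you must treat the vertices of $I$ explicitly, e.g.\ by restricting attention to schedules in which each input is loaded once before all its uses (which appears to be the reading behind the paper's assertion that no-recomputation schedules ``correspond to the topological orderings of $G$''), or by otherwise showing that the bounds are unaffected by the possible misplacement of vertices of $I$ in $\psi$; as written, your justification of this step does not go through.
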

The bounds obtained for computations without recomputation are generally higher than the general ones in Theorem~\ref{thm:finiolwb}, as while for each rule $\visrule$ we still analyze the entering boundary and the minimum post-dominator size of visit partitions, the set of visits to be considered is restricted to the subset $\visitset{\toporule}{G_R}\subseteq\visitset{\visrule}{G_R}$, thus possibly eliminating some visit $\psi$ with low $\mathcal{W}_\visrule(\psi',M)$ and/or $\mathcal{R}_\visrule(\psi',M)$. It is easy to see that the best lower bounds are obtained for the choice $\visrule=\sinrule{}$.

\section{Conclusions}\label{sec:conclusion}
We have proposed the visit framework to investigate both space and I/O
complexity lower bounds. The universal upper bounds we have obtained
for both the singleton and the topological types of visits show that
these types cannot yield tight pebbling lower bounds for all DAGs,
although they do for some DAGs.  The framework gives ample flexibility
to tailor the type of visit to the given DAG, but we do not yet have
good insights either on how to exploit this flexibility or on how to
show that this flexibility is not ultimately helpful.

The spectrum of visit types exhibits the following tradeoff.  As we
go from the topological rule to, say, the singleton rule, by relaxing
the enablement constraints, the set of vertex sequences that qualifies
as a visit increases (which goes in the direction of reducing the
boundary complexity of the DAG, since the minimization takes place
over a larger domain), but the boundary complexity of a specific visit
also increases (which goes in the direction of increasing the boundary
complexity of the DAG, since the function to be minimized increases).
The tension between these opposite forces has proven difficult to
analyze quantitatively. The arguments used to establish universal
upper bounds in the singleton and in the topological cases are
significantly different, and it is not clear how to interpolate them
for an intermediate visit type.  Further research is clearly needed
to make progress on what appears to be a rich combinatorial problem.

Another contribution of the visit framework is a step toward a unified
treatment of pebbling and I/O complexity.  Within the framework, we
have already seen how to generalize the by now classical Hong-Kung
partition technique, based on dominator and minimum sets, thus
achieving much better lower bounds for some DAGs.

We conjecture that, for other significant DAGs whose \io{} complexity
cannot be well captured by the $S$-partition technique, visit
partitions will help obtain good lower bounds. Good candidates are
DAGs with a constant degree and a space complexity $S(N)$ superlinear,
say polynomial, in the number of inputs $N=|I|$. For such DAGs, the
size of the minimum dominator set cannot exceed $N$ and, as shown in
Theorem~\ref{lem:comparison1}, the size of the topological boundary,
\myie{} of the minimum set, is $O(\log N)$ at any point in the
computation.  On the other hand, the singleton boundary could be
significantly higher.

Although we have not explicitly discussed the issue in this work, we
do not have general \io{} upper bounds matching our visit partition lower
bounds. Therefore, further work is needed to achieve a full
characterization of the I/O complexity of a DAG.

\bibliography{bibliography.bib}

\begin{thebibliography}{10}

\bibitem{Aggarwal:1988:ICS:48529.48535}
Alok Aggarwal and S.~Vitter, Jeffrey.
\newblock The {I}nput/{O}utput {C}omplexity of {S}orting and {R}elated
  {P}roblems.
\newblock {\em Communications of the ACM}, 31(9):1116--1127, September 1988.
\newblock URL: \url{http://doi.acm.org/10.1145/48529.48535}, \href
  {https://doi.org/10.1145/48529.48535} {\path{doi:10.1145/48529.48535}}.

\bibitem{BallardBDDDPSTY14}
Grey Ballard, Dulceneia Becker, James Demmel, Jack Dongarra, Alex Druinsky,
  Inon Peled, Oded Schwartz, Sivan Toledo, and Ichitaro Yamazaki.
\newblock Communication-avoiding symmetric-indefinite factorization.
\newblock {\em SIAM Journal on Matrix Analysis and Applications},
  35(4):1364--1406, 2014.

\bibitem{ballard2012brief}
Grey Ballard, James Demmel, Olga Holtz, Benjamin Lipshitz, and Oded Schwartz.
\newblock Brief announcement: {S}trong scaling of matrix multiplication
  algorithms and memory-independent communication lower bounds.
\newblock In {\em Proceedings of the twenty-fourth annual ACM symposium on
  Parallelism in algorithms and architectures}, pages 77--79. ACM, 2012.

\bibitem{ballard2010communication}
Grey Ballard, James Demmel, Olga Holtz, and Oded Schwartz.
\newblock Communication-optimal parallel and sequential {C}holesky
  decomposition.
\newblock {\em SIAM Journal on Scientific Computing}, 32(6):3495--3523, 2010.

\bibitem{ballard2011minimizing}
Grey Ballard, James Demmel, Olga Holtz, and Oded Schwartz.
\newblock Minimizing communication in numerical linear algebra.
\newblock {\em SIAM Journal on Matrix Analysis and Applications},
  32(3):866--901, 2011.

\bibitem{ballard2012graph}
Grey Ballard, James Demmel, Olga Holtz, and Oded Schwartz.
\newblock Graph expansion and communication costs of fast matrix
  multiplication.
\newblock {\em Journal of the ACM (JACM)}, 59(6):1--23, 2013.

\bibitem{bilardi1999processor}
G.~Bilardi and F.~Preparata.
\newblock Processor-{T}ime trade offs under bounded speed message propagation.
  {P}art 2: {L}ower {B}ounds.
\newblock {\em Theory of Computing Systems}, 32(5):531--559, 1999.

\bibitem{bilardi2017complexity}
Gianfranco Bilardi and Lorenzo De~Stefani.
\newblock The {I}/{O} complexity of {S}trassen’s matrix multiplication with
  recomputation.
\newblock In {\em Workshop on Algorithms and Data Structures}, pages 181--192.
  Springer, 2017.

\bibitem{bds19}
Gianfranco Bilardi and Lorenzo De~Stefani.
\newblock The {I}/{O} {C}omplexity of {T}oom-{C}ook {I}nteger {M}ultiplication.
\newblock In {\em Proceedings of the Thirtieth Annual ACM-SIAM Symposium on
  Discrete Algorithms}, SODA '19, page 2034–2052, USA, 2019. Society for
  Industrial and Applied Mathematics.

\bibitem{bilardi2000space}
Gianfranco Bilardi, Andrea Pietracaprina, and Paolo D'Alberto.
\newblock On the space and access complexity of computation {DAG}s.
\newblock In {\em Graph-Theoretic Concepts in Computer Science}, pages 47--58.
  Springer, 2000.

\bibitem{Blelloch1}
Guy~E. Blelloch, Jeremy~T. Fineman, Phillip~B. Gibbons, Yan Gu, and Julian
  Shun.
\newblock Efficient {A}lgorithms with {A}symmetric {R}ead and {W}rite {C}osts.
\newblock In Piotr Sankowski and Christos~D. Zaroliagis, editors, {\em 24th
  Annual European Symposium on Algorithms, {ESA} 2016, August 22-24, 2016,
  Aarhus, Denmark}, volume~57 of {\em LIPIcs}, pages 14:1--14:18. Schloss
  Dagstuhl - Leibniz-Zentrum f{\"{u}}r Informatik, 2016.
\newblock \href {https://doi.org/10.4230/LIPIcs.ESA.2016.14}
  {\path{doi:10.4230/LIPIcs.ESA.2016.14}}.

\bibitem{sada16}
Timothy Carpenter, Fabrice Rastello, P.~Sadayappan, and Anastasios
  Sidiropoulos.
\newblock Brief {A}nnouncement: {A}pproximating the {I}/{O} {C}omplexity of
  {O}ne-{S}hot {R}ed-{B}lue {P}ebbling.
\newblock In {\em Proceedings of the 28th ACM Symposium on Parallelism in
  Algorithms and Architectures}, SPAA '16, page 161–163, New York, NY, USA,
  2016. Association for Computing Machinery.
\newblock \href {https://doi.org/10.1145/2935764.2935807}
  {\path{doi:10.1145/2935764.2935807}}.

\bibitem{CarsonDGKKSS16}
Erin~C. Carson, James Demmel, Laura Grigori, Nicholas Knight, Penporn
  Koanantakool, Oded Schwartz, and Harsha~Vardhan Simhadri.
\newblock Write-{A}voiding {A}lgorithms.
\newblock In {\em 2016 {IEEE} International Parallel and Distributed Processing
  Symposium, {IPDPS} 2016, Chicago, IL, USA, May 23-27, 2016}, pages 648--658.
  {IEEE} Computer Society, 2016.
\newblock \href {https://doi.org/10.1109/IPDPS.2016.114}
  {\path{doi:10.1109/IPDPS.2016.114}}.

\bibitem{thesis}
Lorenzo De~Stefani.
\newblock {\em On space constrained computations}.
\newblock PhD thesis, University of Padova, 2016.

\bibitem{ldshimspaa}
Lorenzo De~Stefani.
\newblock Brief {A}nnouncement: {O}n the {I}/{O} {C}omplexity of {S}equential
  and {P}arallel {H}ybrid {I}nteger {M}ultiplication {A}lgorithms.
\newblock In {\em Proceedings of the 34th ACM Symposium on Parallelism in
  Algorithms and Architectures}, SPAA '22, page 449–452, New York, NY, USA,
  2022. Association for Computing Machinery.
\newblock \href {https://doi.org/10.1145/3490148.3538551}
  {\path{doi:10.1145/3490148.3538551}}.

\bibitem{elango2015characterizing}
Venmugil Elango, Fabrice Rastello, Louis-No{\"e}l Pouchet, Jagannathan
  Ramanujam, and Ponnuswamy Sadayappan.
\newblock On characterizing the data access complexity of programs.
\newblock In {\em Proceedings of the 42nd Annual ACM SIGPLAN-SIGACT Symposium
  on Principles of Programming Languages}, pages 567--580, 2015.

\bibitem{friedman1971algorithmic}
Harvey Friedman.
\newblock Algorithmic procedures, generalized turing algorithms, and elementary
  recursion theory.
\newblock In {\em Studies in Logic and the Foundations of Mathematics},
  volume~61, pages 361--389. Elsevier, 1971.

\bibitem{grigor1976application}
D.~Yu Grigor'ev.
\newblock Application of separability and independence notions for proving
  lower bounds of circuit complexity.
\newblock {\em Zapiski Nauchnykh Seminarov POMI}, 60:38--48, 1976.

\bibitem{Blelloch2}
Yan Gu, Yihan Sun, and Guy~E. Blelloch.
\newblock Algorithmic building blocks for asymmetric memories.
\newblock In Yossi Azar, Hannah Bast, and Grzegorz Herman, editors, {\em 26th
  Annual European Symposium on Algorithms, {ESA} 2018, August 20-22, 2018,
  Helsinki, Finland}, volume 112 of {\em LIPIcs}, pages 44:1--44:15. Schloss
  Dagstuhl - Leibniz-Zentrum f{\"{u}}r Informatik, 2018.
\newblock \href {https://doi.org/10.4230/LIPIcs.ESA.2018.44}
  {\path{doi:10.4230/LIPIcs.ESA.2018.44}}.

\bibitem{hopcroft1977time}
John Hopcroft, Wolfgang Paul, and Leslie Valiant.
\newblock On time versus space.
\newblock {\em Journal of the ACM (JACM)}, 24(2):332--337, 1977.

\bibitem{irony2004communication}
Dror Irony, Sivan Toledo, and Alexander Tiskin.
\newblock Communication lower bounds for distributed-memory matrix
  multiplication.
\newblock {\em Journal of Parallel and Distributed Computing},
  64(9):1017--1026, 2004.

\bibitem{jia1981complexity}
Hong Jia-Wei and Hsiang-Tsung Kung.
\newblock I/o complexity: The red-blue pebble game.
\newblock In {\em Proceedings of the thirteenth annual ACM symposium on Theory
  of computing}, pages 326--333, 1981.

\bibitem{loomis1949}
Lynn~H Loomis and Hassler Whitney.
\newblock An inequality related to the isoperimetric inequality.
\newblock {\em Bulletin of the American Mathematical Society}, 55(10):961--962,
  1949.

\bibitem{sada2021}
Auguste Olivry, Guillaume Iooss, Nicolas Tollenaere, Atanas Rountev,
  P.~Sadayappan, and Fabrice Rastello.
\newblock Ioopt: automatic derivation of {I/O} complexity bounds for affine
  programs.
\newblock In Stephen~N. Freund and Eran Yahav, editors, {\em {PLDI} '21: 42nd
  {ACM} {SIGPLAN} International Conference on Programming Language Design and
  Implementation, Virtual Event, Canada, June 20-25, 2021}, pages 1187--1202.
  {ACM}, 2021.
\newblock \href {https://doi.org/10.1145/3453483.3454103}
  {\path{doi:10.1145/3453483.3454103}}.

\bibitem{pagh2014input}
Rasmus Pagh and Morten St{\"o}ckel.
\newblock The input/output complexity of sparse matrix multiplication.
\newblock In {\em European Symposium on Algorithms}, pages 750--761. Springer,
  2014.

\bibitem{paterson1970comparative}
Michael~S Paterson and Carl~E Hewitt.
\newblock Comparative schematology.
\newblock In {\em Record of the Project MAC conference on concurrent systems
  and parallel computation}, pages 119--127. ACM, 1970.

\bibitem{paul1976space}
Wolfgang~J Paul, Robert~Endre Tarjan, and James~R Celoni.
\newblock Space bounds for a game on graphs.
\newblock {\em Mathematical Systems Theory}, 10(1):239--251, 1976.

\bibitem{ranjan2012upper}
Desh Ranjan, John Savage, and Mohammad Zubair.
\newblock Upper and lower i/o bounds for pebbling r-pyramids.
\newblock {\em Journal of Discrete Algorithms}, 14:2--12, 2012.

\bibitem{savage1995extending}
J.~E. Savage.
\newblock Extending the {H}ong-{K}ung model to memory hierarchies.
\newblock In {\em Computing and Combinatorics}, pages 270--281. Springer, 1995.

\bibitem{savage97models}
John~E. Savage.
\newblock {\em Models of {C}omputation: {E}xploring the {P}ower of
  {C}omputing}.
\newblock Addison-Wesley Longman Publishing Co., Inc., Boston, MA, USA, 1st
  edition, 1997.

\bibitem{scott2015matrix}
Jacob Scott, Olga Holtz, and Oded Schwartz.
\newblock Matrix multiplication i/o-complexity by path routing.
\newblock In {\em Proceedings of the 27th ACM symposium on Parallelism in
  Algorithms and Architectures}, pages 35--45, 2015.

\bibitem{scquizzato2013communication}
Michele Scquizzato and Francesco Silvestri.
\newblock Communication lower bounds for distributed-memory computations.
\newblock 25:627--638, 2014.
\newblock \href {https://doi.org/10.4230/LIPIcs.STACS.2014.627}
  {\path{doi:10.4230/LIPIcs.STACS.2014.627}}.

\end{thebibliography}
\end{document}